\documentclass[sigplan,10pt]{acmart}
\renewcommand\footnotetextcopyrightpermission[1]{}
\makeatletter
\renewcommand\@ACM@checkaffil{}
\makeatother
\usepackage[inline]{enumitem}
\usepackage{algpseudocode}
\usepackage{adjustbox}
\usepackage[most]{tcolorbox}
\usepackage{colortbl}
\usepackage{placeins}
\usepackage{xspace}
\usepackage{booktabs}

\newtcolorbox[auto counter]{abox}[6][]{
	enhanced,
	colframe=black,
	colback=white,
	boxrule={#4},
	arc={#3},
	auto outer arc,
	pad at break*=0pt,
	vfill before first,
	before={\par\medskip\noindent},
	after={},
	top=12pt, left=4pt, enlarge top by=6pt,
	title={\rule[-.3\baselineskip]{0pt}{\baselineskip}\normalsize\sffamily\bfseries #2}, 
	varwidth boxed title*=-10pt, 
	attach boxed title to top left={yshift=-10pt,xshift=10pt}, 
	coltitle=black,
	boxed title style={colback=white,boxrule={#6},arc={#5},auto outer arc},
	#1
}

\newenvironment{algobox}[2][]
{\begin{abox}[#1]{Algorithm~\thetcbcounter: \normalfont #2}{0.5pt}{0.5pt}{1pt}{0.75pt}}
{\end{abox}}

\tcbset{
	idealfunc/.style={
		colback=gray!2,
		colframe=black,
		fonttitle=\bfseries,
		boxrule=0.2mm,
		rounded corners,
		title=Ideal Functionality $\mathcal{F}$,
	}
}

\newif\ifpearl
\pearlfalse 

\ifpearl
  \newcommand{\sys}{Pearl}
  \newcommand{\sysfootnote}{\footnote{Like its namesake, \textbf{P}rivate
  \textbf{E}nclave for \textbf{A}gentic \textbf{R}etrieval and \textbf{L}ongterm
  memory accumulates layers over time, and remains opaque to the eye.}}
  \newcommand{\SysEnc}{\ensuremath{\mathsf{Enclave}_{\mathsf{Pearl}}}}
  \newcommand{\sysAn}{A \sys}
  
  \newcommand{\prooflocation}{supplementary material (\S A)}
  \newcommand{\calibrationtablecontext}{Table~1 in the supplementary material (\S B)}
  \newcommand{\calibrationtableref}{Supp.\ \S B, Table~1}
  \newcommand{\hawkesdetailref}{Supp.\ \S B.1}
  \newcommand{\recencydetailref}{Supp.\ \S B.2}
  \newcommand{\systemarchitecturefigure}{figures/system_architecture.pdf}
  \newcommand{\evalaccuracyfigure}{figures/eval_accuracy_pearl}
  \newcommand{\evalbandwidthfigure}{figures/eval_bandwidth_pearl}
  \newcommand{\evallatencyfigure}{figures/eval_latency_pearl}
  \newcommand{\evaldreamingfigure}{figures/eval_dreaming_pearl}
  \newcommand{\leadpara}[1]{\textbf{#1}}
\else
  \newcommand{\sys}{Opal}
  \newcommand{\sysfootnote}{\footnote{Like the gemstone, \textbf{O}blivious
  \textbf{P}ersonal \textbf{A}I \textbf{L}ongterm memory accumulates pieces over
  time, and remains opaque to the eye.}}
  \newcommand{\SysEnc}{\ensuremath{\mathsf{Enclave}_{\mathsf{Opal}}}}
  \newcommand{\sysAn}{An \sys}
  
  \newcommand{\prooflocation}{\S\ref{sec:security-proof}}
  \newcommand{\calibrationtablecontext}{Table~\ref{tab:calibration}}
  \newcommand{\calibrationtableref}{Table~\ref{tab:calibration}}
  \newcommand{\hawkesdetailref}{\S\ref{subsec:supp-hawkes}}
  \newcommand{\recencydetailref}{\S\ref{subsec:recency-sup}}
  \newcommand{\systemarchitecturefigure}{figures/system_architecture_opal.pdf}
  \newcommand{\evalaccuracyfigure}{figures/eval_accuracy_opal}
  \newcommand{\evalbandwidthfigure}{figures/eval_bandwidth_opal}
  \newcommand{\evallatencyfigure}{figures/eval_latency_opal}
  \newcommand{\evaldreamingfigure}{figures/eval_dreaming_opal}
  \newcommand{\leadpara}[1]{\paragraph{#1}}
\fi

\newcommand{\Query}{\ensuremath{\mathsf{Query}}\xspace}
\newcommand{\Ingest}{\ensuremath{\mathsf{Ingest}}\xspace}
\newcommand{\Dream}{\ensuremath{\mathsf{Dream}}}
\newcommand{\Embed}{\ensuremath{\mathsf{Embed}}}
\newcommand{\Synthesize}{\ensuremath{\mathsf{Synthesize}}}
\newcommand{\Traverse}{\ensuremath{\mathsf{Traverse}}}
\newcommand{\Score}{\ensuremath{\mathsf{Score}}}
\newcommand{\Rerank}{\ensuremath{\mathsf{Rerank}}}
\newcommand{\BatchSearch}{\ensuremath{\mathsf{BatchSearch}}}
\newcommand{\BatchInsert}{\ensuremath{\mathsf{BatchInsert}}}
\newcommand{\Update}{\ensuremath{\mathsf{Update}}}
\newcommand{\Insert}{\ensuremath{\mathsf{Insert}}}
\newcommand{\Summarize}{\ensuremath{\mathsf{Summarize}}}

\newcommand{\EmbEnc}{\ensuremath{\mathsf{Enclave}_{\mathsf{Emb}}}}
\newcommand{\OpalEnc}{\SysEnc}
\newcommand{\LLMEnc}{\ensuremath{\mathsf{Enclave}_{\mathsf{LLM}}}}

\newcommand{\KG}{\ensuremath{\mathsf{KG}}}
\newcommand{\ANN}{\ensuremath{\mathsf{ANN}}}
\newcommand{\ORAM}{\ensuremath{\mathsf{ORAM}}}

\begin{document}

\title{\sys: Private Memory for Personal AI}

\ifpearl
\author{Paper \#1199}
\else
\author{Darya Kaviani}
\affiliation{\institution{UC Berkeley}}
\author{Alp Eren Ozdarendeli}
\affiliation{\institution{UC Berkeley}}
\author{Jinhao Zhu}
\affiliation{\institution{UC Berkeley}}
\author{Yu Ding}
\affiliation{\institution{Google DeepMind}}
\author{Raluca Ada Popa}
\affiliation{\institution{UC Berkeley, Google DeepMind}}
\fi
\begin{abstract}
    Personal AI systems increasingly retain long-term memory of user activity,
including documents, emails, messages, meetings, and ambient recordings. Trusted
hardware can keep this data private, but struggles to scale with a growing
datastore. This pushes the data to external storage, which exposes retrieval
access patterns that leak private information to the application provider.
Oblivious RAM (ORAM) is a cryptographic primitive that can hide these patterns,
but it requires a fixed access budget, precluding the query-dependent traversals
that agentic memory systems rely on for accuracy.

We present \sys{}, a private memory system for personal AI. Our key insight is
to decouple all data-dependent reasoning from the bulk of personal data,
confining it to the trusted enclave. Untrusted disk then sees only fixed,
oblivious memory accesses. This enclave-resident component uses a lightweight
knowledge graph to capture personal context that semantic search alone misses
and handles continuous ingestion by piggybacking reindexing and capacity
management on every ORAM access. Evaluated on a comprehensive synthetic personal-data
pipeline driven by stochastic communication models, \sys{} improves retrieval
accuracy by 13 percentage points over semantic search and achieves 29$\times$
higher throughput with 15$\times$ lower infrastructure cost than a secure
baseline. \sys{} is under consideration for deployment to millions of users at
a major AI provider.

\end{abstract}

\maketitle

\pagestyle{plain}

\begin{figure}[t]
    \centering
    \includegraphics[width=0.8\columnwidth]{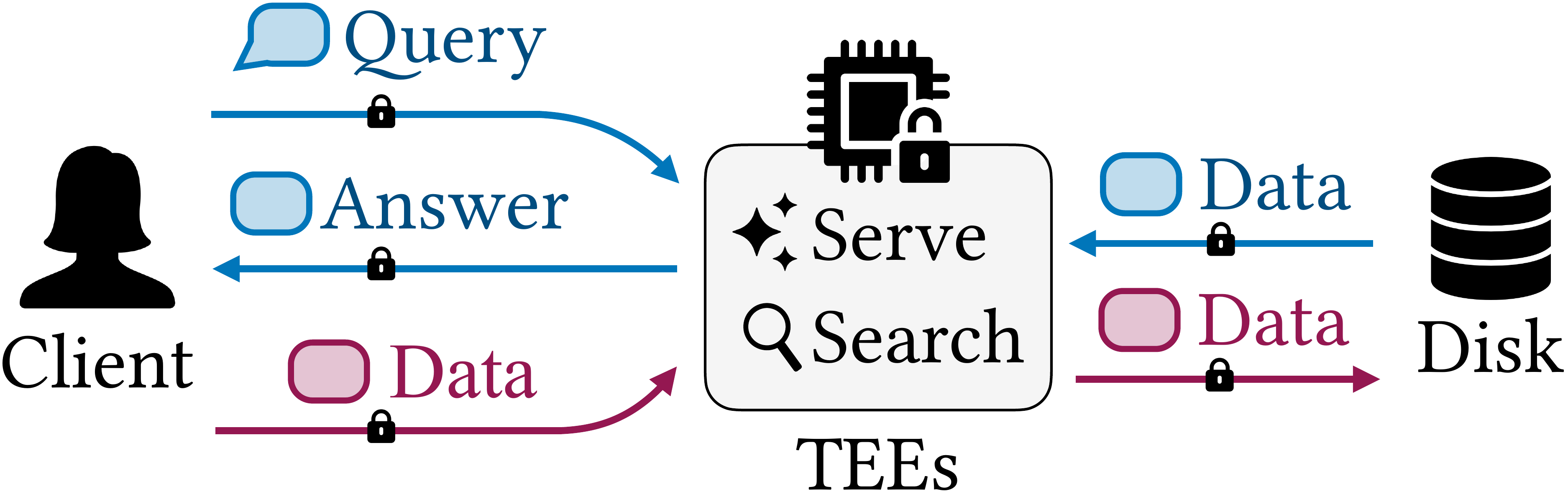}
    \caption{System model. The client sends {\color[HTML]{0075B9}queries} and
    {\color[HTML]{941651}ingestion} data to TEEs that run search and model
    serving while storing encrypted state on untrusted disk.}
    \label{fig:system-model}
\end{figure}

\section{Introduction}\label{sec:intro}

Users increasingly share sensitive, personal information with AI systems in
their daily interactions~\cite{chatbotselfdisclosure, voicepersonalassistant,
digitalconfessions, longitudinalstudyofselfdisclosure}. This trend is
accelerating: Microsoft Recall continuously screenshots user
activity~\cite{windowsrecall}, Google Gemini ingests emails and
photos~\cite{geminipersonalintelligence}, and AI memory tools capture notes,
meetings, and conversations~\cite{limitlessai, memai, personalai}. Looking
ahead, embodied AI agents in robotics and wearable devices are beginning to
capture even richer streams of ambient data (continuous audio, video, and sensor
recordings) from everyday life~\cite{1x, sunday, googleastra,
metaraybannametag}. 

The result is an unprecedented concentration of personal data at providers whose
profiles of user behavior can be shared with third parties, retained for AI
training, or compelled by governments~\cite{ftcsocialmedia2024, openaigov2025,
pewprivacy2023}. End-to-end encrypted messaging has established private
communication as a baseline user expectation~\cite{signal, whatsapp, messenger},
yet personal AI systems, which accumulate far richer portraits of their users
than any messaging app~\cite{chatbotselfdisclosure, voicepersonalassistant,
digitalconfessions, longitudinalstudyofselfdisclosure, pewprivacy2023}, offer no
comparable guarantee. Indeed, outsourced workers were recently found reviewing
sensitive videos captured by Meta's Ray-Ban smart
glasses~\cite{metaraybanreview}. Worse, compromising the application provider's
infrastructure could expose this data in its entirety.

In response, AI providers including Apple~\cite{privatecloudcompute},
Google~\cite{googletrustedexecution}, and
Anthropic~\cite{anthropictrustedexecution} have begun deploying trusted
hardware, the most practical approach for keeping inference queries hidden from
the provider at scale. Yet LLM context windows and trusted hardware memory are
finite and expensive, driving personal AI systems toward persistent long-term
memory that retains information across sessions and tailors responses over
time~\cite{openaimemory}. However, no existing system provides end-to-end
privacy for personal AI memory. The obvious solutions are limited:

\begin{itemize}[leftmargin=*]
    \item \textit{Trusted hardware} suffices for private inference, but not for
    private memory. One approach is to keep data on untrusted storage and fetch
    selectively into the enclave, but this exposes which records are accessed
    for each query. Even when individual memories are encrypted, observing which
    memories a system retrieves is nearly as revealing as seeing the query text
    itself, since retrieved items are semantically similar to the
    query~\cite{informationleakageinembeddings,
    textembeddingsrevealalmostastext,
    sentenceembeddingleaksmoreinformationthanyouexpect}. Loading the entire data
    store into the enclave on every access avoids this leakage but limits
    scalability to what fits in encrypted RAM and incurs excessive bandwidth
    cost.
    \item \textit{Client-local memory} stores personal data on the user's device
    and re-uploads relevant context on every request~\cite{cecollm,
    privatecloudcompute}, yet personal corpora quickly outgrow a single
    device~\cite{stuffiveseen, mylifebits}, local storage lacks the durability
    required for a long-lived memory store~\cite{diskfailuresrealworld,
    datalossrecovery, fsyncfailures, mobileflashwear}, and multi-device access
    demands a unified backend~\cite{multidevicecomputing, multideviceusage}.
    Thus, assistant providers maintain memory cloud-natively~\cite{openaimemory,
    geminipersonalintelligence, privatecloudcompute}, so \sys{} is designed for
    this cloud-hosted setting.
\end{itemize}

What is needed, then, is a way to keep search and model serving inside trusted
hardware while storing personal data on untrusted cloud storage
(Figure~\ref{fig:system-model}), hiding both content and access patterns from
the provider. Agentic memory systems retrieve relevant context via semantic
search over embeddings~\cite{memgpt, mem0, memos, raptor, generativeagents}. The
natural tool for hiding these access patterns is oblivious RAM
(ORAM)~\cite{oram96, oram90, binarytreeoram, pathoram, ringoram}, a
cryptographic memory abstraction that hides data access patterns at a per-access
cost logarithmic in database size. Compass~\cite{compass} co-designs approximate
nearest neighbor (ANN) search with ORAM, showing that semantic search can be
made access-pattern-hiding.

However, agentic memory systems achieve high accuracy precisely because they
augment semantic search with query-dependent traversals, using knowledge graphs (KGs), temporal
filters, and multi-hop lookups to strategically narrow results~\cite{mem0,
memgpt, memos, amem, zep, hipporag, beyonddialoguetime}. Oblivious ANN search
does not support this: it enforces a fixed number of ORAM reads per query, as
any variation leaks information about what is being retrieved. Padding ORAM accesses to a fixed budget appears incompatible with these
traversals, since supporting them naively requires either additional ORAM
round-trips or a larger fixed budget that increases overhead for every query. The problem is compounded by continuous ingestion: prior oblivious ANN
work assumes a fixed or rarely changing database, but data-dependently deciding
when to perform insertions, re-indexing, and garbage collection leaks
information about recent activity. In short, efficient obliviousness and
accuracy appear fundamentally at odds for personal AI memory.

To resolve this tension, we design \sys\sysfootnote, a private memory system for
personal AI that is efficient and accurate. \sys{} adds a privacy-preserving
storage and retrieval layer to trusted-hardware inference enclaves, enabling
users to persist and query their personal data (documents, emails, messages,
meetings, and ambient recordings) without revealing content or access patterns
to the application provider. \sys{} splits storage between trusted enclave
memory and an untrusted ORAM-backed disk (\S\ref{sec:system-overview}). Two ORAM
databases on disk hold embedding vectors and raw data chunks, while compact
index metadata and a personal knowledge graph remain inside the enclave. The key
insight is that all data-dependent reasoning can run entirely inside an enclave,
so that the storage provider only ever sees fixed-size, oblivious ORAM accesses.
\sys{} realizes this insight through three novel contributions:

First, \textbf{knowledge-graph-filtered
search}~(\S\ref{sec:kg-filtered-search}). Pure semantic search is often
insufficient for personal data: queries like ``what did Alice say in last
Monday's meeting?'' depend on temporal context and entity relationships that
embedding similarity alone cannot capture. Structured agent memory systems address this by augmenting vector search
with knowledge graphs, structured representations of entities and their
relationships~\cite{zep, hipporag, magma, graphrag, lightrag}. However, these systems build heavyweight, content-rich
graphs whose data-dependent traversals are incompatible with ORAM's fixed
access budget.
Our key observation is that personal data is inherently structured: it
involves particular people, times, and projects. This structure can be factored out of the content itself. \sys{} exploits this by building
a compact, metadata-only graph of normalized identifiers (people, sources, time
ranges) that fits entirely inside the enclave and stores no content. An LLM
extracts structured predicates from the query, graph traversal narrows the
candidate set in trusted memory, and the ANN scores only within that admissible
set before a single fixed-budget ORAM fetch. All query-dependent reasoning stays inside the enclave, so the storage
provider sees only a fixed-size, oblivious access pattern. Yet retrieval
captures the same personal context that prior work requires data-dependent
traversals to obtain.

Second, \textbf{oblivious dreaming}~(\S\ref{sec:oblivious-dreaming}). Agentic
memory systems routinely perform heavy maintenance (compaction, conflict
resolution, index repair, summarization)~\cite{memgpt, raptor, mem0, memos, zep,
graphrag, generativeagents} whose access patterns would leak information about
the user's data. A fixed maintenance schedule is not viable either, since it
must either scan the entire store, incurring significant overhead, or leave
stale state lingering. By nature of the cryptographic protocol, every ORAM access brings
additional data into the enclave, including dummy blocks and unrelated real
chunks. Like biological memory consolidating during sleep, oblivious dreaming uses this
opportunity to perform maintenance on the incidentally resident data, inspecting
and repairing it without issuing any extra access or breaking the security
guarantee. This enables expiring old items, repairing the search index, and
summarizing past activity, all at steady-state capacity behind a fixed access
pattern.

Third, a \textbf{synthetic personal-data pipeline}
(\S\ref{sec:synthetic-personal-data}). Evaluating a personal memory system
requires realistic multi-modal corpora spanning months or years, but real
personal data is too sensitive to collect or share at scale. Existing synthetic
benchmarks~\cite{personabench} generate isolated personal artifacts but lack the
temporal dynamics essential for memory evaluation: messages arrive in bursty
threads, meetings spawn follow-up emails, and projects evolve over weeks. We
address this gap with a Hawkes-process~\cite{hawkes1971, hawkes1974} arrival
model that reproduces correlated, bursty communication patterns. Paired with an
LLM-driven scenario planner, the pipeline produces years of realistic personal
data. We expect this to be of independent interest beyond privacy.

\leadpara{Evaluation summary.} \sys{} achieves 29$\times$ higher throughput and
15$\times$ lower infrastructure cost than a secure baseline that reflects prior
enclave-backed designs~\cite{enclavedb,obliviate,hardidx}. KG-filtered search
improves accuracy by 13\,pp over plain ANN, matching Graphiti~\cite{zep}, a
plaintext agentic memory system. Oblivious dreaming tracks the accuracy of state-of-the-art insecure ANN
rebalancing while keeping active entries within a fixed capacity budget. \sys{} answers queries in 2.32\,s and ingests a new memory
in 0.94\,s, 2.92$\times$ and 9.05$\times$ faster than the secure baseline,
respectively. A major AI provider has evaluated \sys{} and is considering
deploying its techniques in their production memory system, serving millions of
users.

\begin{figure*}[t]
    \begin{minipage}[c]{0.68\textwidth}
        \centering
        \includegraphics[width=\textwidth]{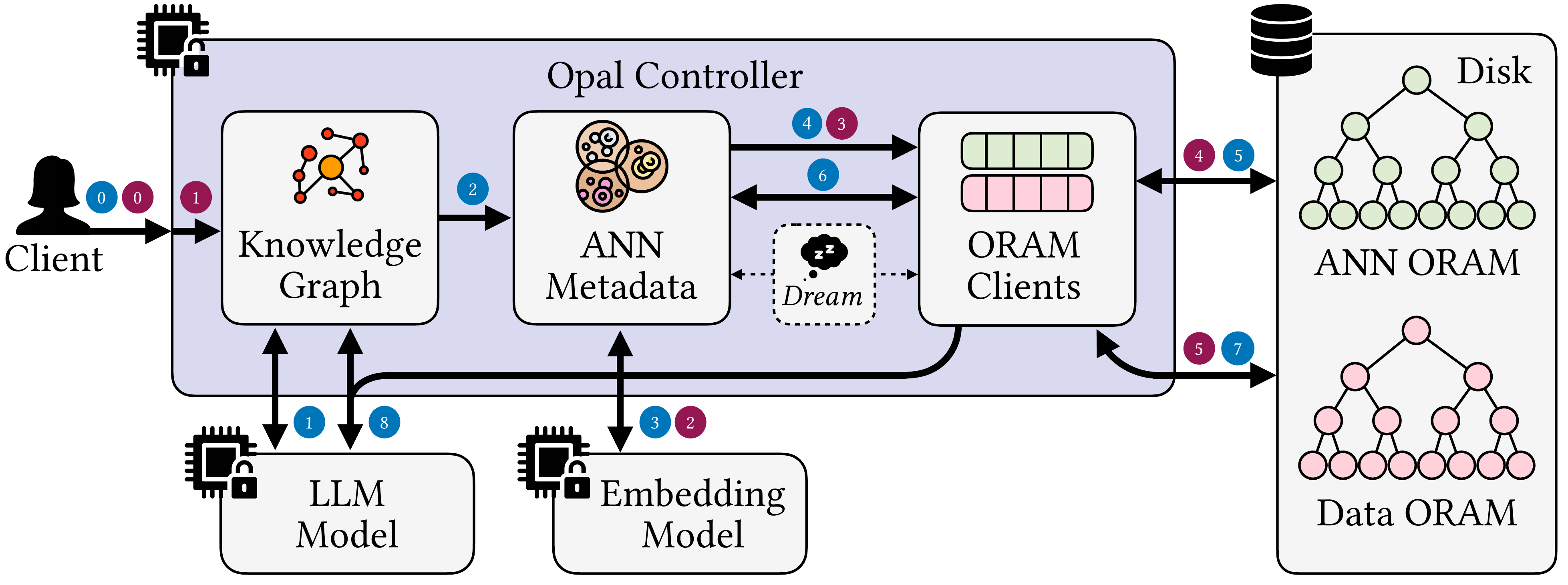}
    \end{minipage}%
    \hfill
    \begin{minipage}[c]{0.26\textwidth}
        \caption{\sys{} system architecture. The \sys{} Controller orchestrates
        query and ingestion operations across three enclaves, routing data
        through oblivious storage (ORAM) on untrusted disk.
        {\color[HTML]{0075B9}Blue} numbers trace the query flow;
        {\color[HTML]{941651}red} numbers trace ingestion. Numbered steps
        correspond to Algorithm~\ref{alg:opal}.}
        \label{fig:system-architecture}
    \end{minipage}
\end{figure*}

\section{System Overview}
\label{sec:system-overview}

\subsection{System Model}
\label{subsec:system-model}

\sysAn{} instance consists of three components: the client, a set of trusted
enclaves, and an untrusted disk. All private computation takes place inside the
enclaves, which hold compact client-side state in memory, while the untrusted
disk stores encrypted ORAM state. There are three enclaves, each with a distinct
role:
\begin{itemize}[leftmargin=*]
    \item $\OpalEnc$ (controller): orchestrates \sys{}'s query and ingestion
    operations, and maintains the knowledge graph, ANN index metadata, and ORAM
    client state.
    \item $\EmbEnc$ (embedding): serves an embedding model.
    \item $\LLMEnc$ (LLM): serves an LLM model.
\end{itemize}

The untrusted disk stores two ORAM databases: $\ORAM_{\ANN}$ holds vector
embeddings used to find relevant matches, and $\ORAM_{\mathsf{Data}}$ holds the
raw data chunks passed to the LLM for answer generation. \sys{} exposes two
operations:
\begin{itemize}[leftmargin=*]
    \item $\OpalEnc.\Query(q)$: retrieve and synthesize an answer from personal
    memory for a natural-language query~$q$.
    \item $\OpalEnc.\Ingest(d)$: ingest a new fixed-size data chunk~$d$ into the
    memory store. 
\end{itemize}
Unlike client-local memory, where the device must store and search all personal
data itself~\cite{windowsrecall}, \sys{} keeps minimal state on the client (a
secret key~$k$ and a monotonic counter~$\mathsf{ctr}$; see
\S\ref{sssec:freshness-rollback}): the
logical client only submits pre-chunked fixed-size items to $\OpalEnc$, which
handles storage and retrieval on ORAM-backed disk. An alternative deployment
could place a normalized ingestion pipeline at the application provider, as is
common in production systems~\cite{facebookrealtime, tfx,
prochlo}.\footnote{This would require a different freshness mechanism (e.g., a
verifiable randomness beacon~\cite{nistir8213}) in place of the monotonic
counter in \S\ref{sssec:freshness-rollback}.} A single enclave serves multiple
users' $\OpalEnc$ controllers, each with their own separate ORAM databases.

\subsection{Threat Model \& Security Guarantees}
\label{subsec:threat-model-and-security-guarantees}
\subsubsection{Adversary Model}
\label{sssec:trust-boundary}

We adopt an established threat model of a malicious host with an enclave that is
assumed to be uncompromised, as in prior TEE-backed and confidential-VM systems
(e.g., Signal's SVR3~\cite{svr3}, Occlum~\cite{occlum}, and
Gramine-TDX~\cite{graminetdx}; see also~\cite{enclavedb, obliviate, hardidx,
weave, hecate, cvmexplained, trustedaiagents}), and industry confidential AI
systems from Apple~\cite{privatecloudcompute},
Google~\cite{googletrustedexecution}, and
Anthropic~\cite{anthropictrustedexecution}.
In particular, the attacker cannot view or tamper with the data and computation
inside the enclave, but it can observe, drop, replay, reorder, and tamper with
all communication, data and computation outside of the enclave. Since
network-level traffic already reveals which user is contacting the service, we
focus on hiding what the user is doing, not that the user is present.
\sys{} does not protect against DoS attacks from the server.

It is particularly important to protect access patterns to storage outside of
the enclave because, in standard confidential VM deployments, persistent state resides on
provider-managed remote storage~\cite{googlepersistentdisk} accessed through
host- or hypervisor-mediated I/O paths that lie outside the trust
boundary~\cite{inteltdxguestsecurity, occlum, cvmexplained,
confidentialcontainers, tinfoilconfidentiality}. The host can therefore log
every storage access without attacking the TEE~\cite{googlepersistentdisk,
inteltdxguestsecurity}; prior systems treat this channel as a primary
threat~\cite{cvmexplained, confidentialcontainers, tinfoilconfidentiality}.
\sys{} ensures that off-enclave storage accesses are independent of query and data
contents.

We model each enclave as an instance of $\mathcal{G}_{\mathsf{att}}$ (as
formally defined in~\cite{formalabstractionsforenclaveattestation}), which runs
a program with private mutable state and returns outputs with
manufacturer-signed attestation bound to the code identity. We further assume
the host cannot read or modify enclave memory during execution, interacting only
through explicit I/O. Client-enclave and inter-enclave communication runs over
authenticated encrypted channels, and the inter-enclave messages are padded to fixed size. TEEs do not natively prevent rollback of
externally stored state~\cite{memoir, rote, narrator}; following prior
work~\cite{svr3, nimble}, \sys{} adds freshness mechanisms
(\S\ref{sssec:freshness-rollback}).

\leadpara{Out-of-scope attacks.}
We focus on access-pattern leakage outside the enclave, assuming the enclave
itself is uncompromised. A complementary line of work studies side-channel
attacks \emph{inside} enclaves: page-table and
paging~\cite{controlledchannelattacks, tellingyoursecretswithoutpagefaults,
foundintranslation}, cache and branch-prediction~\cite{cachezoom,
branchshadowing, invalidatecompare}, interrupt, exception, stepping, and
instruction-counting~\cite{wesee, sevstep, heckler, tdxdown, tdxploit}, timing
and performance-counter~\cite{ttime, counterseveillance}, memory-bus and
power~\cite{offchipattackonhardwareenclaves, oops}, physical~\cite{teefail,
badram}, transient-execution and microarchitectural leakage~\cite{foreshadow,
cacheout, aepicleak, cipherleaks}, fault injection~\cite{plundervolt}, and
attestation or deployment failures~\cite{sgxfail, teepitfalls}.

Current confidential-computing hardware makes these attacks substantially
harder. In our deployment (\S\ref{sec:evaluation}), all enclaves run in
TDX-backed CVMs, and $\EmbEnc$ and $\LLMEnc$ use NVIDIA~B200 GPUs in
confidential-computing mode. In particular, TDX removes the host's direct
software visibility into TD-private guest page tables and memory, so classic
guest-PTE accessed-bit oracles do not directly apply to TD-private
memory~\cite{foundintranslation, inteltdx, linuxkerneltdx,
inteltdxguestsecurity}. On the GPU side, confidential-computing mode removes
management, debugging, and profiling observability~\cite{nvidiasecureai}.
Deployed platforms further mitigate residual channels
operationally~\cite{awsnitrosidechannels, tinfoilsidechannels,
tinfoilconfidentiality}.

Limitations remain: recent work studies host-induced channels on TDX including
interrupt and exception attacks~\cite{wesee, heckler}, stepping
attacks~\cite{sevstep, tdxdown, tdxploit}, timing and performance-counter
attacks~\cite{ttime, counterseveillance}, cache attacks~\cite{tdxploit}, and
ciphertext-side attacks against CVM memory encryption~\cite{relocatevote}.
GPU-internal side channels such as cache timing~\cite{invalidatecompare,
renderedinsecure} and memory-disturbance attacks~\cite{gpuhammer} also persist.
These remaining attacks require microarchitectural, cache, exception, stepping,
timing, or performance-counter channels rather than ordinary host software
observation. Proposals for strengthening enclave security include interrupt-aware
and controlled-channel hardening~\cite{aexnotify, tlblur}, defenses against
performance-counter attacks~\cite{teecorrelate}, TDX-specific hardening of trust
domains~\cite{tetd}, doubly-oblivious data structures~\cite{oblix}, formal
attestation models~\cite{formalabstractionsforenclaveattestation}, and
smaller-TCB enclave architectures~\cite{sanctum, komodo, keystone}.

\subsubsection{Security Guarantees \& Definition}
\label{sssec:guarantees}
\sys{} provides the following security guarantees:
\begin{itemize}[leftmargin=*]
    \item \textbf{Confidentiality.} Under the adversary model above
    (\S\ref{sssec:trust-boundary}), all user data,
    queries, and responses remain hidden from the adversary. The adversary may
    learn whether an operation is $\Query$ or $\Ingest$, but beyond the declared
    leakage it learns neither the content of any item, query, or response, nor
    which stored data is relevant to a query or which records were accessed on
    its behalf.
    \item \textbf{Integrity and Freshness.} Tampering with or rollback of
    ORAM-backed storage is detected on every request. Client state is held in
    enclave RAM during a session and periodically sealed to disk; freshness of
    the sealed state is relative to the provider's last checkpoint, so a crash
    may lose updates since that point
    (\S\ref{sssec:freshness-rollback}).
\end{itemize}

\leadpara{Security Definition.} Let $\Pi$ denote the set of public system
parameters: ANN fetch count $n$, reranking budget $K$, summarization period $T$,
tree depth $L$, and bucket capacity $Z$.
We provide an indistinguishability-based security definition relative to a
leakage function~$\mathcal{L}$. For any request sequence, $\mathcal{L}$ reveals
the public parameters~$\Pi$, the operation type at each step, and the resulting
fixed trace shape: the number of fixed-size inter-enclave messages and the ORAM
batch sizes. We also treat the provider checkpoint boundary~$\kappa_i$ as an
exogenous public schedule fixed across the two worlds. The game is played in the
$\mathcal{G}_{\mathsf{att}}$-hybrid model. The adversary controls all
computation, storage, and message delivery outside enclave boundaries. It wins
if it learns information beyond $\mathcal{L}$ about the user's queries or data,
or causes an undetected incorrect execution of \sys{}.

\begin{figure}[!t]
    \centering
    \begin{tcolorbox}[colback=white, colframe=black, boxrule=0.4pt, arc=4pt,
        left=6pt, right=6pt, top=6pt, bottom=6pt]
        \fontsize{10pt}{12pt}\selectfont
    \begin{enumerate}[leftmargin=*]
        \item The challenger $\mathcal{C}$ chooses a uniformly random bit~$b$.
        \item The adversary $\mathcal{A}$ chooses equally large datasets $D_0$
        and $D_1$ and public parameters~$\Pi$. The challenger initializes the
        system with $D_b$ and~$\Pi$.
        \item For each step~$i$:
        \begin{enumerate}[leftmargin=*]
            \item The adaptive adversary $\mathcal{A}$ chooses public requests
            $q_{i,0}, q_{i,1} \in \{\Query,\Ingest\}$.
            \item If $\mathcal{L}(q_{i,0}) \neq \mathcal{L}(q_{i,1})$, the
            challenger aborts. Otherwise, it executes $q_{i,b}$ according to
            \sys{}'s protocol in the $\mathcal{G}_{\mathsf{att}}$-hybrid model,
            aborting on any ORAM-storage or recovered-checkpoint verification
            failure.
            \item The challenger gets the request result $r_i$.
        \end{enumerate}
        \item The adversary $\mathcal{A}$ outputs a guess $b'$.
    \end{enumerate}
    \end{tcolorbox}
    \caption{Security game for \sys{}}
    \label{fig:security-game}
\end{figure}

The adversary $\mathcal{A}$ wins the game if the challenger $\mathcal{C}$ does
not abort and one of the following two conditions is met:
\begin{enumerate*}[label=\textbf{\itshape\arabic*\upshape)}]
    \item $b' = b$;
    \item the sequence of requests $(q_{i,b},\, r_i)$ is an incorrect execution
    of \sys{}'s plaintext algorithm with $\Pi$ on $D_b$ under the same provider
    residency and checkpoint policy.
\end{enumerate*}

\begin{theorem}
\label{thm:opal-security}
As defined by the security game in Figure~\ref{fig:security-game}, for any PPT
stateful adversary~$\mathcal{A}$, the \sys{} protocol defined in
\S\ref{sec:opal-system-design} and Algorithm~\ref{alg:opal} satisfies
condition~\textbf{\itshape 1\upshape)} with probability at most $\frac{1}{2} +
\mathsf{negl}(\lambda)$ and condition~\textbf{\itshape 2\upshape)} with
probability at most $\mathsf{negl}(\lambda)$ in the
$\mathcal{G}_{\mathsf{att}}$-hybrid model, when instantiated with a durable
monotonic client counter, a secure key-derivation function, a SUF-CMA secure
MAC, a collision-resistant hash function, and an authenticated encryption with
associated data scheme.
\end{theorem}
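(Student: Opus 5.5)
We prove the two conditions separately, with condition~\textbf{\itshape 1\upshape)} handled by a sequence of hybrid games and condition~\textbf{\itshape 2\upshape)} by a reduction to the integrity primitives. In both arguments we work in the $\mathcal{G}_{\mathsf{att}}$-hybrid model. Every enclave ($\OpalEnc$, $\EmbEnc$, $\LLMEnc$) is an ideal functionality whose internal state is hidden. The adversary's view is therefore exactly: the transcripts of the inter-enclave and client channels, the ORAM server transcripts for $\ORAM_{\ANN}$ and $\ORAM_{\mathsf{Data}}$, and the sealed checkpoints written at the public boundaries $\kappa_i$.

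For condition~\textbf{\itshape 1\upshape)}, we start from the real game with bit~$b$ and define the following hybrids.
\begin{enumerate*}[label=(H\arabic*)]
\item We replace every AEAD ciphertext on the client and inter-enclave channels by an encryption of a fixed-size zero string. Messages are already padded to fixed size, so their lengths depend only on $\mathcal{L}$; indistinguishability reduces to AEAD confidentiality, with keys produced by the KDF being pseudorandom.
\item We do the same for the sealed checkpoints and for the encrypted ORAM blocks, again via AEAD and the KDF.
\item We replace each ORAM's physical access sequence by the output of the ORAM simulator run on the batch sizes alone.
\end{enumerate*}

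The crux of (H3) is to argue, by inspecting Algorithm~\ref{alg:opal}, that the \emph{number and timing} of ORAM accesses per operation are a function of $\Pi$ and the operation type only. Concretely, we check three things. First, a \Query{} performs KG filtering and ANN scoring entirely in enclave memory and then issues exactly $n$ fetches to $\ORAM_{\ANN}$ and $K$ to $\ORAM_{\mathsf{Data}}$. Second, an \Ingest{} issues a fixed batch. Third, oblivious dreaming (expiry, index repair, and summarization every $T$ steps) operates only on blocks already resident from those fetches and is padded to the same write-back shape. Once this is verified, the Path/Ring ORAM obliviousness theorem gives that the physical trace is simulatable from $L$, $Z$, and the batch sizes. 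The final hybrid depends only on $\mathcal{L}$, which is identical in both worlds since the challenger aborts otherwise, so the adversary's advantage is a sum of negligible terms.

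For condition~\textbf{\itshape 2\upshape)}, we show that any undetected deviation yields a break of one of the integrity primitives. The ORAM tree is authenticated by a Merkle-style hash over buckets, with the root kept in enclave state, and every fetched path is checked against that root. An accepted substituted or replayed block therefore gives a hash collision or an AEAD forgery. Sealed checkpoints carry a MAC binding the state digest to the durable monotonic counter $\mathsf{ctr}$. Recovering an older or forged checkpoint without detection thus either breaks SUF-CMA or contradicts counter monotonicity, leaving only the declared loss of updates since $\kappa_i$, which the game's ``same residency and checkpoint policy'' clause permits. Dropped, reordered, or replayed channel messages are caught by AEAD with sequence numbers as associated data. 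Conditioned on no abort and none of these events, each enclave step computes exactly the plaintext algorithm on $D_b$, because the ideal $\mathcal{G}_{\mathsf{att}}$ programs execute faithfully on authentic inputs.

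We expect the main obstacle to be the step in (H3) showing that oblivious dreaming and continuous ingestion, including capacity eviction and periodic summarization through $\LLMEnc$, never produce a data-dependent extra ORAM access or message. It must be argued that summarization outputs are written back into already-scheduled dummy or real slots, and that eviction decisions change only block contents, never which paths are accessed. We would prove this as a standalone lemma by induction over steps, with the invariant that the enclave's scheduling state (counters, stash bound, period index) is a function of public information. We would handle stash overflow as a negligible-probability abort covered by the ORAM bound.
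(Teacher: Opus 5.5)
Your decomposition matches the paper's. Condition~1 rests on three things. The first is a structural lemma that each request's observable trace (fixed-size inter-enclave calls plus batched ORAM accesses) has a shape fixed by public data. The second is a lemma that $\Dream()$ acts only on resident blocks and adds no event. The third is ORAM obliviousness plus channel encryption. Your primitive-by-primitive hybrids ending in an ORAM simulator are an equivalent packaging of the paper's per-request adaptive hybrid over the imported Compass result. Condition~2 likewise matches the paper: Merkle/MAC/AEAD integrity, the monotonic counter, and rollback permitted only to $\kappa_i$.

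The step that would fail is your treatment of periodic summarization. You fold it into dreaming and say it ``operates only on blocks already resident \dots{} and is padded to the same write-back shape.'' Your key lemma would then show that summary outputs are ``written back into already-scheduled dummy or real slots.'' Algorithm~\ref{alg:opal} does not do this. When $t \bmod T = 0$, it makes an extra $\LLMEnc.\Summarize$ call and then recursively calls $\OpalEnc.\Ingest(d_s)$. That recursive call issues a fresh $\EmbEnc.\Embed$ call and two fresh batch inserts into $\ORAM_{\ANN}$ and $\ORAM_{\mathsf{Data}}$. Only the \emph{reading} of recent chunks avoids a dedicated fetch; writing the summary does not. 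So the lemma you flag as the main obstacle is false for this protocol. Your (H3) claim is also false for \Ingest: the number of ORAM accesses per operation is not ``a function of $\Pi$ and the operation type only,'' since a boundary ingest carries two extra ORAM batches and two extra calls.

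The fix, which is what Lemma~\ref{lem:request-structure} does, is not to hide these events but to schedule them publicly. They occur exactly when the ingest counter hits a multiple of $T$. Since $T \in \Pi$ and the sequence of operation types is part of $\mathcal{L}$, the extended trace is still a deterministic function of the leakage: the base trace, plus one $\mathsf{t}_{\mathsf{call}}$, plus a recursive copy of the base trace. Hence structural equivalence across the two worlds still holds. Your invariant that the ``period index'' is public is the right ingredient; use it to justify this extra trace segment rather than a piggybacking argument. Relatedly, (H3) should also fix the \emph{number} of inter-enclave calls per request: three for \Query, one for a base \Ingest, and two more at a boundary. (H1) hides only their contents and lengths, not their count.
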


We give a proof sketch in \S\ref{sec:security}; the full proof is in
\prooflocation.

\section{\sys{}'s System Design}
\label{sec:opal-system-design}
\begin{adjustbox}{max width=\textwidth, center, minipage=[t]{1.1\textwidth},
float={figure*}}
    \begin{algobox}[label={alg:opal}]{\sys{}}
        \small
        \algtext*{EndIf}
        \begin{minipage}[t]{0.48\textwidth}
            \underline{$\OpalEnc.\Query(q)$}
            \begin{algorithmic}[1]
                \State $F \gets \LLMEnc.\Traverse(q,\, \KG)$ \Comment{filter set}
                \State $A_q \gets \KG.\Traverse(F)$ \Comment{KG traversal}
                \State $e_q \gets \EmbEnc.\Embed(q)$ \Comment{embed query}
                \State $C \gets \ANN.\Score(e_q, A_q)$ \Comment{ANN scoring}
                \State $P_{\ANN} \gets \BatchSearch(\ORAM_{\ANN},\, C,\, n;\; \Dream())$ \Comment{top-$n$}
                \State $S_K \gets \ANN.\Rerank(P_{\ANN},\, e_q,\, K)$ \Comment{top-$K$}
                \State $P_{\mathsf{Data}} \gets \BatchSearch(\ORAM_{\mathsf{Data}},\, S_K,\, K;\; \Dream())$
                \State $a \gets \LLMEnc.\Synthesize(P_{\mathsf{Data}}[S_K])$ \Comment{answer synthesis}
                \State \Return $a$
            \end{algorithmic}
        \end{minipage}
        \hfill
        \begin{minipage}[t]{0.48\textwidth}
            \underline{$\OpalEnc.\Ingest(d)$}
            \begin{algorithmic}[1]
                \State $\KG.\Update(d.\mathsf{metadata})$ \Comment{update KG}
                \State $e_d \gets \EmbEnc.\Embed(d)$ \Comment{embed data}
                \State $\ANN.\Insert(e_d)$ \Comment{assign cluster + update index}
                \State $P_{\ANN} \gets \BatchInsert(\ORAM_{\ANN},\, e_d;\; \Dream())$
                \State $P_{\mathsf{Data}} \gets \BatchInsert(\ORAM_{\mathsf{Data}},\, d;\; \Dream())$
                \State $t \gets t + 1$
                \If{$t \bmod T = 0$} \Comment{summarization period}
                    \State $S \gets \KG.\Traverse(\textsc{recent})$ \Comment{recent chunks}
                    \State $d_s \gets \LLMEnc.\Summarize(S)$ \Comment{generate summary}
                    \State $\OpalEnc.\Ingest(d_s)$ \Comment{ingest summary}
                \EndIf
            \end{algorithmic}
        \end{minipage}
    \end{algobox}
\end{adjustbox}

\sys{}'s central design principle is that all data-dependent reasoning executes
inside the enclave, so that every access crossing the trust boundary follows a
fixed, content-independent pattern. A query first
extracts structured predicates inside $\LLMEnc$, traverses the enclave-resident
KG to form an admissible set, embeds the query, and scores ANN candidates before
issuing a fixed-budget ORAM fetch for vector reranking. \sys{} then performs a
second fixed-budget ORAM fetch for the top raw chunks and passes them to
$\LLMEnc$ for final answer synthesis. An ingestion updates KG metadata, embeds the
new chunk, inserts it into the ANN metadata and both ORAM stores, and
periodically re-ingests a summary item through the same path. In both cases,
$\Dream()$ executes only on blocks already brought into the enclave on the
ordinary ORAM path. Algorithm~\ref{alg:opal} and
Figure~\ref{fig:system-architecture} describe the full protocol.
The following subsections describe KG-filtered search
(\S\ref{sec:kg-filtered-search}), oblivious dreaming
(\S\ref{sec:oblivious-dreaming}), and freshness
(\S\ref{sssec:freshness-rollback}).

\subsection{Technical Background}
\label{subsec:technical-background}

\subsubsection{Oblivious RAM}

An ORAM allows a client to read and write blocks on an untrusted server such
that the server cannot distinguish accesses to different blocks~\cite{oram96,
pathoram}. \sys{} uses Ring ORAM~\cite{ringoram}, a tree-based construction
where the server stores $N$ blocks in a binary tree of depth~$L$ with buckets of
capacity~$Z$. Each access reads and writes a fixed number of encrypted blocks
regardless of which block is requested; a client-side position map and stash
track block locations between accesses. Each access incurs $O(\log N)$
bandwidth, so any additional access beyond the fixed budget multiplies cost
proportionally. Both ORAM stores are encrypted under a per-client key derived
from the client secret; integrity and rollback are described in
\S\ref{sssec:freshness-rollback}.

\subsubsection{IVF Indexes}

Inverted file (IVF) indexes partition a vector collection into clusters, each
represented by a centroid~\cite{faiss}. At query time, the index identifies the
nearest centroids and scores only vectors in those clusters, avoiding a full
scan. Product quantization (PQ) compresses each vector into a short code that
enables approximate scoring without accessing the full vector~\cite{faiss}.

Prior oblivious ANN work~\cite{compass} builds on graph-based indexes (HNSW),
but multi-hop traversals multiply ORAM round-trips and graphs are difficult to
update incrementally~\cite{freshdiskann, spfresh}. IVF indexes handle
insertion-heavy workloads more gracefully~\cite{freshdiskann, spfresh} and fit
our setting naturally: scoring uses compact PQ codes inside the enclave, and only a
single bulk ORAM fetch retrieves full vectors for reranking.

\subsection{KG-filtered Search}
\label{sec:kg-filtered-search}

Structured retrieval in agentic memory systems achieves accuracy by building
heavyweight, content-rich knowledge graphs that tightly couple retrieval with
data-dependent traversals~\cite{zep, hipporag, graphrag, lightrag}. These graphs
store extracted entities, relationships, and even community summaries alongside
the corpus, and every query triggers multi-hop walks over this disk-resident
state. In an ORAM-backed store, each such traversal either inflates the fixed
access budget or leaks information about what the system retrieves. Scoring the
full corpus first and filtering afterward would waste the fixed ORAM budget on
semantically similar but contextually irrelevant items, while filtering outside
the enclave would leak query predicates.

\sys{} departs from this paradigm entirely: instead of a content-rich graph, it
maintains a compact, metadata-only knowledge graph of normalized identifiers
(people, sources, time ranges) that fits inside the enclave and stores no corpus
content. In Algorithm~\ref{alg:opal}, $\LLMEnc.\Traverse(q,\KG)$ extracts a
filter set~$F$ from the query, and $\KG.\Traverse(F)$ resolves those predicates
entirely inside the enclave to produce an admissible set~$A_q$. The ANN then
scores only within~$A_q$, after which \sys{} performs the same fixed-budget ORAM
fetch and reranking pipeline as in the unfiltered case. Because filtration
changes \emph{which} items are eligible rather than \emph{how many} storage
accesses occur, \sys{} achieves the accuracy benefits of graph-augmented
retrieval. Despite storing no content in the
graph, this lightweight approach matches a state-of-the-art plaintext agentic
memory baseline with query-dependent multi-hop
retrieval (\S\ref{subsec:accuracy}).

\subsubsection{Enclave-Resident Graph}

The knowledge graph serves as a compact in-enclave control plane: raw text and
embeddings remain in ORAM on untrusted storage, and the graph itself stores only
identifiers, sparse metadata, and structural relationships, all
deterministically extracted from application metadata (e.g., message
participants, meeting attendees, folder hierarchy) at ingest time without an
LLM call. Concretely, the graph contains five node types:
\begin{itemize}[leftmargin=*]
    \item \textbf{Artifact nodes} represent a logical memory unit (email,
    meeting, document), each carrying a timestamp and modality label for
    temporal and source-type filtering.
    \item \textbf{Chunk nodes} are identifiers for fixed-size pieces of an
    artifact's content, each linked to its parent artifact; the chunk content
    itself resides in ORAM.
    \item \textbf{Summary nodes} are consolidated digests produced by periodic
    summarization (\S\ref{sec:oblivious-dreaming}), with edges to the artifact
    nodes they compress.
    \item \textbf{Person nodes} represent named individuals (e.g., email
    senders, meeting attendees). Each artifact holds edges to its participants.
    \item \textbf{Project nodes} represent workstreams (folder hierarchies,
    project spaces, recurring meetings); each artifact holds edges to its
    project node.
\end{itemize}
\begin{figure}[tbp]
    \centering
    \includegraphics[width=0.85\linewidth]{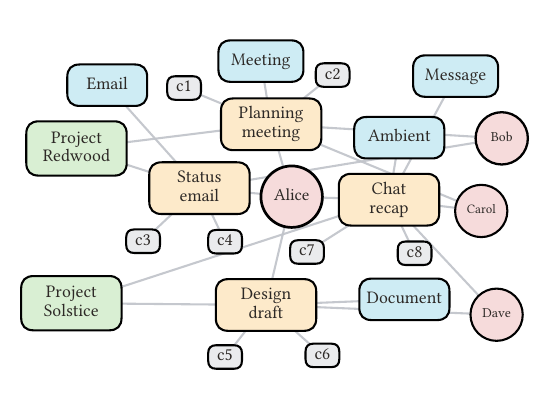}
    \caption{Sample \sys{} knowledge graph for user Alice. {%
    \setlength{\fboxsep}{1pt}%
    \colorbox[HTML]{FDEACA}{\strut artifact}%
    } nodes link to {%
    \setlength{\fboxsep}{1pt}%
    \colorbox[HTML]{CEECF4}{\strut modality}%
    }, {%
    \setlength{\fboxsep}{1pt}%
    \colorbox[HTML]{F6DBDB}{\strut person}%
    }, and {%
    \setlength{\fboxsep}{1pt}%
    \colorbox[HTML]{D9EFD3}{\strut project}%
    } nodes; {%
    \setlength{\fboxsep}{1pt}%
    \colorbox[HTML]{EAEBED}{\strut chunk}%
    } nodes are labeled $c_1$-$c_8$.} \Description{Sample \sys{} knowledge
    graph for user Alice. Artifact nodes link to modality, person, and project
    nodes, and chunk nodes are labeled c1 through c8. Ambient appears as an
    extra modality. Timestamps and summary nodes are omitted.}
    \label{fig:kg-filtered-search-graph}
\end{figure}
Multiple chunks from the same artifact share a single artifact record, so the
graph scales with the number of distinct artifacts rather than individual
chunks.

\subsubsection{Filtration Types}

Semantic similarity is necessary but insufficient for personal memory retrieval.
Personal information management (PIM) studies confirm that users naturally scope queries by time, people, source
type, and project context~\cite{omniquery, foldersortags, emailrefinding}. We
identify four classes of queries where embedding distance alone is unreliable,
each resolved by a metadata predicate evaluated cheaply.

\begin{itemize}[leftmargin=*]
    \item \textbf{Temporal.} Users naturally anchor queries to dates, intervals,
    or relative time expressions (e.g., ``last week''). \sys{} resolves these
    into time windows and retains only artifacts within range.

    \item \textbf{Modality.} Users often specify the source type
    (e.g., ``in my emails''); modality filtration narrows~$A_q$ accordingly.

    \item \textbf{Person.} Queries like ``What did Bob say about the budget?''
    require scoping memory to a named individual. \sys{} filters on canonical
    participant identifiers attached at ingest (e.g., meeting attendees, email
    senders).

    \item \textbf{Project or workstream.} \sys{} filters on workstream tags
    derived from organizational structure (folder hierarchies, project spaces,
    recurring meetings).

\end{itemize}
For example, the query ``What did Alice say about the budget last Monday?''
yields predicates \texttt{person\,=\,'Alice' AND timestamp BETWEEN
'2025-03-24' AND '2025-03-25'}.

\subsubsection{Query-Time Traversal}\label{sec:query-time-traversal}

Because each retrieval round requires a full ORAM access, \sys{} gets exactly
one shot at filtering; a variable number of rounds would leak query information,
and fixing the budget at multiple rounds would multiply ORAM latency and
bandwidth for every query, even when a single pass suffices.
The traversal stage therefore assigns confidence to each predicate in~$F$: only
high-confidence predicates become hard constraints, and low-confidence ones are
widened or dropped. If the resulting conjunction yields too few candidates, a
deterministic relaxation cascade drops predicates in order of brittleness
(person, then project, then modality) and widens temporal windows until
enough candidates remain.

Modality filtering restricts~$A_q$ to a single source class, but answers often
reside in a linked artifact of a different type (e.g., a meeting decision
captured in a follow-up email). \sys{} \emph{percolates}~$A_q$ along
source-link edges in the KG, adding linked artifacts of other modalities.
Without percolation, a correct modality predicate would still miss the chunk
that contains the answer.

\subsection{Oblivious Dreaming}
\label{sec:oblivious-dreaming}

Insecure agentic memory systems routinely perform heavy maintenance:
compaction and recursive summarization~\cite{memgpt,
raptor}, conflict resolution and deduplication~\cite{mem0, memos,
zep}, periodic graph refresh~\cite{zep, graphrag}, and neighborhood-scanning
reflection~\cite{generativeagents}. Under ORAM, each of these produces
distinctive access patterns that let the adversary observe \emph{when}
maintenance occurs and correlate it with recent activity. Any fixed maintenance
schedule must either scan the entire store to catch up, or leave stale state
lingering.

\sys{} addresses this through oblivious dreaming: the $\Dream()$ callback
(Algorithm~\ref{alg:opal}) executes inside every batched ORAM access, inspecting blocks already in the stash and updating enclave-resident metadata
accordingly (expiring entries or repairing index assignments). For retention and index repair, no maintenance-specific storage access is
ever issued: the external trace is identical whether maintenance fires or
not~(\S\ref{sec:security}). Memory compression triggers at a fixed public
cadence and re-uses the standard ingestion path. We apply oblivious dreaming
to three domains: bounded retention (\S\ref{subsec:adaptive-retention}),
streaming index repair (\S\ref{subsec:sleepy-rebalancing}), and long-term
memory compression (\S\ref{subsec:summarization}), all illustrated in
Figure~\ref{fig:dreaming}.

\begin{figure*}[t]
    \begin{minipage}[c]{0.74\textwidth}
        \centering
        \includegraphics[width=\linewidth]{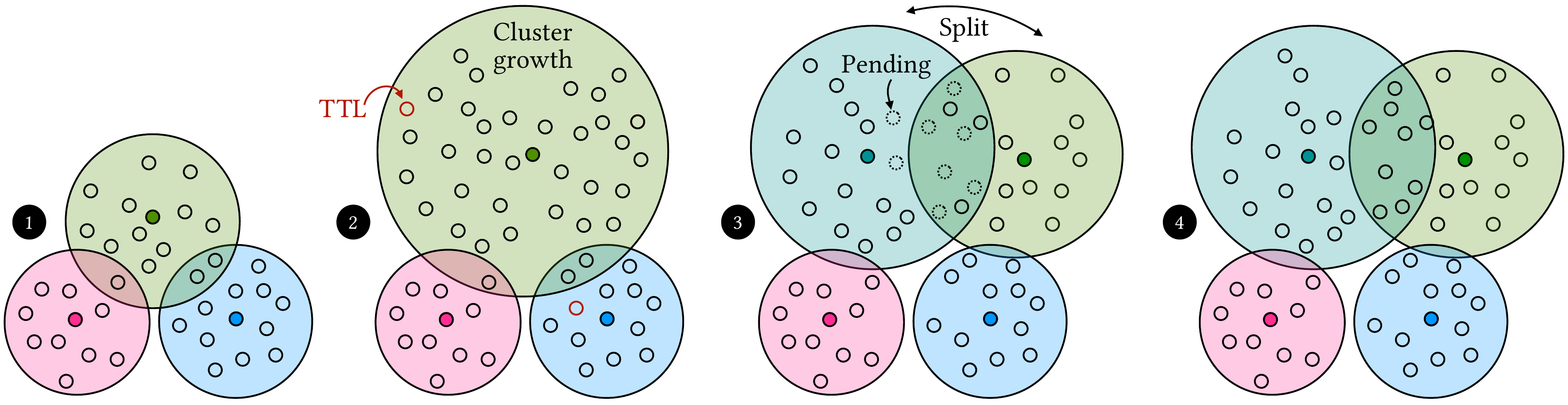}
    \end{minipage}%
    \hfill
    \begin{minipage}[c]{0.24\textwidth}
        \caption{Oblivious dreaming lifecycle. (1)~Initial IVF state.
        (2)~Writes grow a cluster; expired items (red) pass TTL.
        (3)~Overgrown cluster splits, vectors marked pending, expired items
        deleted. (4)~Pending corrections resolved on ordinary ORAM accesses.}
        \label{fig:dreaming}
    \end{minipage}
\end{figure*}

\subsubsection{Adaptive Retention}
\label{subsec:adaptive-retention}

An ORAM database has a fixed capacity. One could double the store at fixed
intervals, but different users fill capacity at different rates, so any uniform
schedule either wastes space or falls behind; worse, the store grows without
bound and ORAM overhead ($O(\log N)$ per access) grows with it. The store must
therefore expire old items, and the retention policy must be user-agnostic: a
single configuration that adapts to any workload without per-user tuning. We
size the ORAM capacity to hold roughly one year of data, informed by PIM research showing that most personal searches target items from the past few weeks, with a long tail extending over months~\cite{oncefound, emaillifetime, refindfilesemailsweb}.

\sys{} assigns each ingested item a TTL measured in logical operations
(Figure~\ref{fig:dreaming}, steps~1-2). Each query access that brings an item
into the enclave refreshes its TTL, producing LRU-like behavior: recent and
recurrently accessed items survive, while cold items age out automatically. This
matches empirical personal-data access patterns, where re-access concentrates
heavily in the near past but retains a long tail~\cite{emailsearch,
emaillifetime, oncefound}. Crucially, expiry is evaluated just-in-time on blocks
already in the stash, requiring no additional ORAM accesses, extending the
inline-expiry idea from oblivious messaging~\cite{myco}.

To target a steady-state store size~$N$, \sys{} sets
\[
\text{TTL}
=
\left\lceil
\frac{N}{w \cdot c \cdot \eta}
\right\rceil,
\qquad
\eta = 1 + (1-w)\ln K,
\]
where $w$ is the write ratio, $c$ is the average number of chunks generated per
logical item, and $K$ is the reranking budget. All parameters are public or
derivable from public workload statistics; recall that the adversary already
knows operation types and counts under our threat
model~(\S\ref{subsec:threat-model-and-security-guarantees}), so the TTL itself
leaks nothing. The correction factor~$\eta$ accounts for lifetime refresh: as
the read fraction grows, items survive longer in expectation, so the base TTL
must shrink to maintain the same expected occupancy. Once the store reaches
capacity, expired items are deleted opportunistically as their full vectors are
encountered in the enclave (Figure~\ref{fig:dreaming}, step~3), and the active
database converges toward~$N$ in steady state (\S\ref{subsec:oblivious-dreaming}
confirms convergence and that neither the ORAM nor its stash overflow in a
multi-year replay).

\subsubsection{Sleepy Rebalancing}
\label{subsec:sleepy-rebalancing}

The continuous insertions and deletions that adaptive retention produces cause
IVF clusters to drift: some grow too large and others become too small, and the coarse partitioning that
made the original ANN layout effective becomes stale. Conventional streaming ANN
systems repair this eagerly: SPFresh~\cite{spfresh} splits overgrown clusters, merges undersized ones,
and immediately reassigns boundary vectors, while CrackIVF~\cite{crackivf}
defers splits to query time but still issues query-dependent accesses for the
repair. Under \sys{}'s threat model, both are problematic: each correction
fetches a specific block that the foreground operation did not request, creating
observable maintenance traffic.

\sys{} introduces sleepy rebalancing, which defers index corrections to maintain
access-pattern privacy while converging to the same retrieval quality as eager
repair (Figure~\ref{fig:dreaming}, steps~3-4). When a cluster's population
exceeds a fixed split threshold, the enclave splits it
without fetching any full vectors: it reconstructs approximate member vectors by
decoding each item's PQ residual code relative to the parent centroid, runs
2-means on the reconstructions, and installs two new centroids in place of the
original. All members of the split cluster, plus items in neighboring clusters
that may now be closer to a new centroid, are marked as pending correction in
enclave-resident metadata. No items are fetched; the split is invisible
externally. Symmetrically, when a cluster's population falls below a merge
threshold, the enclave merges it into the nearest neighbor cluster using the same
PQ-based reconstruction and lazy correction mechanism.

As with adaptive retention, corrections are applied lazily during ordinary
ORAM accesses. When an ORAM access brings a pending item into the stash,
\sys{} reassigns it to the nearest live centroid in enclave ANN metadata. Cluster sizes also shrink
naturally as expired items are physically reclaimed by later insertions,
sometimes eliminating the need for a split entirely. \S\ref{sec:evaluation} shows that sleepy rebalancing tracks an eager
LIRE-style baseline closely: the PQ-based approximate split already places
most items correctly.

\subsubsection{Memory Compression}
\label{subsec:summarization}

Adaptive retention deletes old detail, but once a chunk expires, any knowledge
it contained is gone. Users rarely need verbatim access to every old chunk, but
they benefit from preserving durable high-level structure: recurring
collaborators, long-running projects, and important past decisions. Prior
systems consolidate memory as a dedicated operation with its own access
footprint~\cite{generativeagents, memgpt, raptor}.

\sys{} periodically compresses recent chunks into denser summary records.
Every $T$ ingestions, Algorithm~\ref{alg:opal} retrieves recent chunks via
$\KG.\Traverse(\textsc{recent})$, summarizes them with
$\LLMEnc.\Summarize$, and re-ingests the resulting summary through the
ordinary ingestion path. The summary is simply another memory item with its own identifier,
embedding, and KG links, preserving the same abstraction boundary as raw memory.
Because summarization runs on the ingest path using items already in the
enclave, it requires no dedicated ORAM fetch. Summaries' condensed content makes them more relevant per item,
earning a disproportionate share of top-$K$ slots~(\S\ref{sec:evaluation}).

Recursive
summarization, compressing old summaries into progressively coarser records to
extend retention further, is a natural extension that we leave to future work.

\subsection{Freshness and Rollback}
\label{sssec:freshness-rollback}

Each $\Query$ and $\Ingest$ carries a client secret~$k$ and the current
monotonic counter~$\mathsf{ctr}$. The enclave derives
$(k_{\mathsf{mac}}, k_{\mathsf{enc}}) \gets \mathsf{KDF}(k)$. The key
$k_{\mathsf{enc}}$ is the per-client encryption key used for both
$\ORAM_{\ANN}$ and $\ORAM_{\mathsf{Data}}$, as well as for sealing client
checkpoints.\footnote{We model a user's devices as one
logical client. Coordinating $\mathsf{ctr}$ across physical devices
is orthogonal.}

Both ORAM trees maintain Merkle-style integrity over buckets~\cite{pathoram,
compass}; after each request the enclave forms a rollback record:
\[
    \begin{aligned}
    R = (&\mathsf{ctr}, \mathsf{root}_{\ANN}, \mathsf{root}_{\mathsf{Data}},\\
    &\mathsf{MAC}_{k_{\mathsf{mac}}}(\mathsf{ctr}, \mathsf{root}_{\ANN},
    \mathsf{root}_{\mathsf{Data}})).
    \end{aligned}
\]
Here $\mathsf{root}_{\ANN}$ and $\mathsf{root}_{\mathsf{Data}}$ are the current
ORAM roots, and the MAC is computed over a canonically encoded tuple. Any
accepted storage state for the request must match $R$ under the current counter,
giving per-request freshness.

Let $\kappa_i \le i$ denote the latest request index whose client state has been
checkpointed for that logical client by step~$i$ under the provider's residency
and checkpoint policy rather than the inter-enclave call pattern; $\kappa_i$
need not advance on every request and advances only when that client is evicted
from enclave RAM. When $\kappa_i$ advances, the enclave pads its resident client
state to a fixed size and writes
$C = \mathsf{AEAD.Enc}_{k_{\mathsf{enc}}}(\mathsf{ctr}, \mathsf{client\_state})$.
If a checkpoint is written under the current counter, the same~$\mathsf{ctr}$ is
bound into it, preventing independent rollback of the rollback record and the
sealed client state. Thus after request~$i$, sealed client state is guaranteed
fresh only up to~$\kappa_i$: a crash may recover the authenticated
checkpoint at~$\kappa_i$ rather than the current state.

On recovery, the enclave verifies the MAC in $R$, the checkpoint tag, and both
ORAM roots, and aborts on any mismatch. It also rejects any client request whose
$\mathsf{ctr}$ is not strictly larger than the last accepted value.
All client-to-enclave and inter-enclave calls use authenticated encrypted channels.

\section{Synthetic Personal-Data Pipeline\label{sec:synthetic-personal-data}}

Evaluating \sys{} requires a corpus with realistic temporal dynamics and 
cross-modal causal structure, since both the maintenance mechanisms and
structured search filters in \sys{} depend on when artifacts arrive and how they
relate to one another. No existing dataset provides this: real personal data is
too sensitive to collect or share at scale, and prior
benchmarks~\cite{longmemeval, locomo, perltqa, timelineqa, openlifelogqa,
atmbench, amemgym, amabench, realmem, esmemeval, halumem, memoryagentbench,
lifebench, astrabench, memsim} and digital-footprint
pipelines~\cite{personatrace, privasis, converse, personabench} generate
personal artifacts but lack calibrated temporal arrival models or cross-modal dependencies.

Synthesizing a realistic substitute poses two challenges. First, modalities are
not independent: a meeting triggers follow-up emails, and generating each stream
in isolation yields a corpus that never exercises cross-modal retrieval filters.
Second, the parameter space is large (arrival rates, noise fractions, thread
lengths, social-graph sizes) and no single source covers it; we must synthesize
findings across communication studies, workplace reports, and access-pattern
research into a coherent parameterization.
\calibrationtablecontext{} maps each empirical finding to the parameter it calibrates. Building on PersonaBench's~\cite{personabench, personahub,
finepersonas} social-graph methodology, we address both challenges with a
four-stage pipeline that produces years of multimodal activity for any given
persona and social graph.

\leadpara{(1)~Life-state schedule.} A deterministic daily schedule
partitions the anchor person's time into behavioral states (e.g., asleep,
commuting, focus time, free time, weekend). Each state modulates the
baseline arrival intensity of every modality, hard-gating some modalities in
certain states (e.g., no meetings while asleep) while strongly downweighting
others (e.g., few messages received during focus time). Multiplier values
are calibrated from empirical activity rates (\calibrationtableref).

\leadpara{(2)~Multivariate Hawkes process.} Hawkes processes are the
standard statistical framework for modeling self-exciting event
streams~\cite{hawkes1971, hawkes1974}, where each event raises the
probability of related follow-on events. We use a multivariate
Hawkes process~\cite{mulch} over six modalities: email, meeting,
document, query, message, and ambient. Each event can trigger causally
related follow-on events in the same or different modalities (e.g., a
meeting triggers follow-up emails and documents), with
excitation parameters calibrated from empirical communication studies
(\calibrationtableref). The life-state multipliers from Stage~1
modulate the baseline intensity so that daily volume targets are matched in expectation across behavioral states. Further parameterization details are in
\hawkesdetailref.

\leadpara{(3)~Event planning.} A scenario planner constructs narrative arcs
from the social graph, seeds Stage~2 with scenario triggers, and assigns
each event its initiator, participants, topic, and,
when applicable, scenario membership. Contacts are drawn from
modality-eligible graph edges weighted by relationship frequency and
life-state context.

\leadpara{(4)~Content generation.} Events are realized by modality-specific
LLM generators into concrete artifacts. The corpus mixes scenario-backed
events with standalone activity not tied to any scenario, a fraction of
which is routine noise (newsletters, gossip, spam) calibrated per modality
to match empirical rates (\calibrationtableref). Scenario-backed
artifacts remain causally consistent across modalities because their
generators share the same scenario metadata from Stage~3: a meeting
transcript, its follow-up email, and a later document revision reflect a consistent shared scenario state.

For the timestamped retrieval events, we combine question categories from established agentic memory
benchmarks~\cite{longmemeval, locomo} with the natural scoping dimensions
identified by PIM research (time, people, source type,
project)~\cite{omniquery, foldersortags, emailrefinding}, yielding seven
categories: \emph{single-fact}, \emph{knowledge-update}, \emph{multi-hop},
\emph{temporal}, \emph{modality}, \emph{person}, and \emph{project}.
Artifact selection for question generation follows an empirically
calibrated recency distribution~\cite{stuffiveseen, igarashi2022}
(\recencydetailref). We
focus on text-derived and structured artifacts; extending to native image
and video memory and validating against real user populations are important future
work~\cite{geminipersonalintelligence, googleastra, metaraybannametag}.
Figure~\ref{fig:synthetic-modality-timeline} shows the resulting weekly cadence.

\begin{figure}[ht]
    \centering
    \includegraphics[width=0.96\linewidth]{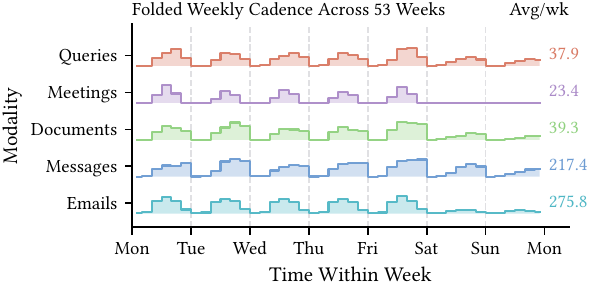}
    \caption{Average weekly cadence of each modality across the synthetic
    corpus. Messages and emails cluster into dense weekday bursts, meetings
    concentrate during work hours, and all modalities drop on weekends.} \Description{Average
    weekly cadence of each modality across the full synthetic corpus, showing
    weekday bursts and weekend drops.}
    \label{fig:synthetic-modality-timeline}
\end{figure}

\section{Security Proof Sketch}
\label{sec:security}

We sketch the proof here; the full proof is in \prooflocation. We rely on
Compass's~\cite{compass} batched-access ORAM proof for two
guarantees under Theorem~\ref{thm:opal-security}: ORAM executions with the same
public trace shape are indistinguishable, and tampering with
Merkle-authenticated ORAM storage is detected except with negligible
probability~\cite[Thm.~1; Supp.~Lemmas~4-5]{compass}.

For Condition~\textbf{\itshape 1\upshape)}, the only extra observable events are
fixed-size inter-enclave calls. For each request type, the number of such calls
and the ORAM batch sizes are fixed by the public parameters~$\Pi$; all other
computation is enclave-internal, and $\Dream()$ (\S\ref{sec:oblivious-dreaming})
performs only in-place maintenance on already materialized state, so it adds no
new event. Thus executions with the same leakage have the same public trace
shape, and indistinguishability follows from the Compass privacy guarantee and
authenticated encrypted channels for inter-enclave calls.

For Condition~\textbf{\itshape 2\upshape)}, the freshness and rollback mechanism
(\S\ref{sssec:freshness-rollback}) gives per-request freshness for the
ORAM-backed storage path and checkpoint-relative freshness for sealed client
state up to the public boundary~$\kappa_i$. Compass's authenticated-storage
guarantee protects the ORAM trees, the rollback record and checkpoint enforce
the intended recovery boundary, and replayed client requests are rejected by the
monotonic counter~$\mathsf{ctr}$. Inter-enclave calls run over the authenticated
encrypted channels assumed in the threat model.

\section{Evaluation}
\label{sec:evaluation}

\begin{figure*}[t!]
    \begin{minipage}[c]{0.76\linewidth}
        \centering
        \includegraphics[width=\linewidth]{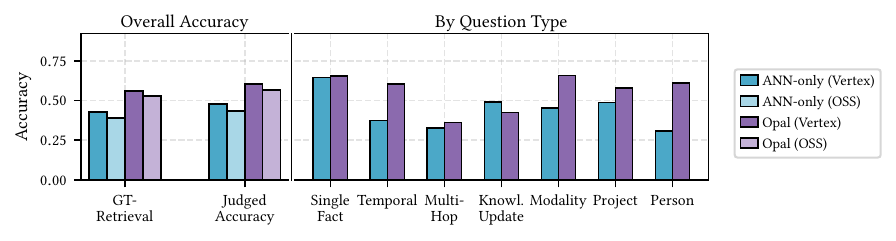}
    \end{minipage}%
    \hfill
    \begin{minipage}[c]{0.22\linewidth}
        \caption{Accuracy comparison for ANN-only and \sys{}. Left:
        aggregate GT-retrieval rate and judged accuracy across Vertex and
        open-weight stacks. Right: judged accuracy by question type (Vertex).}
        \label{fig:eval-accuracy}
    \end{minipage}
    \Description{Accuracy comparison showing aggregate and per-category results
    for ANN-only and \sys{}.}
\end{figure*}

We evaluate whether \sys{} achieves access-pattern privacy without sacrificing accuracy or efficiency. Concretely:
\begin{itemize}[leftmargin=*]
  \item Does KG-filtered search recover the accuracy that oblivious search
    sacrifices (\S\ref{subsec:accuracy})?
  \item What is the system overhead of the privacy guarantees
    (\S\ref{subsec:bandwidth}-\S\ref{subsec:query-latency})?
  \item Does oblivious dreaming maintain steady-state capacity and track
    non-oblivious accuracy (\S\ref{subsec:oblivious-dreaming})?
\end{itemize}

\subsection{Implementation\label{subsec:implementation}}

The \sys{} protocol is implemented in ${\sim}$6K lines of Rust\footnote{We omit Ring ORAM's XOR optimization~\cite{ringoram} (our ORAM is
disk-backed). Ring ORAM metadata is stored in the sealed client state.} and the synthetic
personal-data pipeline (\S\ref{sec:synthetic-personal-data}) in ${\sim}$7K lines
of Python. ANN search uses FAISS~\cite{faiss} with IVF-PQ indexing. We use
\texttt{aes-gcm}~\cite{aesgcm} for AES-128-GCM authenticated encryption,
\texttt{ring}~\cite{ringcrate} for SHA-256 ORAM integrity trees, HKDF-SHA256 key
derivation, and HMAC-SHA256 message authentication, and
\texttt{rand\_chacha}~\cite{randchacha} for cryptographic pseudorandom number
generation. Authenticated channels from clients to enclaves and between enclaves
use TLS via the Tinfoil SDK~\cite{tinfoilsdk}. Inter-enclave messages for query,
ingestion, and summarization are each padded to the maximum observed length;
sealed client state is padded to 64\,MiB.

\subsection{Experiment Setup\label{subsec:experiment-setup}}

\leadpara{Baselines.} We evaluate \sys{} against five baselines.
\textbf{ANN-only} performs ANN retrieval without KG filtering, isolating the
accuracy gain of structured search. \textbf{Graphiti}~\cite{zep} is a state-of-the-art
graph-memory system with query-dependent retrieval, configured with the same
core metadata dimensions as \sys{} (time, modality, person, project) via
Graphiti-native typed edges, representing the insecure accuracy ceiling. \textbf{LIRE}~\cite{spfresh} is the incremental rebalancing
policy from SPFresh~\cite{spfresh}, used as the non-oblivious accuracy baseline for oblivious
dreaming (\S\ref{subsec:oblivious-dreaming}). \textbf{Plaintext \sys{}} runs the
same pipeline over unencrypted storage with selective fetches, giving a lower
bound on overhead. \textbf{In-Memory \sys{}} loads the entire encrypted database
into the enclave on every query and read-modify-writes it on ingestion, matching
the privacy guarantee of \sys{} without ORAM; this reflects prior enclave-backed
designs that keep the full database inside the trusted
boundary~\cite{enclavedb,obliviate,hardidx}. We do not evaluate a secure
Graphiti baseline: its retrieval, provenance, and deduplication paths would each
require oblivious protection (conservatively 1.37M accesses per query, 5.62M per
ingestion), and keeping the full graph in-enclave instead requires
${\sim}$3.9\,GiB for embeddings alone, which we capture with our In-Memory baseline.

\leadpara{Deployment.} All enclaves are co-located on a single US~East host with
Tinfoil~\cite{tinfoil} Intel TDX confidential VMs~\cite{tinfoilcvm}, attested
via Tinfoil's attestation proxy~\cite{tinfoilshim} and verification
stack~\cite{tinfoilverification}. \sys{} runs in a Tinfoil
Container~\cite{tinfoilcontainers} (4~vCPUs, 4~GB RAM) with NVMe drives for untrusted persistent storage. The LLM
(\texttt{gpt-oss-20b}) and embedding (\texttt{nomic-embed-text}) enclaves are
Tinfoil's served model services~\cite{tinfoilmodelidentity} (16~vCPUs, 64~GB
RAM, NVIDIA~B200 GPU with confidential computing), routed via Tinfoil's model
router~\cite{tinfoilmodelrouter}. For accuracy we additionally evaluate a Google
Vertex backend (\texttt{gemini-3.1-flash-\allowbreak{}lite-preview} with
\texttt{text-embedding-005}) as a representative frontier stack. To capture realistic WAN latency, the client runs on a
GCP \texttt{n2-standard-8} VM (8~vCPUs, 32~GB RAM) in
\texttt{us-central1-b} (Iowa), communicating cross-country with the US~East
enclaves; network bandwidth is
0.17~Gb/s down and 0.12~Gb/s up. RAM-disk bandwidth on the \sys{} controller is
3.48~GB/s read and 0.99~GB/s write.

\leadpara{Parameters.} We use standard retrieval parameters. Raw data is chunked
into 50-word segments with 10-word overlap~\cite{chunksize,ragbestpractices}. We
retrieve $n{=}200$ ANN candidates, rerank~\cite{dpr}, and pass the top $K{=}10$
to the LLM~\cite{lostinthemiddle}. We use standard ORAM
configurations~\cite{compass} with an $L{=}14$ binary tree. Summarization fires every $T{=}5$ ingestions~\cite{yuanecm}
(\S\ref{subsec:summarization}), with ORAM capacity sized for a one-year
retention target~\cite{oncefound, emaillifetime, refindfilesemailsweb}. One year of synthetic data yields 1.3K
evaluation queries and 190K artifacts (50K emails, 126K messages, 10K documents,
3K ambient notes, 873 meetings), producing 257K chunks and 51K summaries.

\subsection{Accuracy\label{subsec:accuracy}}

Figure~\ref{fig:eval-accuracy} reports \emph{GT-retrieval rate} (whether the
top-$K$ includes ground-truth chunks recorded at question generation) and
\emph{judged accuracy} (LLM judges whether retrieved chunks contain the correct
answer). KG-filtered search improves judged accuracy over ANN-only by 13\,pp
(Figure~\ref{fig:eval-accuracy}) and matches the insecure Graphiti baseline. We evaluate a
fully-populated store with one year of synthetic data
(\S\ref{sec:synthetic-personal-data}), leaving oblivious dreaming's accuracy
implications to \S\ref{subsec:oblivious-dreaming}.

On Vertex, \sys{} reaches 60.5\% judged accuracy, surpassing even Graphiti
(56.9\%), which has full metadata access and query-dependent retrieval. As
expected, OSS results are lower due to open-source models underperforming
frontier models, but the relative gains are consistent. \sys{}'s gains over
ANN-only are largest on temporal (23-24\,pp), person (27-30\,pp), and modality
(16-20\,pp) questions, where KG predicates add the most signal. Prior memory benchmarks~\cite{longmemeval,locomo,memorybank} place frontier
accuracy at 50-70\%; \sys{} matches this range despite our harder multi-modal
setting.

\subsection{Bandwidth\label{subsec:bandwidth}}

The key efficiency advantage of \sys{} is asymptotic: ORAM enables selective
fetches with $O(\log N)$ bandwidth, whereas matching \sys{}'s storage-side
privacy without ORAM requires reading the full encrypted database on every
access, giving $O(N)$ bandwidth. Figure~\ref{fig:eval-bandwidth} shows this gap
widening as database size grows.

At 524K entries (the exact maximum for our $L{=}14$ ORAM tree), \sys{} uses 1.71\,MiB per
query and 0.09\,MiB per ingestion, while the In-Memory baseline uses 4.55\,GiB
and 9.1\,GiB respectively. Across the 1K-to-524K sweep, \sys{} uses 12-2{,}700$\times$ less query
bandwidth than the In-Memory baseline.

\begin{figure}[tbp]
    \centering
    \includegraphics[width=\linewidth]{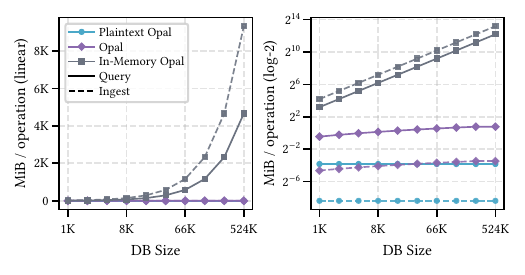}
    \caption{Online bandwidth per query and ingestion operation for \sys{},
    In-Memory \sys{}, and Plaintext \sys{} across database sizes. Left: linear
    scale; right: log-2 scale.} \Description{Bandwidth comparison across systems
    as corpus size increases.}
    \label{fig:eval-bandwidth}
\end{figure}

\begin{figure*}[t!]
    \begin{minipage}[c]{0.74\textwidth}
        \centering
        \includegraphics[width=\linewidth]{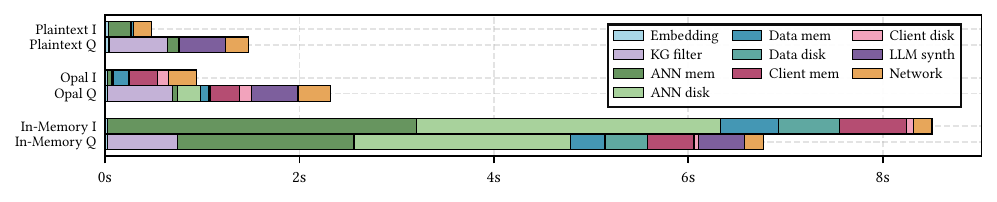}
    \end{minipage}%
    \hfill
    \begin{minipage}[c]{0.22\textwidth}
        \caption{Single-client latency for query (Q) and ingestion (I).
        ANN denotes embeddings, Data raw encrypted chunks, and Client
        sealed per-user state; each split shows disk I/O versus
        in-memory compute.}
        \Description{Latency breakdown for query and ingest paths across systems.}
        \label{fig:eval-latency}
    \end{minipage}
\end{figure*}

\subsection{Throughput and Cost\label{subsec:throughput}}

\S\ref{subsec:accuracy} shows that semantic search alone does not meet our
accuracy objectives. We therefore apply the same KG-filter traversal to the
baselines in all performance experiments and vary only the storage path, so the
comparisons reflect secure retrieval overhead rather than a lower-accuracy
pipeline. Table~\ref{tab:throughput} summarizes serving performance and cost.

\leadpara{Throughput.}
We measure sustained multi-user throughput by packing concurrent users into
a single enclave until enclave RAM is full; as each user's request
completes, its state is evicted and a new user is admitted, maintaining
maximal occupancy throughout. This setup inherently captures memory pressure
from co-resident users and reflects steady-state serving capacity. Each
user's sealed client state (KG, ANN metadata, and ORAM clients)
is sealed and evicted to disk. \sys{}
sustains 16.41~ingests/s and 1.89~queries/s, while In-Memory
\sys{} sustains 0.557~ingests/s and 0.146~queries/s
(29.5$\times$ and 13.0$\times$ higher, respectively). The
difference arises because In-Memory \sys{} must read the entire encrypted
database on every query and read-modify-write it on every ingestion to match
\sys{}'s privacy guarantee, whereas \sys{}'s ORAM accesses touch only
$O(\log N)$ blocks. Dividing the weighted throughput by the per-user request rate under the
synthetic workload (257K ingests and 1.3K queries per user-year), a single
enclave can serve 1{,}932 users for \sys{} versus 67.0 for In-Memory
\sys{}. Since
serving scales horizontally, this directly determines how many enclaves an
application provider must run.
One might ask whether In-Memory \sys{} could keep a single user resident to
avoid repeated loads. However, the full encrypted database is orders of
magnitude larger than \sys{}'s per-user state, so a resident user monopolizes
most of enclave RAM yet at the per-user request rate from our synthetic
workload sits idle over 98\% of the time. Production serving systems evict idle
per-user state for exactly this reason~\cite{orleans, sessionstate}, which is
why both \sys{} and In-Memory \sys{} use this model in our evaluation.

\leadpara{Cost.}
The primary metric is annual application infrastructure cost: enclave compute plus
persistent storage. We price compute using Tinfoil's public
rates~\cite{tinfoilcontainerspricing} and storage at \$0.01/GiB-month based on
published object-storage tiers~\cite{gcpcloudstoragepricing}. At one million
users, \sys{}'s infrastructure cost is 15.0$\times$ lower than
In-Memory \sys{} (\$1.40M vs \$21.13M annually). The gap is
smaller than the throughput ratio because ORAM's tree structure increases
\sys{}'s per-user storage footprint by ${\sim}$3.0$\times$, but
compute dominates total infrastructure cost, so \sys{} retains a large
advantage.

For a rough end-to-end estimate, we add embedding and LLM inference costs using
Tinfoil's token rates~\cite{tinfoilgptosspricing,tinfoilnomicpricing}. This
model/API cost is identical across systems since both use the same retrieval
pipeline, so it compresses the relative gap without changing the ranking.
Including it yields \$5.21M for \sys{} versus \$24.93M for
In-Memory \sys{} annually (4.79$\times$ lower overall). We focus on infrastructure cost because capability-adjusted inference cost
is halving roughly every two months~\cite{epochinferencecost,
a16zllmflation, haiindex2025}, while cloud compute and storage pricing
improves at ${\sim}$15\% per VM generation~\cite{awsec2m6i, awsec2m7i}.

\begin{table}[t]
\centering
\small
\begin{tabular}{@{}l c c c@{}}
\toprule
\textbf{System} & \textbf{Tput. (ops/s)} & \textbf{Users/Encl.} & \textbf{Cost (\$/yr)} \\
\midrule
\sys{} & 15.82 & 1{,}932 & \$1.40M / \$5.21M \\
In-Memory & 0.549 & 67.0 & \$21.13M / \$24.93M \\
\midrule
\textbf{Ratio} & 28.8$\times$ & 28.8$\times$ & 15.0$\times$ / 4.79$\times$ \\
\bottomrule
\end{tabular}
\caption{Serving metrics at one million users. Cost columns show infrastructure
cost (compute + storage) and total cost (including model/API).}
\label{tab:throughput}
\end{table}

\subsection{Latency\label{subsec:query-latency}}

Figure~\ref{fig:eval-latency} breaks down single-client latency by stage for
query and ingestion, including client-side request/response overhead. All
reported query and ingestion latencies in Figure~\ref{fig:eval-latency} are
averages over 10 consecutive runs. On
queries, \sys{} adds 1.57$\times$ overhead over Plaintext \sys{}
(2.32\,s vs 1.48\,s) and is 2.92$\times$ faster than In-Memory \sys{}
(6.78\,s). LLM calls still dominate the online path: KG filtering and answer
synthesis account for 1.15\,s of the 2.32\,s query, while ORAM fetches and
client-state handling contribute most of the remainder. Ingestion shows the
sharper systems gap: \sys{} is 1.96$\times$ slower than Plaintext \sys{} but
9.05$\times$ faster than In-Memory \sys{}, at 0.94\,s latency versus 8.51\,s
for In-Memory \sys{}. In-Memory \sys{} remains dominated by full encrypted-store
read-modify-write costs on ingestion; queries are faster than ingestion because
it can read the store, answer, and release it without writing it back to disk.

\subsection{Oblivious Dreaming\label{subsec:oblivious-dreaming}}

Oblivious dreaming (\S\ref{sec:oblivious-dreaming}) piggybacks all maintenance
on ordinary ORAM accesses, operating only on blocks incidentally brought into
the stash, so the storage trace reveals nothing about when
or whether maintenance occurred. We evaluate whether this deferred approach
degrades retrieval quality by replaying three years of synthetic data with a
one-year TTL, so the store reaches steady state after year~1 and remains at
capacity thereafter (Figure~\ref{fig:eval-dreaming}). Adaptive retention
(\S\ref{subsec:adaptive-retention}) achieves tight lifetime control: expired
items have mean lifetime 0.99 years (std 0.07). We configure eager
LIRE~\cite{spfresh} with the same one-year TTL policy for a fair comparison.
Sleepy rebalancing (\S\ref{subsec:sleepy-rebalancing}) achieves 50.6\% judged
accuracy versus 50.5\% for eager LIRE~\cite{spfresh} on the Vertex stack,
matching the non-oblivious baseline. The ORAM stash stays bounded throughout
(max 70 blocks), and of 2.22M pending IVF reassignments evaluated during normal
accesses, only 5.7\% actually required a cluster change.

To isolate the cost of sleepy rebalancing, we run a no-expiration control that
keeps all three years. Both curves decline over time because later queries are
intrinsically harder in a denser corpus. The 7.33\,pp gap breaks down into
5.88\,pp from deleting old chunks and just 1.45\,pp from ANN index drift:
nearly all the loss comes from intentionally discarding data. Memory compression
(\S\ref{subsec:summarization}) proves effective: summaries are only 16.7\% of
the store but fill 39.6\% of top-$K$ slots, showing they capture what matters
for retrieval.

\begin{figure}[tbp]
    \centering
    \includegraphics[width=0.95\linewidth]{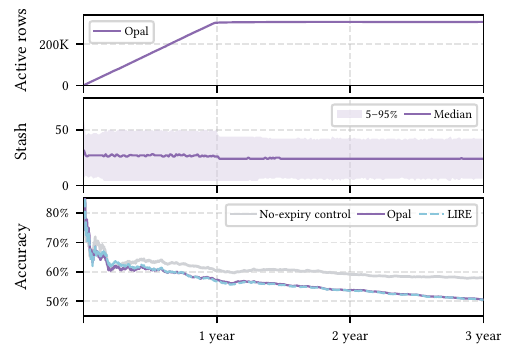}
    \caption{Three-year replay of oblivious dreaming. (A)~Active entries rise to
    capacity and then remain stable. (B)~ORAM stash size stays bounded.
    (C)~\sys{} tracks LIRE closely on judged accuracy over time.}
    \Description{Three-panel oblivious-dreaming figure showing active entries,
    stash size, and judged accuracy over time, including a no-expiration
    control.}
    \label{fig:eval-dreaming}
\end{figure}

\section{Related work\label{sec:related-work}}

\leadpara{Confidential AI systems.} \cite{memtrust, trustedaiagents,
privatecloudcompute, googletrustedexecution, anthropictrustedexecution,
openanonymity} protect model inference but not persistent memory. Client-local
memory~\cite{windowsrecall, privacypreservingpersonalassistant, privgemo} stores
personal data on the user's device and re-uploads relevant context on every
request, but cannot match cloud capacity, durability, or multi-device access
(\S\ref{sec:intro})~\cite{stuffiveseen, mylifebits, diskfailuresrealworld}.
Deployed assistants therefore store memory cloud-natively~\cite{openaimemory,
geminipersonalintelligence, privatecloudcompute, sarathiserve, cecollm}, and
\sys{} targets this setting. Concurrent work~\cite{memtrust} offloads cold data
to untrusted storage and uses $k$-anonymity, which lacks provable guarantees and enables intersection
attacks~\cite{hangwithyourbuddies, quantifyingwebsearchprivacy,
leakageabuseattacksagainstsearchableencryption}.

\leadpara{Long-term agent memory.} A rich area spanning
research~\cite{generativeagents, mem0, memorybank, memgpt, memos, amem,
beyonddialoguetime, helloagain, secom, rmm, thinkinmemory, raptor, reflexion,
mem1, nemori, agentworkflowmemory, voyager, zep, hipporag, magma, arigraph,
editablememorygraph, sgmem, graphrag, lightrag, multimetarag} and deployed
products~\cite{googleastra, geminipersonalintelligence, notebooklm, memai,
chatgpt, openaimemory, awsagentcorememory, limitlessai, personalai}, confirming
demand. \sys{} brings security to this space.

\leadpara{Synthetic personal data and memory benchmarks.} \cite{personatrace, privasis, converse,
personabench, mmlifelong, longmemeval, locomo, perltqa,
timelineqa, openlifelogqa, atmbench, realmem, amemgym, amabench, esmemeval,
halumem, memoryagentbench, lifebench, astrabench, memsim, beyondgoldfish,
odysseybench, appworld} provide valuable evaluation methodology but target
dialogues, agent traces, or video streams. Our pipeline extends this work to multi-modal
personal corpora with calibrated temporal dynamics
(\S\ref{sec:synthetic-personal-data}).

\leadpara{Private retrieval.} Embeddings leak nearly as much as query
text~\cite{informationleakageinembeddings, textembeddingsrevealalmostastext,
sentenceembeddingleaksmoreinformationthanyouexpect} (\S\ref{sec:intro}), making
access-pattern hiding essential. Prior work addresses private semantic
search~\cite{compass, pacmann, panther, tiptoe, remoterag} and oblivious
queries~\cite{oblidb, enigmap, snoopy}, but pure semantic search is insufficient
for agentic memory (\S\ref{sec:intro}). \sys{}'s KG-filtered retrieval addresses
this gap.

\leadpara{Dynamic ANN.} \cite{freshdiskann, spfresh, quake, lsmvec, crackivf}
supports update-heavy workloads as in \sys{}, but does not hide access patterns.
\sys{}'s oblivious dreaming (\S\ref{sec:oblivious-dreaming}) adds oblivious
index maintenance to this space. 

\leadpara{Filtered ANN.} \cite{diskann, filtereddiskann, acorn, windowfilters,
caps, hqi, analyticdbv, mansw, irangegraph, pase, vbase, airship, serf,
surveyfann, pinecone, weaviate, spann, milvus, compass, ung, nhq, hqann}
restricts search to subsets matching metadata predicates, typically for speed.
KG-filtered search (\S\ref{sec:kg-filtered-search}) is a special case of ANN
pre-filtering; \sys{} uses it for both performance \emph{and} accuracy.

\leadpara{Trusted storage.} Via TEE-I/O~\cite{reallysafefastconfidentialio,
dmtfspdm, pcietdisp, pcieide} hides storage commands by trusting the device.
However, this significantly enlarges the TCB~\cite{inteltdxconnectteeio}, and
recent attacks on confidential-I/O stacks~\cite{selfencryptingdeception,
formalanalysisofspdm, certidevu404544} demonstrate that extending trust to
storage hardware is premature. In \sys{}, storage remains untrusted, and access
patterns are hidden cryptographically via ORAM~\cite{oram96, oram90,
binarytreeoram, pathoram, ringoram}.

\section{Conclusion}
\label{sec:conclusion}

\sys{} is a private memory system for personal AI that keeps data-dependent
reasoning inside trusted enclaves while exposing only oblivious access patterns
to storage. KG-filtered search and oblivious dreaming enable accurate, efficient
retrieval without leaking access patterns, achieving an order of magnitude
higher throughput and lower infrastructure cost than secure baselines. \sys{}
takes a significant step
towards private long-term memory at scale.

\ifpearl\else \leadpara{Acknowledgments.} We thank Wei Huang and Marci Meingast
for early discussions and product-side insights at Google DeepMind and Google
AI. We thank Tanya Verma and Jules Drean at Tinfoil for their responsive and
ongoing confidential VM support. We also thank the students in the Sky security
group, especially Corban Villa, Diogo Antunes, and Yiping Ma, for their
feedback. This work is supported by the Amazon AI PhD Fellowship and by gifts
from Accenture, Amazon, AMD, Anyscale, Broadcom, Google, IBM, Intel, Intesa
Sanpaolo, Lambda, Lightspeed, Mibura, NVIDIA, Samsung SDS, and SAP.
\fi

\bibliographystyle{ACM-Reference-Format}
\bibliography{references}


\begin{thebibliography}{272}


\ifx \showCODEN    \undefined \def \showCODEN     #1{\unskip}     \fi
\ifx \showISBNx    \undefined \def \showISBNx     #1{\unskip}     \fi
\ifx \showISBNxiii \undefined \def \showISBNxiii  #1{\unskip}     \fi
\ifx \showISSN     \undefined \def \showISSN      #1{\unskip}     \fi
\ifx \showLCCN     \undefined \def \showLCCN      #1{\unskip}     \fi
\ifx \shownote     \undefined \def \shownote      #1{#1}          \fi
\ifx \showarticletitle \undefined \def \showarticletitle #1{#1}   \fi
\ifx \showURL      \undefined \def \showURL       {\relax}        \fi
\providecommand\bibfield[2]{#2}
\providecommand\bibinfo[2]{#2}
\providecommand\natexlab[1]{#1}
\providecommand\showeprint[2][]{arXiv:#2}

\bibitem[1x({[n.\,d.]})]%
        {1x}
 \bibinfo{year}{[n.\,d.]}\natexlab{}.
\newblock \bibinfo{title}{1x}.
\newblock \bibinfo{howpublished}{\url{https://www.1x.tech/}}.
\newblock


\bibitem[cha({[n.\,d.]})]%
        {chatgpt}
 \bibinfo{year}{[n.\,d.]}\natexlab{}.
\newblock \bibinfo{title}{ChatGPT}.
\newblock \bibinfo{howpublished}{\url{https://chatgpt.com/}}.
\newblock


\bibitem[goo({[n.\,d.]})]%
        {googleastra}
 \bibinfo{year}{[n.\,d.]}\natexlab{}.
\newblock \bibinfo{title}{Google Project Astra}.
\newblock
  \bibinfo{howpublished}{\url{https://deepmind.google/models/project-astra/}}.
\newblock


\bibitem[ref({[n.\,d.]})]%
        {refindfilesemailsweb}
 \bibinfo{year}{[n.\,d.]}\natexlab{}.
\newblock \showarticletitle{How Do People Re-find Files, Emails and Web Pages?}
\newblock \bibinfo{journal}{\emph{iConference 2014 Proceedings}}
  (\bibinfo{year}{[n.\,d.]}).
\newblock
\href{https://doi.org/10.9776/14136}{doi:\nolinkurl{10.9776/14136}}


\bibitem[mem({[n.\,d.]})]%
        {memai}
 \bibinfo{year}{[n.\,d.]}\natexlab{}.
\newblock \bibinfo{title}{Mem.ai}.
\newblock \bibinfo{howpublished}{\url{https://get.mem.ai/}}.
\newblock


\bibitem[not({[n.\,d.]})]%
        {notebooklm}
 \bibinfo{year}{[n.\,d.]}\natexlab{}.
\newblock \bibinfo{title}{NotebookLM}.
\newblock \bibinfo{howpublished}{\url{https://notebooklm.google/}}.
\newblock


\bibitem[per({[n.\,d.]})]%
        {personalai}
 \bibinfo{year}{[n.\,d.]}\natexlab{}.
\newblock \bibinfo{title}{Personal.ai}.
\newblock \bibinfo{howpublished}{\url{https://www.personal.ai/}}.
\newblock


\bibitem[pin({[n.\,d.]})]%
        {pinecone}
 \bibinfo{year}{[n.\,d.]}\natexlab{}.
\newblock \bibinfo{title}{Pinecone}.
\newblock \bibinfo{howpublished}{\url{https://www.pinecone.io/}}.
\newblock


\bibitem[sig({[n.\,d.]})]%
        {signal}
 \bibinfo{year}{[n.\,d.]}\natexlab{}.
\newblock \bibinfo{title}{{Signal}}.
\newblock \bibinfo{howpublished}{\url{https://signal.org/}}.
\newblock


\bibitem[sun({[n.\,d.]})]%
        {sunday}
 \bibinfo{year}{[n.\,d.]}\natexlab{}.
\newblock \bibinfo{title}{Sunday Robotics}.
\newblock \bibinfo{howpublished}{\url{https://www.sunday.ai/}}.
\newblock


\bibitem[wea({[n.\,d.]})]%
        {weaviate}
 \bibinfo{year}{[n.\,d.]}\natexlab{}.
\newblock \bibinfo{title}{Weaviate}.
\newblock \bibinfo{howpublished}{\url{https://weaviate.io/}}.
\newblock


\bibitem[wha({[n.\,d.]})]%
        {whatsapp}
 \bibinfo{year}{[n.\,d.]}\natexlab{}.
\newblock \bibinfo{title}{{WhatsApp}}.
\newblock \bibinfo{howpublished}{\url{https://www.whatsapp.com/}}.
\newblock


\bibitem[win({[n.\,d.]})]%
        {windowsrecall}
 \bibinfo{year}{[n.\,d.]}\natexlab{}.
\newblock \bibinfo{title}{Windows Copilot "Recall"}.
\newblock
  \bibinfo{howpublished}{\url{https://learn.microsoft.com/en-us/windows/ai/recall/}}.
\newblock


\bibitem[mes(2024)]%
        {messenger}
 \bibinfo{year}{2024}\natexlab{}.
\newblock \bibinfo{title}{{End-to-end encryption on Messenger explained}}.
\newblock
  \bibinfo{howpublished}{\url{https://about.fb.com/news/2024/03/end-to-end-encryption-on-messenger-explained/}}.
\newblock


\bibitem[Agrawal et~al\mbox{.}(2024)]%
        {sarathiserve}
\bibfield{author}{\bibinfo{person}{Amey Agrawal}, \bibinfo{person}{Nitin
  Kedia}, \bibinfo{person}{Ashish Panwar}, \bibinfo{person}{Jayashree Mohan},
  \bibinfo{person}{Nipun Kwatra}, \bibinfo{person}{Bhargav~S. Gulavani},
  \bibinfo{person}{Alexey Tumanov}, {and} \bibinfo{person}{Ramachandran
  Ramjee}.} \bibinfo{year}{2024}\natexlab{}.
\newblock \showarticletitle{Taming Throughput-Latency Tradeoff in {LLM}
  Inference with Sarathi-Serve}. In \bibinfo{booktitle}{\emph{{OSDI}}}.
  \bibinfo{publisher}{{USENIX} Association}, \bibinfo{pages}{117--134}.
\newblock


\bibitem[Ahmad et~al\mbox{.}(2018)]%
        {obliviate}
\bibfield{author}{\bibinfo{person}{Adil Ahmad}, \bibinfo{person}{Kyungtae Kim},
  \bibinfo{person}{Muhammad~Ihsanulhaq Sarfaraz}, {and}
  \bibinfo{person}{Byoungyoung Lee}.} \bibinfo{year}{2018}\natexlab{}.
\newblock \showarticletitle{{OBLIVIATE:} {A} Data Oblivious Filesystem for
  Intel {SGX}}. In \bibinfo{booktitle}{\emph{{NDSS}}}. \bibinfo{publisher}{The
  Internet Society}.
\newblock


\bibitem[Ai et~al\mbox{.}(2017)]%
        {emailsearch}
\bibfield{author}{\bibinfo{person}{Qingyao Ai}, \bibinfo{person}{Susan~T.
  Dumais}, \bibinfo{person}{Nick Craswell}, {and} \bibinfo{person}{Daniel~J.
  Liebling}.} \bibinfo{year}{2017}\natexlab{}.
\newblock \showarticletitle{Characterizing Email Search using Large-scale
  Behavioral Logs and Surveys}. In \bibinfo{booktitle}{\emph{{WWW}}}.
  \bibinfo{publisher}{{ACM}}, \bibinfo{pages}{1511--1520}.
\newblock


\bibitem[Alrashed et~al\mbox{.}(2018)]%
        {emaillifetime}
\bibfield{author}{\bibinfo{person}{Tarfah Alrashed},
  \bibinfo{person}{Ahmed~Hassan Awadallah}, {and} \bibinfo{person}{Susan~T.
  Dumais}.} \bibinfo{year}{2018}\natexlab{}.
\newblock \showarticletitle{The Lifetime of Email Messages: {A} Large-Scale
  Analysis of Email Revisitation}. In \bibinfo{booktitle}{\emph{{CHIIR}}}.
  \bibinfo{publisher}{{ACM}}, \bibinfo{pages}{120--129}.
\newblock


\bibitem[{Amazon Web Services}(2024)]%
        {awsnitrosidechannels}
\bibfield{author}{\bibinfo{person}{{Amazon Web Services}}.}
  \bibinfo{year}{2024}\natexlab{}.
\newblock \bibinfo{title}{The Security Design of the {AWS} Nitro System}.
\newblock
  \bibinfo{howpublished}{\url{https://docs.aws.amazon.com/whitepapers/latest/security-design-of-aws-nitro-system/security-design-of-aws-nitro-system.html}}.
\newblock


\bibitem[{Amazon Web Services}(2026a)]%
        {awsec2m6i}
\bibfield{author}{\bibinfo{person}{{Amazon Web Services}}.}
  \bibinfo{year}{2026}\natexlab{a}.
\newblock \bibinfo{title}{Amazon {EC2} {M6i} Instances}.
\newblock
  \bibinfo{howpublished}{\url{https://aws.amazon.com/ec2/instance-types/m6i/}}.
\newblock


\bibitem[{Amazon Web Services}(2026b)]%
        {awsec2m7i}
\bibfield{author}{\bibinfo{person}{{Amazon Web Services}}.}
  \bibinfo{year}{2026}\natexlab{b}.
\newblock \bibinfo{title}{Amazon {EC2} {M7i} Instances}.
\newblock
  \bibinfo{howpublished}{\url{https://aws.amazon.com/ec2/instance-types/m7i/}}.
\newblock


\bibitem[Anderson and Schooler(1991)]%
        {anderson1991}
\bibfield{author}{\bibinfo{person}{John~R. Anderson} {and}
  \bibinfo{person}{Lael~J. Schooler}.} \bibinfo{year}{1991}\natexlab{}.
\newblock \showarticletitle{Reflections of the Environment in Memory}.
\newblock \bibinfo{journal}{\emph{Psychological Science}} \bibinfo{volume}{2},
  \bibinfo{number}{6} (\bibinfo{year}{1991}), \bibinfo{pages}{396--408}.
\newblock
\href{https://doi.org/10.1111/j.1467-9280.1991.tb00174.x}{doi:\nolinkurl{10.1111/j.1467-9280.1991.tb00174.x}}


\bibitem[Angel et~al\mbox{.}(2023)]%
        {nimble}
\bibfield{author}{\bibinfo{person}{Sebastian Angel}, \bibinfo{person}{Aditya
  Basu}, \bibinfo{person}{Weidong Cui}, \bibinfo{person}{Trent Jaeger},
  \bibinfo{person}{Stella Lau}, \bibinfo{person}{Srinath T.~V. Setty}, {and}
  \bibinfo{person}{Sudheesh Singanamalla}.} \bibinfo{year}{2023}\natexlab{}.
\newblock \showarticletitle{Nimble: Rollback Protection for Confidential Cloud
  Services}. In \bibinfo{booktitle}{\emph{{OSDI}}}.
  \bibinfo{publisher}{{USENIX} Association}, \bibinfo{pages}{193--208}.
\newblock


\bibitem[Anokhin et~al\mbox{.}(2025)]%
        {arigraph}
\bibfield{author}{\bibinfo{person}{Petr Anokhin}, \bibinfo{person}{Nikita
  Semenov}, \bibinfo{person}{Artyom~Y. Sorokin}, \bibinfo{person}{Dmitry
  Evseev}, \bibinfo{person}{Andrey Kravchenko}, \bibinfo{person}{Mikhail
  Burtsev}, {and} \bibinfo{person}{Evgeny Burnaev}.}
  \bibinfo{year}{2025}\natexlab{}.
\newblock \showarticletitle{AriGraph: Learning Knowledge Graph World Models
  with Episodic Memory for {LLM} Agents}. In
  \bibinfo{booktitle}{\emph{{IJCAI}}}. \bibinfo{publisher}{ijcai.org},
  \bibinfo{pages}{12--20}.
\newblock


\bibitem[Anthropic(2025)]%
        {anthropictrustedexecution}
\bibfield{author}{\bibinfo{person}{Anthropic}.}
  \bibinfo{year}{2025}\natexlab{}.
\newblock \bibinfo{title}{Confidential Inference via Trusted Virtual Machines}.
\newblock
  \bibinfo{howpublished}{\url{https://www.anthropic.com/research/confidential-inference-trusted-vms}}.
\newblock


\bibitem[Appenzeller(2024)]%
        {a16zllmflation}
\bibfield{author}{\bibinfo{person}{Guido Appenzeller}.}
  \bibinfo{year}{2024}\natexlab{}.
\newblock \bibinfo{title}{Welcome to {LLMflation}: {LLM} Inference Cost Is
  Going Down Fast}.
\newblock
  \bibinfo{howpublished}{\url{https://a16z.com/llmflation-llm-inference-cost/}}.
\newblock


\bibitem[Apple(2024)]%
        {privatecloudcompute}
\bibfield{author}{\bibinfo{person}{Apple}.} \bibinfo{year}{2024}\natexlab{}.
\newblock \showarticletitle{Private Cloud Compute}.
\newblock
  \bibinfo{howpublished}{\url{https://security.apple.com/blog/private-cloud-compute/}}.
\newblock \bibinfo{journal}{\emph{Apple Security Blog}} (\bibinfo{year}{2024}).
\newblock


\bibitem[Argilla(2024)]%
        {finepersonas}
\bibfield{author}{\bibinfo{person}{Argilla}.} \bibinfo{year}{2024}\natexlab{}.
\newblock \bibinfo{title}{{FinePersonas}: Synthetic Email Conversations}.
\newblock
  \bibinfo{howpublished}{\url{https://huggingface.co/datasets/argilla/FinePersonas-Synthetic-Email-Conversations}}.
\newblock


\bibitem[Baylor et~al\mbox{.}(2017)]%
        {tfx}
\bibfield{author}{\bibinfo{person}{Denis Baylor}, \bibinfo{person}{Eric Breck},
  \bibinfo{person}{Heng{-}Tze Cheng}, \bibinfo{person}{Noah Fiedel},
  \bibinfo{person}{Chuan~Yu Foo}, \bibinfo{person}{Zakaria Haque},
  \bibinfo{person}{Salem Haykal}, \bibinfo{person}{Mustafa Ispir},
  \bibinfo{person}{Vihan Jain}, \bibinfo{person}{Levent Koc},
  \bibinfo{person}{Chiu~Yuen Koo}, \bibinfo{person}{Lukasz Lew},
  \bibinfo{person}{Clemens Mewald}, \bibinfo{person}{Akshay~Naresh Modi},
  \bibinfo{person}{Neoklis Polyzotis}, \bibinfo{person}{Sukriti Ramesh},
  \bibinfo{person}{Sudip Roy}, \bibinfo{person}{Steven~Euijong Whang},
  \bibinfo{person}{Martin Wicke}, \bibinfo{person}{Jarek Wilkiewicz},
  \bibinfo{person}{Xin Zhang}, {and} \bibinfo{person}{Martin Zinkevich}.}
  \bibinfo{year}{2017}\natexlab{}.
\newblock \showarticletitle{{TFX:} {A} TensorFlow-Based Production-Scale
  Machine Learning Platform}. In \bibinfo{booktitle}{\emph{{KDD}}}.
  \bibinfo{publisher}{{ACM}}, \bibinfo{pages}{1387--1395}.
\newblock


\bibitem[{BBC News}(2026)]%
        {metaraybanreview}
\bibfield{author}{\bibinfo{person}{{BBC News}}.}
  \bibinfo{year}{2026}\natexlab{}.
\newblock \bibinfo{title}{Meta smart glasses: Workers reviewing intimate
  content}.
\newblock
  \bibinfo{howpublished}{\url{https://www.bbc.com/news/articles/c0q33nvj0qpo}}.
\newblock


\bibitem[Bernstein et~al\mbox{.}(2014)]%
        {orleans}
\bibfield{author}{\bibinfo{person}{Philip~A. Bernstein},
  \bibinfo{person}{Sergey Bykov}, \bibinfo{person}{Alan Geller},
  \bibinfo{person}{Gabriel Kliot}, {and} \bibinfo{person}{Jorgen Thelin}.}
  \bibinfo{year}{2014}\natexlab{}.
\newblock \bibinfo{booktitle}{\emph{Orleans: Distributed Virtual Actors for
  Programmability and Scalability}}.
\newblock \bibinfo{type}{{T}echnical {R}eport} MSR-TR-2014-41.
  \bibinfo{institution}{Microsoft Research}.
\newblock


\bibitem[Bhat et~al\mbox{.}(2025)]%
        {chunksize}
\bibfield{author}{\bibinfo{person}{Sinchana~Ramakanth Bhat},
  \bibinfo{person}{Max Rudat}, \bibinfo{person}{Jannis Spiekermann}, {and}
  \bibinfo{person}{Nicolas Flores-Herr}.} \bibinfo{year}{2025}\natexlab{}.
\newblock \showarticletitle{Rethinking Chunk Size For Long-Document Retrieval:
  A Multi-Dataset Analysis}.
\newblock \bibinfo{journal}{\emph{arXiv preprint arXiv:2505.21700}}
  (\bibinfo{year}{2025}).
\newblock


\bibitem[Bian et~al\mbox{.}(2026)]%
        {realmem}
\bibfield{author}{\bibinfo{person}{Haonan Bian}, \bibinfo{person}{Zhiyuan Yao},
  \bibinfo{person}{Sen Hu}, \bibinfo{person}{Zishan Xu},
  \bibinfo{person}{Shaolei Zhang}, \bibinfo{person}{Yifu Guo},
  \bibinfo{person}{Ziliang Yang}, \bibinfo{person}{Xueran Han},
  \bibinfo{person}{Huacan Wang}, {and} \bibinfo{person}{Ronghao Chen}.}
  \bibinfo{year}{2026}\natexlab{}.
\newblock \showarticletitle{RealMem: Benchmarking LLMs in Real-World
  Memory-Driven Interaction}.
\newblock \bibinfo{journal}{\emph{arXiv preprint arXiv:2601.06966}}
  (\bibinfo{year}{2026}).
\newblock


\bibitem[Bittau et~al\mbox{.}(2017)]%
        {prochlo}
\bibfield{author}{\bibinfo{person}{Andrea Bittau}, \bibinfo{person}{{\'{U}}lfar
  Erlingsson}, \bibinfo{person}{Petros Maniatis}, \bibinfo{person}{Ilya
  Mironov}, \bibinfo{person}{Ananth Raghunathan}, \bibinfo{person}{David Lie},
  \bibinfo{person}{Mitch Rudominer}, \bibinfo{person}{Ushasree Kode},
  \bibinfo{person}{Julien Tinn{\'{e}}s}, {and} \bibinfo{person}{Bernhard
  Seefeld}.} \bibinfo{year}{2017}\natexlab{}.
\newblock \showarticletitle{Prochlo: Strong Privacy for Analytics in the
  Crowd}. In \bibinfo{booktitle}{\emph{{SOSP}}}. \bibinfo{publisher}{{ACM}},
  \bibinfo{pages}{441--459}.
\newblock


\bibitem[Bodea et~al\mbox{.}(2025)]%
        {trustedaiagents}
\bibfield{author}{\bibinfo{person}{Teofil Bodea}, \bibinfo{person}{Masanori
  Misono}, \bibinfo{person}{Julian Pritzi}, \bibinfo{person}{Patrick Sabanic},
  \bibinfo{person}{Thore Sommer}, \bibinfo{person}{Harshavardhan Unnibhavi},
  \bibinfo{person}{David Schall}, \bibinfo{person}{Nuno Santos},
  \bibinfo{person}{Dimitrios Stavrakakis}, {and} \bibinfo{person}{Pramod
  Bhatotia}.} \bibinfo{year}{2025}\natexlab{}.
\newblock \showarticletitle{Trusted AI Agents in the Cloud}.
\newblock \bibinfo{journal}{\emph{arXiv preprint arXiv:2512.05951}}
  (\bibinfo{year}{2025}).
\newblock


\bibitem[Borrello et~al\mbox{.}(2022)]%
        {aepicleak}
\bibfield{author}{\bibinfo{person}{Pietro Borrello}, \bibinfo{person}{Andreas
  Kogler}, \bibinfo{person}{Martin Schwarzl}, \bibinfo{person}{Moritz Lipp},
  \bibinfo{person}{Daniel Gruss}, {and} \bibinfo{person}{Michael Schwarz}.}
  \bibinfo{year}{2022}\natexlab{}.
\newblock \showarticletitle{{\AE}PIC Leak: Architecturally Leaking
  Uninitialized Data from the Microarchitecture}. In
  \bibinfo{booktitle}{\emph{{USENIX} Security Symposium}}.
  \bibinfo{publisher}{{USENIX} Association}, \bibinfo{pages}{3917--3934}.
\newblock


\bibitem[Bulck et~al\mbox{.}(2018)]%
        {foreshadow}
\bibfield{author}{\bibinfo{person}{Jo~Van Bulck}, \bibinfo{person}{Marina
  Minkin}, \bibinfo{person}{Ofir Weisse}, \bibinfo{person}{Daniel Genkin},
  \bibinfo{person}{Baris Kasikci}, \bibinfo{person}{Frank Piessens},
  \bibinfo{person}{Mark Silberstein}, \bibinfo{person}{Thomas~F. Wenisch},
  \bibinfo{person}{Yuval Yarom}, {and} \bibinfo{person}{Raoul Strackx}.}
  \bibinfo{year}{2018}\natexlab{}.
\newblock \showarticletitle{Foreshadow: Extracting the Keys to the Intel {SGX}
  Kingdom with Transient Out-of-Order Execution}. In
  \bibinfo{booktitle}{\emph{{USENIX} Security Symposium}}.
  \bibinfo{publisher}{{USENIX} Association}, \bibinfo{pages}{991--1008}.
\newblock


\bibitem[Bulck et~al\mbox{.}(2017)]%
        {tellingyoursecretswithoutpagefaults}
\bibfield{author}{\bibinfo{person}{Jo~Van Bulck}, \bibinfo{person}{Nico
  Weichbrodt}, \bibinfo{person}{R{\"{u}}diger Kapitza}, \bibinfo{person}{Frank
  Piessens}, {and} \bibinfo{person}{Raoul Strackx}.}
  \bibinfo{year}{2017}\natexlab{}.
\newblock \showarticletitle{Telling Your Secrets without Page Faults: Stealthy
  Page Table-Based Attacks on Enclaved Execution}. In
  \bibinfo{booktitle}{\emph{{USENIX} Security Symposium}}.
  \bibinfo{publisher}{{USENIX} Association}, \bibinfo{pages}{1041--1056}.
\newblock


\bibitem[Cai et~al\mbox{.}(2024)]%
        {ung}
\bibfield{author}{\bibinfo{person}{Yuzheng Cai}, \bibinfo{person}{Jiayang Shi},
  \bibinfo{person}{Yizhuo Chen}, {and} \bibinfo{person}{Weiguo Zheng}.}
  \bibinfo{year}{2024}\natexlab{}.
\newblock \showarticletitle{Navigating Labels and Vectors: {A} Unified Approach
  to Filtered Approximate Nearest Neighbor Search}.
\newblock \bibinfo{journal}{\emph{Proc. {ACM} Manag. Data}}
  \bibinfo{volume}{2}, \bibinfo{number}{6} (\bibinfo{year}{2024}),
  \bibinfo{pages}{246:1--246:27}.
\newblock


\bibitem[Cao et~al\mbox{.}(2021)]%
        {multitasking}
\bibfield{author}{\bibinfo{person}{Hancheng Cao}, \bibinfo{person}{Chia{-}Jung
  Lee}, \bibinfo{person}{Shamsi~T. Iqbal}, \bibinfo{person}{Mary Czerwinski},
  \bibinfo{person}{Priscilla N.~Y. Wong}, \bibinfo{person}{Sean Rintel},
  \bibinfo{person}{Brent~J. Hecht}, \bibinfo{person}{Jaime Teevan}, {and}
  \bibinfo{person}{Longqi Yang}.} \bibinfo{year}{2021}\natexlab{}.
\newblock \showarticletitle{Large Scale Analysis of Multitasking Behavior
  During Remote Meetings}. In \bibinfo{booktitle}{\emph{{CHI}}}.
  \bibinfo{publisher}{{ACM}}, \bibinfo{pages}{448:1--448:13}.
\newblock


\bibitem[Carron et~al\mbox{.}(2016)]%
        {dunbarnumber}
\bibfield{author}{\bibinfo{person}{P{\'{a}}draig~Mac Carron},
  \bibinfo{person}{Kimmo Kaski}, {and} \bibinfo{person}{Robin Dunbar}.}
  \bibinfo{year}{2016}\natexlab{}.
\newblock \showarticletitle{Calling Dunbar's numbers}.
\newblock \bibinfo{journal}{\emph{Soc. Networks}}  \bibinfo{volume}{47}
  (\bibinfo{year}{2016}), \bibinfo{pages}{151--155}.
\newblock


\bibitem[Cash et~al\mbox{.}(2016)]%
        {leakageabuseattacksagainstsearchableencryption}
\bibfield{author}{\bibinfo{person}{David Cash}, \bibinfo{person}{Paul Grubbs},
  \bibinfo{person}{Jason Perry}, {and} \bibinfo{person}{Thomas Ristenpart}.}
  \bibinfo{year}{2016}\natexlab{}.
\newblock \showarticletitle{Leakage-Abuse Attacks Against Searchable
  Encryption}.
\newblock \bibinfo{journal}{\emph{{IACR} Cryptol. ePrint Arch.}}
  \bibinfo{volume}{2016} (\bibinfo{year}{2016}), \bibinfo{pages}{718}.
\newblock


\bibitem[{CERT Coordination Center}(2025)]%
        {certidevu404544}
\bibfield{author}{\bibinfo{person}{{CERT Coordination Center}}.}
  \bibinfo{year}{2025}\natexlab{}.
\newblock \bibinfo{title}{Vulnerability Note {VU\#404544}: Vulnerabilities
  identified in {PCIe} Integrity and Data Encryption ({IDE}) protocol
  specification}.
\newblock \bibinfo{howpublished}{\url{https://kb.cert.org/vuls/id/404544}}.
\newblock


\bibitem[Chen et~al\mbox{.}(2025)]%
        {halumem}
\bibfield{author}{\bibinfo{person}{Ding Chen}, \bibinfo{person}{Simin Niu},
  \bibinfo{person}{Kehang Li}, \bibinfo{person}{Peng Liu},
  \bibinfo{person}{Xiangping Zheng}, \bibinfo{person}{Bo Tang},
  \bibinfo{person}{Xinchi Li}, \bibinfo{person}{Feiyu Xiong}, {and}
  \bibinfo{person}{Zhiyu Li}.} \bibinfo{year}{2025}\natexlab{}.
\newblock \showarticletitle{HaluMem: Evaluating Hallucinations in Memory
  Systems of Agents}.
\newblock \bibinfo{journal}{\emph{arXiv preprint arXiv:2511.03506}}
  (\bibinfo{year}{2025}).
\newblock


\bibitem[Chen et~al\mbox{.}(2026a)]%
        {mmlifelong}
\bibfield{author}{\bibinfo{person}{Guo Chen}, \bibinfo{person}{Lidong Lu},
  \bibinfo{person}{Yicheng Liu}, \bibinfo{person}{Liangrui Dong},
  \bibinfo{person}{Lidong Zou}, \bibinfo{person}{Jixin Lv},
  \bibinfo{person}{Zhenquan Li}, \bibinfo{person}{Xinyi Mao},
  \bibinfo{person}{Baoqi Pei}, \bibinfo{person}{Shihao Wang}, {et~al\mbox{.}}}
  \bibinfo{year}{2026}\natexlab{a}.
\newblock \showarticletitle{Towards Multimodal Lifelong Understanding: A
  Dataset and Agentic Baseline}.
\newblock \bibinfo{journal}{\emph{arXiv preprint arXiv:2603.05484}}
  (\bibinfo{year}{2026}).
\newblock


\bibitem[Chen et~al\mbox{.}(2016)]%
        {facebookrealtime}
\bibfield{author}{\bibinfo{person}{Guoqiang~Jerry Chen},
  \bibinfo{person}{Janet~L. Wiener}, \bibinfo{person}{Shridhar Iyer},
  \bibinfo{person}{Anshul Jaiswal}, \bibinfo{person}{Ran Lei},
  \bibinfo{person}{Nikhil Simha}, \bibinfo{person}{Wei Wang},
  \bibinfo{person}{Kevin Wilfong}, \bibinfo{person}{Tim Williamson}, {and}
  \bibinfo{person}{Serhat Yilmaz}.} \bibinfo{year}{2016}\natexlab{}.
\newblock \showarticletitle{Realtime Data Processing at Facebook}. In
  \bibinfo{booktitle}{\emph{{SIGMOD} Conference}}. \bibinfo{publisher}{{ACM}},
  \bibinfo{pages}{1087--1098}.
\newblock


\bibitem[Chen et~al\mbox{.}(2021)]%
        {spann}
\bibfield{author}{\bibinfo{person}{Qi Chen}, \bibinfo{person}{Bing Zhao},
  \bibinfo{person}{Haidong Wang}, \bibinfo{person}{Mingqin Li},
  \bibinfo{person}{Chuanjie Liu}, \bibinfo{person}{Zengzhong Li},
  \bibinfo{person}{Mao Yang}, {and} \bibinfo{person}{Jingdong Wang}.}
  \bibinfo{year}{2021}\natexlab{}.
\newblock \showarticletitle{{SPANN:} Highly-efficient Billion-scale Approximate
  Nearest Neighborhood Search}. In \bibinfo{booktitle}{\emph{NeurIPS}}.
  \bibinfo{pages}{5199--5212}.
\newblock


\bibitem[Chen et~al\mbox{.}(2026b)]%
        {esmemeval}
\bibfield{author}{\bibinfo{person}{Tiantian Chen}, \bibinfo{person}{Jiaqi Lu},
  \bibinfo{person}{Ying Shen}, {and} \bibinfo{person}{Lin Zhang}.}
  \bibinfo{year}{2026}\natexlab{b}.
\newblock \showarticletitle{ES-MemEval: Benchmarking Conversational Agents on
  Personalized Long-Term Emotional Support}.
\newblock \bibinfo{journal}{\emph{arXiv preprint arXiv:2602.01885}}
  (\bibinfo{year}{2026}).
\newblock


\bibitem[Cheng et~al\mbox{.}(2025)]%
        {remoterag}
\bibfield{author}{\bibinfo{person}{Yihang Cheng}, \bibinfo{person}{Lan Zhang},
  \bibinfo{person}{Junyang Wang}, \bibinfo{person}{Mu Yuan}, {and}
  \bibinfo{person}{Yunhao Yao}.} \bibinfo{year}{2025}\natexlab{}.
\newblock \showarticletitle{RemoteRAG: {A} Privacy-Preserving {LLM} Cloud {RAG}
  Service}. In \bibinfo{booktitle}{\emph{{ACL} (Findings)}}
  \emph{(\bibinfo{series}{Findings of {ACL}})}. \bibinfo{publisher}{Association
  for Computational Linguistics}, \bibinfo{pages}{3820--3837}.
\newblock


\bibitem[Cheng et~al\mbox{.}(2026)]%
        {lifebench}
\bibfield{author}{\bibinfo{person}{Zihao Cheng}, \bibinfo{person}{Weixin Wang},
  \bibinfo{person}{Yu Zhao}, \bibinfo{person}{Ziyang Ren},
  \bibinfo{person}{Jiaxuan Chen}, \bibinfo{person}{Ruiyang Xu},
  \bibinfo{person}{Shuai Huang}, \bibinfo{person}{Yang Chen},
  \bibinfo{person}{Guowei Li}, \bibinfo{person}{Mengshi Wang},
  \bibinfo{person}{Yi Xie}, \bibinfo{person}{Ren Zhu}, \bibinfo{person}{Zeren
  Jiang}, \bibinfo{person}{Keda Lu}, \bibinfo{person}{Yihong Li},
  \bibinfo{person}{Xiaoliang Wang}, \bibinfo{person}{Liwei Liu}, {and}
  \bibinfo{person}{Cam-Tu Nguyen}.} \bibinfo{year}{2026}\natexlab{}.
\newblock \showarticletitle{LifeBench: A Benchmark for Long-Horizon
  Multi-Source Memory}.
\newblock \bibinfo{journal}{\emph{arXiv preprint arXiv:2603.03781}}
  (\bibinfo{year}{2026}).
\newblock


\bibitem[Chhikara et~al\mbox{.}(2025)]%
        {mem0}
\bibfield{author}{\bibinfo{person}{Prateek Chhikara}, \bibinfo{person}{Dev
  Khant}, \bibinfo{person}{Saket Aryan}, \bibinfo{person}{Taranjeet Singh},
  {and} \bibinfo{person}{Deshraj Yadav}.} \bibinfo{year}{2025}\natexlab{}.
\newblock \showarticletitle{Mem0: Building Production-Ready {AI} Agents with
  Scalable Long-Term Memory}. In \bibinfo{booktitle}{\emph{{ECAI}}}
  \emph{(\bibinfo{series}{Frontiers in Artificial Intelligence and
  Applications})}. \bibinfo{publisher}{{IOS} Press},
  \bibinfo{pages}{2993--3000}.
\newblock


\bibitem[Chollet et~al\mbox{.}(2024)]%
        {privacypreservingpersonalassistant}
\bibfield{author}{\bibinfo{person}{G{\'e}rard Chollet}, \bibinfo{person}{Hugues
  Sansen}, \bibinfo{person}{Yannis Tevissen}, \bibinfo{person}{J{\'e}r{\^o}me
  Boudy}, \bibinfo{person}{Mossaab Hariz}, \bibinfo{person}{Christophe Lohr},
  {and} \bibinfo{person}{Fathy Yassa}.} \bibinfo{year}{2024}\natexlab{}.
\newblock \showarticletitle{Privacy preserving personal assistant with
  on-device diarization and spoken dialogue system for home and beyond}.
\newblock \bibinfo{journal}{\emph{arXiv preprint arXiv:2401.01146}}
  (\bibinfo{year}{2024}).
\newblock


\bibitem[Chuang et~al\mbox{.}(2026)]%
        {teefail}
\bibfield{author}{\bibinfo{person}{Jalen Chuang}, \bibinfo{person}{Alex Seto},
  \bibinfo{person}{Nicolas Berrios}, \bibinfo{person}{Stephan van Schaik},
  \bibinfo{person}{Christina Garman}, {and} \bibinfo{person}{Daniel Genkin}.}
  \bibinfo{year}{2026}\natexlab{}.
\newblock \bibinfo{title}{{TEE.fail}: Breaking Trusted Execution Environments
  via {DDR5} Memory Bus Interposition}.
\newblock \bibinfo{howpublished}{\url{https://tee.fail/}}.
\newblock


\bibitem[Civan et~al\mbox{.}(2008)]%
        {foldersortags}
\bibfield{author}{\bibinfo{person}{Andrea Civan}, \bibinfo{person}{William
  Jones}, \bibinfo{person}{Predrag~V. Klasnja}, {and} \bibinfo{person}{Harry
  Bruce}.} \bibinfo{year}{2008}\natexlab{}.
\newblock \showarticletitle{Better to organize personal information by folders
  or by tags?: The devil is in the details}. In
  \bibinfo{booktitle}{\emph{{ASIST}}} \emph{(\bibinfo{series}{Proc. Assoc. Inf.
  Sci. Technol.}, \bibinfo{number}{1})}. \bibinfo{publisher}{Wiley},
  \bibinfo{pages}{1--13}.
\newblock


\bibitem[{Confidential Containers}(2024)]%
        {confidentialcontainers}
\bibfield{author}{\bibinfo{person}{{Confidential Containers}}.}
  \bibinfo{year}{2024}\natexlab{}.
\newblock \bibinfo{title}{Trust Model for Confidential Containers}.
\newblock
  \bibinfo{howpublished}{\url{https://confidentialcontainers.org/docs/architecture/trust-model/trust-model/}}.
\newblock


\bibitem[Connell et~al\mbox{.}(2024)]%
        {svr3}
\bibfield{author}{\bibinfo{person}{Graeme Connell}, \bibinfo{person}{Vivian
  Fang}, \bibinfo{person}{Rolfe Schmidt}, \bibinfo{person}{Emma Dauterman},
  {and} \bibinfo{person}{Raluca~Ada Popa}.} \bibinfo{year}{2024}\natexlab{}.
\newblock \showarticletitle{Secret Key Recovery in a Global-Scale End-to-End
  Encryption System}. In \bibinfo{booktitle}{\emph{{OSDI}}}.
  \bibinfo{publisher}{{USENIX} Association}, \bibinfo{pages}{703--719}.
\newblock


\bibitem[Constable et~al\mbox{.}(2023)]%
        {aexnotify}
\bibfield{author}{\bibinfo{person}{Scott Constable}, \bibinfo{person}{Jo~Van
  Bulck}, \bibinfo{person}{Xiang Cheng}, \bibinfo{person}{Yuan Xiao},
  \bibinfo{person}{Cedric Xing}, \bibinfo{person}{Ilya Alexandrovich},
  \bibinfo{person}{Taesoo Kim}, \bibinfo{person}{Frank Piessens},
  \bibinfo{person}{Mona Vij}, {and} \bibinfo{person}{Mark Silberstein}.}
  \bibinfo{year}{2023}\natexlab{}.
\newblock \showarticletitle{AEX-Notify: Thwarting Precise Single-Stepping
  Attacks through Interrupt Awareness for Intel {SGX} Enclaves}. In
  \bibinfo{booktitle}{\emph{{USENIX} Security Symposium}}.
  \bibinfo{publisher}{{USENIX} Association}, \bibinfo{pages}{4051--4068}.
\newblock


\bibitem[Costan et~al\mbox{.}(2016)]%
        {sanctum}
\bibfield{author}{\bibinfo{person}{Victor Costan}, \bibinfo{person}{Ilia~A.
  Lebedev}, {and} \bibinfo{person}{Srinivas Devadas}.}
  \bibinfo{year}{2016}\natexlab{}.
\newblock \showarticletitle{Sanctum: Minimal Hardware Extensions for Strong
  Software Isolation}. In \bibinfo{booktitle}{\emph{{USENIX} Security
  Symposium}}. \bibinfo{publisher}{{USENIX} Association},
  \bibinfo{pages}{857--874}.
\newblock


\bibitem[Cremers et~al\mbox{.}(2023)]%
        {formalanalysisofspdm}
\bibfield{author}{\bibinfo{person}{Cas Cremers}, \bibinfo{person}{Alexander
  Dax}, {and} \bibinfo{person}{Aurora Naska}.} \bibinfo{year}{2023}\natexlab{}.
\newblock \showarticletitle{Formal Analysis of {SPDM:} Security Protocol and
  Data Model version 1.2}. In \bibinfo{booktitle}{\emph{{USENIX} Security
  Symposium}}. \bibinfo{publisher}{{USENIX} Association},
  \bibinfo{pages}{6611--6628}.
\newblock


\bibitem[Croes et~al\mbox{.}(2024)]%
        {digitalconfessions}
\bibfield{author}{\bibinfo{person}{Emmelyn A.~J. Croes},
  \bibinfo{person}{Marjolijn~L. Antheunis}, \bibinfo{person}{Chris van~der
  Lee}, {and} \bibinfo{person}{Jan M.~S. de Wit}.}
  \bibinfo{year}{2024}\natexlab{}.
\newblock \showarticletitle{Digital Confessions: The Willingness to Disclose
  Intimate Information to a Chatbot and its Impact on Emotional Well-Being}.
\newblock \bibinfo{journal}{\emph{Interact. Comput.}} \bibinfo{volume}{36},
  \bibinfo{number}{5} (\bibinfo{year}{2024}), \bibinfo{pages}{279--292}.
\newblock


\bibitem[Dauterman et~al\mbox{.}(2021)]%
        {snoopy}
\bibfield{author}{\bibinfo{person}{Emma Dauterman}, \bibinfo{person}{Vivian
  Fang}, \bibinfo{person}{Ioannis Demertzis}, \bibinfo{person}{Natacha Crooks},
  {and} \bibinfo{person}{Raluca~Ada Popa}.} \bibinfo{year}{2021}\natexlab{}.
\newblock \showarticletitle{Snoopy: Surpassing the Scalability Bottleneck of
  Oblivious Storage}. In \bibinfo{booktitle}{\emph{{SOSP}}}.
  \bibinfo{publisher}{{ACM}}, \bibinfo{pages}{655--671}.
\newblock


\bibitem[de~Castro(2025)]%
        {bulkemails}
\bibfield{author}{\bibinfo{person}{Vicente~Bicudo de Castro}.}
  \bibinfo{year}{2025}\natexlab{}.
\newblock \showarticletitle{Quantifying the burden of organisational bulk
  emails in a business school}.
\newblock \bibinfo{journal}{\emph{Higher Education Research \& Development}}
  \bibinfo{volume}{44}, \bibinfo{number}{8} (\bibinfo{year}{2025}),
  \bibinfo{pages}{1934--1948}.
\newblock
\href{https://doi.org/10.1080/07294360.2025.2510662}{doi:\nolinkurl{10.1080/07294360.2025.2510662}}


\bibitem[Dearman and Pierce(2008)]%
        {multidevicecomputing}
\bibfield{author}{\bibinfo{person}{David Dearman} {and}
  \bibinfo{person}{Jeffrey~S. Pierce}.} \bibinfo{year}{2008}\natexlab{}.
\newblock \showarticletitle{It's on my other computer!: computing with multiple
  devices}. In \bibinfo{booktitle}{\emph{{CHI}}}. \bibinfo{publisher}{{ACM}},
  \bibinfo{pages}{767--776}.
\newblock


\bibitem[{DMTF}(2025)]%
        {dmtfspdm}
\bibfield{author}{\bibinfo{person}{{DMTF}}.} \bibinfo{year}{2025}\natexlab{}.
\newblock \bibinfo{title}{Security Protocol and Data Model ({SPDM})
  Specification}.
\newblock \bibinfo{howpublished}{\url{https://www.dmtf.org/dsp/DSP0274}}.
\newblock


\bibitem[Du et~al\mbox{.}(2024)]%
        {perltqa}
\bibfield{author}{\bibinfo{person}{Yiming Du}, \bibinfo{person}{Hongru Wang},
  \bibinfo{person}{Zhengyi Zhao}, \bibinfo{person}{Bin Liang},
  \bibinfo{person}{Baojun Wang}, \bibinfo{person}{Wanjun Zhong},
  \bibinfo{person}{Zezhong Wang}, {and} \bibinfo{person}{Kam{-}Fai Wong}.}
  \bibinfo{year}{2024}\natexlab{}.
\newblock \showarticletitle{{PerLTQA}: A Personal Long-Term Memory Dataset for
  Memory Classification, Retrieval, and Fusion in Question Answering}. In
  \bibinfo{booktitle}{\emph{{SIGHAN}}}. \bibinfo{publisher}{Association for
  Computational Linguistics}, \bibinfo{pages}{152--164}.
\newblock


\bibitem[Dumais et~al\mbox{.}(2015)]%
        {stuffiveseen}
\bibfield{author}{\bibinfo{person}{Susan~T. Dumais}, \bibinfo{person}{Edward
  Cutrell}, \bibinfo{person}{Jonathan~J. Cadiz}, \bibinfo{person}{Gavin
  Jancke}, \bibinfo{person}{Raman Sarin}, {and} \bibinfo{person}{Daniel~C.
  Robbins}.} \bibinfo{year}{2015}\natexlab{}.
\newblock \showarticletitle{Stuff I've Seen: {A} System for Personal
  Information Retrieval and Re-Use}.
\newblock \bibinfo{journal}{\emph{{SIGIR} Forum}} \bibinfo{volume}{49},
  \bibinfo{number}{2} (\bibinfo{year}{2015}), \bibinfo{pages}{28--35}.
\newblock
\href{https://doi.org/10.1145/2888422.2888425}{doi:\nolinkurl{10.1145/2888422.2888425}}


\bibitem[Edge et~al\mbox{.}(2024)]%
        {graphrag}
\bibfield{author}{\bibinfo{person}{Darren Edge}, \bibinfo{person}{Ha Trinh},
  \bibinfo{person}{Newman Cheng}, \bibinfo{person}{Joshua Bradley},
  \bibinfo{person}{Alex Chao}, \bibinfo{person}{Apurva Mody},
  \bibinfo{person}{Steven Truitt}, \bibinfo{person}{Dasha Metropolitansky},
  \bibinfo{person}{Robert~Osazuwa Ness}, {and} \bibinfo{person}{Jonathan
  Larson}.} \bibinfo{year}{2024}\natexlab{}.
\newblock \showarticletitle{From Local to Global: A Graph RAG Approach to
  Query-Focused Summarization}.
\newblock \bibinfo{journal}{\emph{arXiv preprint arXiv:2404.16130}}
  (\bibinfo{year}{2024}).
\newblock


\bibitem[Engels et~al\mbox{.}(2024)]%
        {windowfilters}
\bibfield{author}{\bibinfo{person}{Joshua Engels}, \bibinfo{person}{Benjamin
  Landrum}, \bibinfo{person}{Shangdi Yu}, \bibinfo{person}{Laxman Dhulipala},
  {and} \bibinfo{person}{Julian Shun}.} \bibinfo{year}{2024}\natexlab{}.
\newblock \showarticletitle{Approximate Nearest Neighbor Search with Window
  Filters}. In \bibinfo{booktitle}{\emph{{ICML}}}
  \emph{(\bibinfo{series}{Proceedings of Machine Learning Research})}.
  \bibinfo{publisher}{{PMLR} / OpenReview.net}, \bibinfo{pages}{12469--12490}.
\newblock


\bibitem[{Epoch AI}(2026)]%
        {epochinferencecost}
\bibfield{author}{\bibinfo{person}{{Epoch AI}}.}
  \bibinfo{year}{2026}\natexlab{}.
\newblock \bibinfo{title}{Trends in {AI}: The Inference Cost of Frontier Models
  is Halving Every Two Months}.
\newblock \bibinfo{howpublished}{\url{https://epoch.ai/trends}}.
\newblock


\bibitem[Eskandarian and Zaharia(2019)]%
        {oblidb}
\bibfield{author}{\bibinfo{person}{Saba Eskandarian} {and}
  \bibinfo{person}{Matei Zaharia}.} \bibinfo{year}{2019}\natexlab{}.
\newblock \showarticletitle{ObliDB: Oblivious Query Processing for Secure
  Databases}.
\newblock \bibinfo{journal}{\emph{Proc. {VLDB} Endow.}} \bibinfo{volume}{13},
  \bibinfo{number}{2} (\bibinfo{year}{2019}), \bibinfo{pages}{169--183}.
\newblock


\bibitem[{Federal Trade Commission}(2024)]%
        {ftcsocialmedia2024}
\bibfield{author}{\bibinfo{person}{{Federal Trade Commission}}.}
  \bibinfo{year}{2024}\natexlab{}.
\newblock \bibinfo{booktitle}{\emph{A Look Behind the Screens: Examining the
  Data Practices of Social Media and Video Streaming Services}}.
\newblock \bibinfo{type}{{T}echnical {R}eport}.
\newblock
\urldef\tempurl%
\url{https://www.ftc.gov/system/files/ftc_gov/pdf/Social-Media-6b-Report-9-11-2024.pdf}
\showURL{%
\tempurl}


\bibitem[Ferraiuolo et~al\mbox{.}(2017)]%
        {komodo}
\bibfield{author}{\bibinfo{person}{Andrew Ferraiuolo}, \bibinfo{person}{Andrew
  Baumann}, \bibinfo{person}{Chris Hawblitzel}, {and} \bibinfo{person}{Bryan
  Parno}.} \bibinfo{year}{2017}\natexlab{}.
\newblock \showarticletitle{Komodo: Using verification to disentangle
  secure-enclave hardware from software}. In
  \bibinfo{booktitle}{\emph{{SOSP}}}. \bibinfo{publisher}{{ACM}},
  \bibinfo{pages}{287--305}.
\newblock


\bibitem[Fox et~al\mbox{.}(2016)]%
        {Fox02042016}
\bibfield{author}{\bibinfo{person}{Eric~W. Fox}, \bibinfo{person}{Martin~B.
  Short}, \bibinfo{person}{Frederic~P. Schoenberg}, \bibinfo{person}{Kathryn~D.
  Coronges}, {and} \bibinfo{person}{Andrea~L. Bertozzi}.}
  \bibinfo{year}{2016}\natexlab{}.
\newblock \showarticletitle{Modeling E-mail Networks and Inferring Leadership
  Using Self-Exciting Point Processes}.
\newblock \bibinfo{journal}{\emph{J. Amer. Statist. Assoc.}}
  \bibinfo{volume}{111}, \bibinfo{number}{514} (\bibinfo{year}{2016}),
  \bibinfo{pages}{564--584}.
\newblock
\href{https://doi.org/10.1080/01621459.2015.1135802}{doi:\nolinkurl{10.1080/01621459.2015.1135802}}


\bibitem[Fuhry et~al\mbox{.}(2017)]%
        {hardidx}
\bibfield{author}{\bibinfo{person}{Benny Fuhry}, \bibinfo{person}{Raad
  Bahmani}, \bibinfo{person}{Ferdinand Brasser}, \bibinfo{person}{Florian
  Hahn}, \bibinfo{person}{Florian Kerschbaum}, {and}
  \bibinfo{person}{Ahmad{-}Reza Sadeghi}.} \bibinfo{year}{2017}\natexlab{}.
\newblock \showarticletitle{HardIDX: Practical and Secure Index with {SGX}}. In
  \bibinfo{booktitle}{\emph{DBSec}} \emph{(\bibinfo{series}{Lecture Notes in
  Computer Science})}. \bibinfo{publisher}{Springer},
  \bibinfo{pages}{386--408}.
\newblock


\bibitem[Gast et~al\mbox{.}(2025)]%
        {counterseveillance}
\bibfield{author}{\bibinfo{person}{Stefan Gast}, \bibinfo{person}{Hannes
  Weissteiner}, \bibinfo{person}{Robin~Leander Schr{\"{o}}der}, {and}
  \bibinfo{person}{Daniel Gruss}.} \bibinfo{year}{2025}\natexlab{}.
\newblock \showarticletitle{CounterSEVeillance: Performance-Counter Attacks on
  {AMD} {SEV-SNP}}. In \bibinfo{booktitle}{\emph{{NDSS}}}.
  \bibinfo{publisher}{The Internet Society}.
\newblock


\bibitem[Ge et~al\mbox{.}(2024)]%
        {personahub}
\bibfield{author}{\bibinfo{person}{Tao Ge}, \bibinfo{person}{Xin Chan},
  \bibinfo{person}{Xiaoyang Wang}, \bibinfo{person}{Dian Yu},
  \bibinfo{person}{Haitao Mi}, {and} \bibinfo{person}{Dong Yu}.}
  \bibinfo{year}{2024}\natexlab{}.
\newblock \showarticletitle{Scaling synthetic data creation with 1,000,000,000
  personas}.
\newblock \bibinfo{journal}{\emph{arXiv preprint arXiv:2406.20094}}
  (\bibinfo{year}{2024}).
\newblock


\bibitem[Ge et~al\mbox{.}(2022)]%
        {hecate}
\bibfield{author}{\bibinfo{person}{Xinyang Ge}, \bibinfo{person}{Hsuan{-}Chi
  Kuo}, {and} \bibinfo{person}{Weidong Cui}.} \bibinfo{year}{2022}\natexlab{}.
\newblock \showarticletitle{Hecate: Lifting and Shifting On-Premises Workloads
  to an Untrusted Cloud}. In \bibinfo{booktitle}{\emph{{CCS}}}.
  \bibinfo{publisher}{{ACM}}, \bibinfo{pages}{1231--1242}.
\newblock


\bibitem[Gemmell et~al\mbox{.}(2006)]%
        {mylifebits}
\bibfield{author}{\bibinfo{person}{Jim Gemmell}, \bibinfo{person}{Gordon Bell},
  {and} \bibinfo{person}{Roger Lueder}.} \bibinfo{year}{2006}\natexlab{}.
\newblock \showarticletitle{MyLifeBits: a personal database for everything}.
\newblock \bibinfo{journal}{\emph{Commun. {ACM}}} \bibinfo{volume}{49},
  \bibinfo{number}{1} (\bibinfo{year}{2006}), \bibinfo{pages}{88--95}.
\newblock
\href{https://doi.org/10.1145/1107458.1107460}{doi:\nolinkurl{10.1145/1107458.1107460}}


\bibitem[Gervais et~al\mbox{.}(2014)]%
        {quantifyingwebsearchprivacy}
\bibfield{author}{\bibinfo{person}{Arthur Gervais}, \bibinfo{person}{Reza
  Shokri}, \bibinfo{person}{Adish Singla}, \bibinfo{person}{Srdjan Capkun},
  {and} \bibinfo{person}{Vincent Lenders}.} \bibinfo{year}{2014}\natexlab{}.
\newblock \showarticletitle{Quantifying Web-Search Privacy}. In
  \bibinfo{booktitle}{\emph{{CCS}}}. \bibinfo{publisher}{{ACM}},
  \bibinfo{pages}{966--977}.
\newblock


\bibitem[Gliklich et~al\mbox{.}(2016)]%
        {drivetext}
\bibfield{author}{\bibinfo{person}{Emily Gliklich}, \bibinfo{person}{Rong Guo},
  {and} \bibinfo{person}{Regan~W. Bergmark}.} \bibinfo{year}{2016}\natexlab{}.
\newblock \showarticletitle{Texting while driving: A study of 1211 U.S. adults
  with the Distracted Driving Survey}.
\newblock \bibinfo{journal}{\emph{Preventive Medicine Reports}}
  \bibinfo{volume}{4} (\bibinfo{year}{2016}), \bibinfo{pages}{486--489}.
\newblock
\href{https://doi.org/10.1016/j.pmedr.2016.09.003}{doi:\nolinkurl{10.1016/j.pmedr.2016.09.003}}


\bibitem[Gollapudi et~al\mbox{.}(2023)]%
        {filtereddiskann}
\bibfield{author}{\bibinfo{person}{Siddharth Gollapudi}, \bibinfo{person}{Neel
  Karia}, \bibinfo{person}{Varun Sivashankar}, \bibinfo{person}{Ravishankar
  Krishnaswamy}, \bibinfo{person}{Nikit Begwani}, \bibinfo{person}{Swapnil
  Raz}, \bibinfo{person}{Yiyong Lin}, \bibinfo{person}{Yin Zhang},
  \bibinfo{person}{Neelam Mahapatro}, \bibinfo{person}{Premkumar Srinivasan},
  \bibinfo{person}{Amit Singh}, {and} \bibinfo{person}{Harsha~Vardhan
  Simhadri}.} \bibinfo{year}{2023}\natexlab{}.
\newblock \showarticletitle{Filtered-DiskANN: Graph Algorithms for Approximate
  Nearest Neighbor Search with Filters}. In \bibinfo{booktitle}{\emph{{WWW}}}.
  \bibinfo{publisher}{{ACM}}, \bibinfo{pages}{3406--3416}.
\newblock


\bibitem[Gomaa et~al\mbox{.}(2025)]%
        {converse}
\bibfield{author}{\bibinfo{person}{Amr Gomaa}, \bibinfo{person}{Ahmed Salem},
  {and} \bibinfo{person}{Sahar Abdelnabi}.} \bibinfo{year}{2025}\natexlab{}.
\newblock \showarticletitle{ConVerse: Benchmarking Contextual Safety in
  Agent-to-Agent Conversations}.
\newblock \bibinfo{journal}{\emph{arXiv preprint arXiv:2511.05359}}
  (\bibinfo{year}{2025}).
\newblock


\bibitem[{Google Cloud}(2026a)]%
        {gcpcloudstoragepricing}
\bibfield{author}{\bibinfo{person}{{Google Cloud}}.}
  \bibinfo{year}{2026}\natexlab{a}.
\newblock \bibinfo{title}{Google Cloud Storage Pricing}.
\newblock
  \bibinfo{howpublished}{\url{https://cloud.google.com/storage/pricing}}.
\newblock


\bibitem[{Google Cloud}(2026b)]%
        {googlepersistentdisk}
\bibfield{author}{\bibinfo{person}{{Google Cloud}}.}
  \bibinfo{year}{2026}\natexlab{b}.
\newblock \bibinfo{title}{Persistent Disk}.
\newblock
  \bibinfo{howpublished}{\url{https://cloud.google.com/compute/docs/disks/persistent-disks}}.
\newblock


\bibitem[Guo et~al\mbox{.}(2025)]%
        {lightrag}
\bibfield{author}{\bibinfo{person}{Zirui Guo}, \bibinfo{person}{Lianghao Xia},
  \bibinfo{person}{Yanhua Yu}, \bibinfo{person}{Tu Ao}, {and}
  \bibinfo{person}{Chao Huang}.} \bibinfo{year}{2025}\natexlab{}.
\newblock \showarticletitle{LightRAG: Simple and Fast Retrieval-Augmented
  Generation}. In \bibinfo{booktitle}{\emph{{EMNLP} (Findings)}}.
  \bibinfo{publisher}{Association for Computational Linguistics},
  \bibinfo{pages}{10746--10761}.
\newblock


\bibitem[Gupta et~al\mbox{.}(2023)]%
        {caps}
\bibfield{author}{\bibinfo{person}{Gaurav Gupta}, \bibinfo{person}{Jonah Yi},
  \bibinfo{person}{Benjamin Coleman}, \bibinfo{person}{Chen Luo},
  \bibinfo{person}{Vihan Lakshman}, {and} \bibinfo{person}{Anshumali
  Shrivastava}.} \bibinfo{year}{2023}\natexlab{}.
\newblock \showarticletitle{Caps: A practical partition index for filtered
  similarity search}.
\newblock \bibinfo{journal}{\emph{arXiv preprint arXiv:2308.15014}}
  (\bibinfo{year}{2023}).
\newblock


\bibitem[Gutierrez et~al\mbox{.}(2024)]%
        {hipporag}
\bibfield{author}{\bibinfo{person}{Bernal~Jimenez Gutierrez},
  \bibinfo{person}{Yiheng Shu}, \bibinfo{person}{Yu Gu},
  \bibinfo{person}{Michihiro Yasunaga}, {and} \bibinfo{person}{Yu Su}.}
  \bibinfo{year}{2024}\natexlab{}.
\newblock \showarticletitle{HippoRAG: Neurobiologically Inspired Long-Term
  Memory for Large Language Models}. In \bibinfo{booktitle}{\emph{NeurIPS}}.
\newblock


\bibitem[Hawkes(1971)]%
        {hawkes1971}
\bibfield{author}{\bibinfo{person}{Alan~G. Hawkes}.}
  \bibinfo{year}{1971}\natexlab{}.
\newblock \showarticletitle{Point Spectra of Some Mutually Exciting Point
  Processes}.
\newblock \bibinfo{journal}{\emph{Journal of the Royal Statistical Society
  Series B: Statistical Methodology}} \bibinfo{volume}{33}, \bibinfo{number}{3}
  (\bibinfo{year}{1971}), \bibinfo{pages}{438--443}.
\newblock
\href{https://doi.org/10.1111/j.2517-6161.1971.tb01530.x}{doi:\nolinkurl{10.1111/j.2517-6161.1971.tb01530.x}}


\bibitem[Hawkes and Oakes(1974)]%
        {hawkes1974}
\bibfield{author}{\bibinfo{person}{Alan~G. Hawkes} {and} \bibinfo{person}{David
  Oakes}.} \bibinfo{year}{1974}\natexlab{}.
\newblock \showarticletitle{A cluster process representation of a self-exciting
  process}.
\newblock \bibinfo{journal}{\emph{Journal of Applied Probability}}
  \bibinfo{volume}{11}, \bibinfo{number}{3} (\bibinfo{year}{1974}),
  \bibinfo{pages}{493--503}.
\newblock
\href{https://doi.org/10.2307/3212693}{doi:\nolinkurl{10.2307/3212693}}


\bibitem[Henzinger et~al\mbox{.}(2023)]%
        {tiptoe}
\bibfield{author}{\bibinfo{person}{Alexandra Henzinger}, \bibinfo{person}{Emma
  Dauterman}, \bibinfo{person}{Henry Corrigan{-}Gibbs}, {and}
  \bibinfo{person}{Nickolai Zeldovich}.} \bibinfo{year}{2023}\natexlab{}.
\newblock \showarticletitle{Private Web Search with Tiptoe}. In
  \bibinfo{booktitle}{\emph{{SOSP}}}. \bibinfo{publisher}{{ACM}},
  \bibinfo{pages}{396--416}.
\newblock


\bibitem[Hu et~al\mbox{.}(2025)]%
        {memoryagentbench}
\bibfield{author}{\bibinfo{person}{Yuanzhe Hu}, \bibinfo{person}{Yu Wang},
  {and} \bibinfo{person}{Julian McAuley}.} \bibinfo{year}{2025}\natexlab{}.
\newblock \showarticletitle{Evaluating Memory in LLM Agents via Incremental
  Multi-Turn Interactions}.
\newblock \bibinfo{journal}{\emph{arXiv preprint arXiv:2507.05257}}
  (\bibinfo{year}{2025}).
\newblock


\bibitem[Igarashi et~al\mbox{.}(2022)]%
        {igarashi2022}
\bibfield{author}{\bibinfo{person}{Naoki Igarashi}, \bibinfo{person}{Yukihiko
  Okada}, \bibinfo{person}{Hiroki Sayama}, {and} \bibinfo{person}{Yukie Sano}.}
  \bibinfo{year}{2022}\natexlab{}.
\newblock \showarticletitle{A two-phase model of collective memory decay with a
  dynamical switching point}.
\newblock \bibinfo{journal}{\emph{Scientific Reports}} \bibinfo{volume}{12},
  \bibinfo{number}{1} (\bibinfo{year}{2022}).
\newblock
\href{https://doi.org/10.1038/s41598-022-25840-9}{doi:\nolinkurl{10.1038/s41598-022-25840-9}}


\bibitem[{Intel}(2022)]%
        {inteltdx}
\bibfield{author}{\bibinfo{person}{{Intel}}.} \bibinfo{year}{2022}\natexlab{}.
\newblock \bibinfo{title}{Intel Trust Domain Extensions ({TDX})}.
\newblock
  \bibinfo{howpublished}{\url{https://www.intel.com/content/dam/develop/external/us/en/documents/tdx-whitepaper-final9-17.pdf}}.
\newblock


\bibitem[{Intel}(2026)]%
        {inteltdxguestsecurity}
\bibfield{author}{\bibinfo{person}{{Intel}}.} \bibinfo{year}{2026}\natexlab{}.
\newblock \bibinfo{title}{Intel Trust Domain Extension Guest Linux Kernel
  Hardening Strategy}.
\newblock
  \bibinfo{howpublished}{\url{https://intel.github.io/ccc-linux-guest-hardening-docs/tdx-guest-hardening.html}}.
\newblock


\bibitem[{Intel Corporation}(2025)]%
        {inteltdxconnectteeio}
\bibfield{author}{\bibinfo{person}{{Intel Corporation}}.}
  \bibinfo{year}{2025}\natexlab{}.
\newblock \bibinfo{title}{Intel {TDX} Connect {TEE-IO} Device Guide}.
\newblock
  \bibinfo{howpublished}{\url{https://www.intel.com/content/www/us/en/content-details/861654/intel-tdx-connect-tee-io-device-guide.html}}.
\newblock


\bibitem[Jia et~al\mbox{.}(2025)]%
        {foundintranslation}
\bibfield{author}{\bibinfo{person}{Grace Jia}, \bibinfo{person}{Alex Wong},
  {and} \bibinfo{person}{Anurag Khandelwal}.} \bibinfo{year}{2025}\natexlab{}.
\newblock \showarticletitle{Found in Translation: {A} Generative Language
  Modeling Approach to Memory Access Pattern Attacks}. In
  \bibinfo{booktitle}{\emph{{USENIX} Security Symposium}}.
  \bibinfo{publisher}{{USENIX} Association}, \bibinfo{pages}{7957--7975}.
\newblock


\bibitem[Jiang et~al\mbox{.}(2026)]%
        {magma}
\bibfield{author}{\bibinfo{person}{Dongming Jiang}, \bibinfo{person}{Yi Li},
  \bibinfo{person}{Guanpeng Li}, {and} \bibinfo{person}{Bingzhe Li}.}
  \bibinfo{year}{2026}\natexlab{}.
\newblock \showarticletitle{MAGMA: A Multi-Graph based Agentic Memory
  Architecture for AI Agents}.
\newblock \bibinfo{journal}{\emph{arXiv preprint arXiv:2601.03236}}
  (\bibinfo{year}{2026}).
\newblock


\bibitem[Jiayang et~al\mbox{.}(2026)]%
        {amemgym}
\bibfield{author}{\bibinfo{person}{Cheng Jiayang}, \bibinfo{person}{Dongyu Ru},
  \bibinfo{person}{Lin Qiu}, \bibinfo{person}{Yiyang Li},
  \bibinfo{person}{Xuezhi Cao}, \bibinfo{person}{Yangqiu Song}, {and}
  \bibinfo{person}{Xunliang Cai}.} \bibinfo{year}{2026}\natexlab{}.
\newblock \showarticletitle{AMemGym: Interactive Memory Benchmarking for
  Assistants in Long-Horizon Conversations}.
\newblock \bibinfo{journal}{\emph{arXiv preprint arXiv:2603.01966}}
  (\bibinfo{year}{2026}).
\newblock


\bibitem[Jin and Wu(2025)]%
        {cecollm}
\bibfield{author}{\bibinfo{person}{Hongpeng Jin} {and} \bibinfo{person}{Yanzhao
  Wu}.} \bibinfo{year}{2025}\natexlab{}.
\newblock \showarticletitle{CE-CoLLM: Efficient and Adaptive Large Language
  Models Through Cloud-Edge Collaboration}. In
  \bibinfo{booktitle}{\emph{{ICWS}}}. \bibinfo{publisher}{{IEEE}},
  \bibinfo{pages}{316--323}.
\newblock


\bibitem[Jo et~al\mbox{.}(2024)]%
        {chatbotselfdisclosure}
\bibfield{author}{\bibinfo{person}{Eunkyung Jo}, \bibinfo{person}{Yuin Jeong},
  \bibinfo{person}{SoHyun Park}, \bibinfo{person}{Daniel~A. Epstein}, {and}
  \bibinfo{person}{Young{-}Ho Kim}.} \bibinfo{year}{2024}\natexlab{}.
\newblock \showarticletitle{Understanding the Impact of Long-Term Memory on
  Self-Disclosure with Large Language Model-Driven Chatbots for Public Health
  Intervention}. In \bibinfo{booktitle}{\emph{{CHI}}}.
  \bibinfo{publisher}{{ACM}}, \bibinfo{pages}{440:1--440:21}.
\newblock


\bibitem[Johnson et~al\mbox{.}(2021)]%
        {faiss}
\bibfield{author}{\bibinfo{person}{Jeff Johnson}, \bibinfo{person}{Matthijs
  Douze}, {and} \bibinfo{person}{Herv{\'{e}} J{\'{e}}gou}.}
  \bibinfo{year}{2021}\natexlab{}.
\newblock \showarticletitle{Billion-Scale Similarity Search with GPUs}.
\newblock \bibinfo{journal}{\emph{{IEEE} Trans. Big Data}} \bibinfo{volume}{7},
  \bibinfo{number}{3} (\bibinfo{year}{2021}), \bibinfo{pages}{535--547}.
\newblock
\href{https://doi.org/10.1109/TBDATA.2019.2921572}{doi:\nolinkurl{10.1109/TBDATA.2019.2921572}}


\bibitem[Jones et~al\mbox{.}(2002)]%
        {oncefound}
\bibfield{author}{\bibinfo{person}{William Jones}, \bibinfo{person}{Susan~T.
  Dumais}, {and} \bibinfo{person}{Harry Bruce}.}
  \bibinfo{year}{2002}\natexlab{}.
\newblock \showarticletitle{Once found, what then? {A} study of "keeping"
  behaviors in the personal use of Web information}. In
  \bibinfo{booktitle}{\emph{{ASIST}}} \emph{(\bibinfo{series}{Proc. Assoc. Inf.
  Sci. Technol.}, \bibinfo{number}{1})}. \bibinfo{publisher}{Wiley},
  \bibinfo{pages}{391--402}.
\newblock


\bibitem[Karpukhin et~al\mbox{.}(2020)]%
        {dpr}
\bibfield{author}{\bibinfo{person}{Vladimir Karpukhin}, \bibinfo{person}{Barlas
  Oguz}, \bibinfo{person}{Sewon Min}, \bibinfo{person}{Patrick Lewis},
  \bibinfo{person}{Ledell Wu}, \bibinfo{person}{Sergey Edunov},
  \bibinfo{person}{Danqi Chen}, {and} \bibinfo{person}{Wen{-}tau Yih}.}
  \bibinfo{year}{2020}\natexlab{}.
\newblock \showarticletitle{Dense Passage Retrieval for Open-Domain Question
  Answering}. In \bibinfo{booktitle}{\emph{{EMNLP} {(1)}}}.
  \bibinfo{publisher}{Association for Computational Linguistics},
  \bibinfo{pages}{6769--6781}.
\newblock


\bibitem[Kaviani et~al\mbox{.}(2025)]%
        {myco}
\bibfield{author}{\bibinfo{person}{Darya Kaviani}, \bibinfo{person}{Deevashwer
  Rathee}, \bibinfo{person}{Bhargav Annem}, {and} \bibinfo{person}{Raluca~Ada
  Popa}.} \bibinfo{year}{2025}\natexlab{}.
\newblock \showarticletitle{Myco: Unlocking Polylogarithmic Accesses in
  Metadata-Private Messaging}. In \bibinfo{booktitle}{\emph{{SP}}}.
  \bibinfo{publisher}{{IEEE}}, \bibinfo{pages}{4419--4437}.
\newblock


\bibitem[Kelsey et~al\mbox{.}(2019)]%
        {nistir8213}
\bibfield{author}{\bibinfo{person}{John Kelsey}, \bibinfo{person}{Lu{\'i}s T.
  A.~N. Brand{\~a}o}, \bibinfo{person}{Rene Peralta}, {and}
  \bibinfo{person}{Harold Booth}.} \bibinfo{year}{2019}\natexlab{}.
\newblock \bibinfo{booktitle}{\emph{A Reference for Randomness Beacons: Format
  and Protocol Version 2}}.
\newblock \bibinfo{type}{{T}echnical {R}eport} NIST Interagency or Internal
  Report 8213 (Draft). \bibinfo{institution}{National Institute of Standards
  and Technology}.
\newblock
\href{https://doi.org/10.6028/NIST.IR.8213-draft}{doi:\nolinkurl{10.6028/NIST.IR.8213-draft}}


\bibitem[Kim et~al\mbox{.}(2026)]%
        {privasis}
\bibfield{author}{\bibinfo{person}{Hyunwoo Kim}, \bibinfo{person}{Niloofar
  Mireshghallah}, \bibinfo{person}{Michael Duan}, \bibinfo{person}{Rui Xin},
  \bibinfo{person}{Shuyue~Stella Li}, \bibinfo{person}{Jaehun Jung},
  \bibinfo{person}{David Acuna}, \bibinfo{person}{Qi Pang},
  \bibinfo{person}{Hanshen Xiao}, \bibinfo{person}{G.~Edward Suh},
  \bibinfo{person}{Sewoong Oh}, \bibinfo{person}{Yulia Tsvetkov},
  \bibinfo{person}{Pang~Wei Koh}, {and} \bibinfo{person}{Yejin Choi}.}
  \bibinfo{year}{2026}\natexlab{}.
\newblock \showarticletitle{Privasis: Synthesizing the Largest "Public" Private
  Dataset from Scratch}.
\newblock \bibinfo{journal}{\emph{arXiv preprint arXiv:2602.03183}}
  (\bibinfo{year}{2026}).
\newblock


\bibitem[Kooti et~al\mbox{.}(2015)]%
        {ageofemail}
\bibfield{author}{\bibinfo{person}{Farshad Kooti}, \bibinfo{person}{Luca~Maria
  Aiello}, \bibinfo{person}{Mihajlo Grbovic}, \bibinfo{person}{Kristina
  Lerman}, {and} \bibinfo{person}{Amin Mantrach}.}
  \bibinfo{year}{2015}\natexlab{}.
\newblock \showarticletitle{Evolution of Conversations in the Age of Email
  Overload}. In \bibinfo{booktitle}{\emph{{WWW}}}. \bibinfo{publisher}{{ACM}},
  \bibinfo{pages}{603--613}.
\newblock


\bibitem[Kuvaiskii et~al\mbox{.}(2024)]%
        {graminetdx}
\bibfield{author}{\bibinfo{person}{Dmitrii Kuvaiskii},
  \bibinfo{person}{Dimitrios Stavrakakis}, \bibinfo{person}{Kailun Qin},
  \bibinfo{person}{Cedric Xing}, \bibinfo{person}{Pramod Bhatotia}, {and}
  \bibinfo{person}{Mona Vij}.} \bibinfo{year}{2024}\natexlab{}.
\newblock \showarticletitle{Gramine-TDX: {A} Lightweight {OS} Kernel for
  Confidential VMs}. In \bibinfo{booktitle}{\emph{{CCS}}}.
  \bibinfo{publisher}{{ACM}}, \bibinfo{pages}{4598--4612}.
\newblock


\bibitem[Lee et~al\mbox{.}(2020a)]%
        {offchipattackonhardwareenclaves}
\bibfield{author}{\bibinfo{person}{Dayeol Lee}, \bibinfo{person}{Dongha Jung},
  \bibinfo{person}{Ian~T. Fang}, \bibinfo{person}{Chia{-}Che Tsai}, {and}
  \bibinfo{person}{Raluca~Ada Popa}.} \bibinfo{year}{2020}\natexlab{a}.
\newblock \showarticletitle{An Off-Chip Attack on Hardware Enclaves via the
  Memory Bus}. In \bibinfo{booktitle}{\emph{{USENIX} Security Symposium}}.
  \bibinfo{publisher}{{USENIX} Association}, \bibinfo{pages}{487--504}.
\newblock


\bibitem[Lee et~al\mbox{.}(2020b)]%
        {keystone}
\bibfield{author}{\bibinfo{person}{Dayeol Lee}, \bibinfo{person}{David
  Kohlbrenner}, \bibinfo{person}{Shweta Shinde}, \bibinfo{person}{Krste
  Asanovic}, {and} \bibinfo{person}{Dawn Song}.}
  \bibinfo{year}{2020}\natexlab{b}.
\newblock \showarticletitle{Keystone: an open framework for architecting
  trusted execution environments}. In \bibinfo{booktitle}{\emph{EuroSys}}.
  \bibinfo{publisher}{{ACM}}, \bibinfo{pages}{38:1--38:16}.
\newblock


\bibitem[Lee(2025)]%
        {lee2025slack}
\bibfield{author}{\bibinfo{person}{Robert~A. Lee}.}
  \bibinfo{year}{2025}\natexlab{}.
\newblock \bibinfo{title}{Slack Statistics 2026: Daily Active Users, Enterprise
  Trends \& Market Share Analysis}.
\newblock
  \bibinfo{howpublished}{\url{https://sqmagazine.co.uk/slack-statistics}}.
\newblock
\newblock
\shownote{SQ Magazine.}.


\bibitem[Lee et~al\mbox{.}(2017)]%
        {branchshadowing}
\bibfield{author}{\bibinfo{person}{Sangho Lee}, \bibinfo{person}{Ming{-}Wei
  Shih}, \bibinfo{person}{Prasun Gera}, \bibinfo{person}{Taesoo Kim},
  \bibinfo{person}{Hyesoon Kim}, {and} \bibinfo{person}{Marcus Peinado}.}
  \bibinfo{year}{2017}\natexlab{}.
\newblock \showarticletitle{Inferring Fine-grained Control Flow Inside {SGX}
  Enclaves with Branch Shadowing}. In \bibinfo{booktitle}{\emph{{USENIX}
  Security Symposium}}. \bibinfo{publisher}{{USENIX} Association},
  \bibinfo{pages}{557--574}.
\newblock


\bibitem[Lee et~al\mbox{.}(2025)]%
        {ttime}
\bibfield{author}{\bibinfo{person}{Woomin Lee}, \bibinfo{person}{Taehun Kim},
  \bibinfo{person}{Seunghee Shin}, \bibinfo{person}{Junbeom Hur}, {and}
  \bibinfo{person}{Youngjoo Shin}.} \bibinfo{year}{2025}\natexlab{}.
\newblock \showarticletitle{T-Time: {A} Fine-Grained Timing-Based
  Controlled-Channel Attack Against Intel {TDX}}. In
  \bibinfo{booktitle}{\emph{{ESORICS} {(3)}}} \emph{(\bibinfo{series}{Lecture
  Notes in Computer Science})}. \bibinfo{publisher}{Springer},
  \bibinfo{pages}{323--341}.
\newblock


\bibitem[Lefeuvre et~al\mbox{.}(2023)]%
        {reallysafefastconfidentialio}
\bibfield{author}{\bibinfo{person}{Hugo Lefeuvre}, \bibinfo{person}{David
  Chisnall}, \bibinfo{person}{Marios Kogias}, {and} \bibinfo{person}{Pierre
  Olivier}.} \bibinfo{year}{2023}\natexlab{}.
\newblock \showarticletitle{Towards (Really) Safe and Fast Confidential {I/O}}.
  In \bibinfo{booktitle}{\emph{HotOS}}. \bibinfo{publisher}{{ACM}},
  \bibinfo{pages}{214--222}.
\newblock


\bibitem[Li et~al\mbox{.}(2023)]%
        {sentenceembeddingleaksmoreinformationthanyouexpect}
\bibfield{author}{\bibinfo{person}{Haoran Li}, \bibinfo{person}{Mingshi Xu},
  {and} \bibinfo{person}{Yangqiu Song}.} \bibinfo{year}{2023}\natexlab{}.
\newblock \showarticletitle{Sentence Embedding Leaks More Information than You
  Expect: Generative Embedding Inversion Attack to Recover the Whole Sentence}.
  In \bibinfo{booktitle}{\emph{{ACL} (Findings)}}
  \emph{(\bibinfo{series}{Findings of {ACL}})}. \bibinfo{publisher}{Association
  for Computational Linguistics}, \bibinfo{pages}{14022--14040}.
\newblock


\bibitem[Li et~al\mbox{.}(2025d)]%
        {helloagain}
\bibfield{author}{\bibinfo{person}{Hao Li}, \bibinfo{person}{Chenghao Yang},
  \bibinfo{person}{An Zhang}, \bibinfo{person}{Yang Deng},
  \bibinfo{person}{Xiang Wang}, {and} \bibinfo{person}{Tat{-}Seng Chua}.}
  \bibinfo{year}{2025}\natexlab{d}.
\newblock \showarticletitle{Hello Again! LLM-powered Personalized Agent for
  Long-term Dialogue}. In \bibinfo{booktitle}{\emph{{NAACL} (Long Papers)}}.
  \bibinfo{publisher}{Association for Computational Linguistics},
  \bibinfo{pages}{5259--5276}.
\newblock


\bibitem[Li et~al\mbox{.}(2025a)]%
        {panther}
\bibfield{author}{\bibinfo{person}{Jingyu Li}, \bibinfo{person}{Zhicong Huang},
  \bibinfo{person}{Min Zhang}, \bibinfo{person}{Cheng Hong},
  \bibinfo{person}{Jian Liu}, \bibinfo{person}{Tao Wei}, {and}
  \bibinfo{person}{Wenguang Chen}.} \bibinfo{year}{2025}\natexlab{a}.
\newblock \showarticletitle{Panther: Private Approximate Nearest Neighbor
  Search in the Single Server Setting}. In \bibinfo{booktitle}{\emph{{CCS}}}.
  \bibinfo{publisher}{{ACM}}, \bibinfo{pages}{365--379}.
\newblock


\bibitem[Li et~al\mbox{.}(2024b)]%
        {omniquery}
\bibfield{author}{\bibinfo{person}{Jiahao~Nick Li},
  \bibinfo{person}{Zhuohao~Jerry Zhang}, {and} \bibinfo{person}{Jiaju Ma}.}
  \bibinfo{year}{2024}\natexlab{b}.
\newblock \showarticletitle{OmniQuery: Contextually Augmenting Captured
  Multimodal Memory to Enable Personal Question Answering}.
\newblock \bibinfo{journal}{\emph{arXiv preprint arXiv:2409.08250}}
  (\bibinfo{year}{2024}).
\newblock


\bibitem[Li et~al\mbox{.}(2024a)]%
        {teepitfalls}
\bibfield{author}{\bibinfo{person}{Mengyuan Li}, \bibinfo{person}{Yuheng Yang},
  \bibinfo{person}{Guoxing Chen}, \bibinfo{person}{Mengjia Yan}, {and}
  \bibinfo{person}{Yinqian Zhang}.} \bibinfo{year}{2024}\natexlab{a}.
\newblock \showarticletitle{SoK: Understanding Design Choices and Pitfalls of
  Trusted Execution Environments}. In \bibinfo{booktitle}{\emph{AsiaCCS}}.
  \bibinfo{publisher}{{ACM}}.
\newblock


\bibitem[Li et~al\mbox{.}(2021)]%
        {cipherleaks}
\bibfield{author}{\bibinfo{person}{Mengyuan Li}, \bibinfo{person}{Yinqian
  Zhang}, \bibinfo{person}{Huibo Wang}, \bibinfo{person}{Kang Li}, {and}
  \bibinfo{person}{Yueqiang Cheng}.} \bibinfo{year}{2021}\natexlab{}.
\newblock \showarticletitle{{CIPHERLEAKS:} Breaking Constant-time Cryptography
  on {AMD} {SEV} via the Ciphertext Side Channel}. In
  \bibinfo{booktitle}{\emph{{USENIX} Security Symposium}}.
  \bibinfo{publisher}{{USENIX} Association}, \bibinfo{pages}{717--732}.
\newblock


\bibitem[Li et~al\mbox{.}(2025b)]%
        {plosone}
\bibfield{author}{\bibinfo{person}{Wenbo Li}, \bibinfo{person}{David~S. Lee},
  \bibinfo{person}{Jonathan~L. Stahl}, {and} \bibinfo{person}{Joseph Bayer}.}
  \bibinfo{year}{2025}\natexlab{b}.
\newblock \showarticletitle{Reflecting on Dunbar’s numbers: Individual
  differences in energy allocation to personal relationships}.
\newblock \bibinfo{journal}{\emph{PLOS ONE}} \bibinfo{volume}{20},
  \bibinfo{number}{3} (\bibinfo{year}{2025}), \bibinfo{pages}{e0319604}.
\newblock
\href{https://doi.org/10.1371/journal.pone.0319604}{doi:\nolinkurl{10.1371/journal.pone.0319604}}


\bibitem[Li et~al\mbox{.}(2025c)]%
        {memos}
\bibfield{author}{\bibinfo{person}{Zhiyu Li}, \bibinfo{person}{Shichao Song},
  \bibinfo{person}{Hanyu Wang}, \bibinfo{person}{Simin Niu},
  \bibinfo{person}{Ding Chen}, \bibinfo{person}{Jiawei Yang},
  \bibinfo{person}{Chenyang Xi}, \bibinfo{person}{Huayi Lai},
  \bibinfo{person}{Jihao Zhao}, \bibinfo{person}{Yezhaohui Wang},
  \bibinfo{person}{Junpeng Ren}, \bibinfo{person}{Zehao Lin},
  \bibinfo{person}{Jiahao Huo}, \bibinfo{person}{Tianyi Chen},
  \bibinfo{person}{Kai Chen}, \bibinfo{person}{Kehang Li},
  \bibinfo{person}{Zhiqiang Yin}, \bibinfo{person}{Qingchen Yu},
  \bibinfo{person}{Bo Tang}, \bibinfo{person}{Hongkang Yang},
  \bibinfo{person}{Zhi-Qin~John Xu}, {and} \bibinfo{person}{Feiyu Xiong}.}
  \bibinfo{year}{2025}\natexlab{c}.
\newblock \showarticletitle{MemOS: An Operating System for Memory-Augmented
  Generation (MAG) in Large Language Models}.
\newblock \bibinfo{journal}{\emph{arXiv preprint arXiv:2505.22101}}
  (\bibinfo{year}{2025}).
\newblock


\bibitem[Limitless.ai(2026)]%
        {limitlessai}
\bibfield{author}{\bibinfo{person}{Limitless.ai}.}
  \bibinfo{year}{2026}\natexlab{}.
\newblock \bibinfo{title}{Limitless.ai}.
\newblock \bibinfo{howpublished}{\url{https://www.limitless.ai/}}.
\newblock


\bibitem[Lin et~al\mbox{.}(2025a)]%
        {gpuhammer}
\bibfield{author}{\bibinfo{person}{Chris~S. Lin}, \bibinfo{person}{Joyce Qu},
  {and} \bibinfo{person}{Gururaj Saileshwar}.}
  \bibinfo{year}{2025}\natexlab{a}.
\newblock \showarticletitle{GPUHammer: Rowhammer Attacks on {GPU} Memories are
  Practical}. In \bibinfo{booktitle}{\emph{{USENIX} Security Symposium}}.
  \bibinfo{publisher}{{USENIX} Association}, \bibinfo{pages}{5719--5738}.
\newblock


\bibitem[Lin et~al\mbox{.}(2025b)]%
        {surveyfann}
\bibfield{author}{\bibinfo{person}{Yanjun Lin}, \bibinfo{person}{Kai Zhang},
  \bibinfo{person}{Zhenying He}, \bibinfo{person}{Yinan Jing}, {and}
  \bibinfo{person}{X~Sean Wang}.} \bibinfo{year}{2025}\natexlab{b}.
\newblock \showarticletitle{Survey of Filtered Approximate Nearest Neighbor
  Search over the Vector-Scalar Hybrid Data}.
\newblock \bibinfo{journal}{\emph{arXiv preprint arXiv:2505.06501}}
  (\bibinfo{year}{2025}).
\newblock


\bibitem[Ling et~al\mbox{.}(2004)]%
        {sessionstate}
\bibfield{author}{\bibinfo{person}{Benjamin~C. Ling}, \bibinfo{person}{Emre
  K{\i}c{\i}man}, {and} \bibinfo{person}{Armando Fox}.}
  \bibinfo{year}{2004}\natexlab{}.
\newblock \showarticletitle{Session State: Beyond Soft State}. In
  \bibinfo{booktitle}{\emph{Proceedings of the 1st Symposium on Networked
  Systems Design and Implementation (NSDI)}}.
\newblock


\bibitem[Liu et~al\mbox{.}(2026)]%
        {openanonymity}
\bibfield{author}{\bibinfo{person}{Ken~Ziyu Liu}, \bibinfo{person}{Erik Chi},
  \bibinfo{person}{Sanmi Koyejo}, \bibinfo{person}{J.~Alex Halderman}, {and}
  \bibinfo{person}{Percy Liang}.} \bibinfo{year}{2026}\natexlab{}.
\newblock \bibinfo{title}{Unlinkable Inference as a User Privacy Architecture}.
\newblock \bibinfo{howpublished}{The Open Anonymity Project}.
\newblock
\urldef\tempurl%
\url{https://openanonymity.ai/blog/unlinkable-inference/}
\showURL{%
\tempurl}


\bibitem[Liu et~al\mbox{.}(2023)]%
        {thinkinmemory}
\bibfield{author}{\bibinfo{person}{Lei Liu}, \bibinfo{person}{Xiaoyan Yang},
  \bibinfo{person}{Yue Shen}, \bibinfo{person}{Binbin Hu},
  \bibinfo{person}{Zhiqiang Zhang}, \bibinfo{person}{Jinjie Gu}, {and}
  \bibinfo{person}{Guannan Zhang}.} \bibinfo{year}{2023}\natexlab{}.
\newblock \showarticletitle{Think-in-Memory: Recalling and Post-thinking Enable
  LLMs with Long-Term Memory}.
\newblock \bibinfo{journal}{\emph{arXiv preprint arXiv:2311.08719}}
  (\bibinfo{year}{2023}).
\newblock


\bibitem[Liu et~al\mbox{.}(2024)]%
        {lostinthemiddle}
\bibfield{author}{\bibinfo{person}{Nelson~F. Liu}, \bibinfo{person}{Kevin Lin},
  \bibinfo{person}{John Hewitt}, \bibinfo{person}{Ashwin Paranjape},
  \bibinfo{person}{Michele Bevilacqua}, \bibinfo{person}{Fabio Petroni}, {and}
  \bibinfo{person}{Percy Liang}.} \bibinfo{year}{2024}\natexlab{}.
\newblock \showarticletitle{Lost in the Middle: How Language Models Use Long
  Contexts}.
\newblock \bibinfo{journal}{\emph{Trans. Assoc. Comput. Linguistics}}
  \bibinfo{volume}{12} (\bibinfo{year}{2024}), \bibinfo{pages}{157--173}.
\newblock
\href{https://doi.org/10.1162/TACL\_A\_00638}{doi:\nolinkurl{10.1162/TACL\_A\_00638}}


\bibitem[Mageirakos et~al\mbox{.}(2025)]%
        {crackivf}
\bibfield{author}{\bibinfo{person}{Vasilis Mageirakos}, \bibinfo{person}{Bowen
  Wu}, {and} \bibinfo{person}{Gustavo Alonso}.}
  \bibinfo{year}{2025}\natexlab{}.
\newblock \showarticletitle{Cracking Vector Search Indexes}.
\newblock \bibinfo{journal}{\emph{Proc. {VLDB} Endow.}} \bibinfo{volume}{18},
  \bibinfo{number}{11} (\bibinfo{year}{2025}), \bibinfo{pages}{3951--3964}.
\newblock


\bibitem[Maharana et~al\mbox{.}(2024)]%
        {locomo}
\bibfield{author}{\bibinfo{person}{Adyasha Maharana},
  \bibinfo{person}{Dong{-}Ho Lee}, \bibinfo{person}{Sergey Tulyakov},
  \bibinfo{person}{Mohit Bansal}, \bibinfo{person}{Francesco Barbieri}, {and}
  \bibinfo{person}{Yuwei Fang}.} \bibinfo{year}{2024}\natexlab{}.
\newblock \showarticletitle{Evaluating Very Long-Term Conversational Memory of
  {LLM} Agents}. In \bibinfo{booktitle}{\emph{{ACL} {(1)}}}.
  \bibinfo{publisher}{Association for Computational Linguistics},
  \bibinfo{pages}{13851--13870}.
\newblock


\bibitem[Matetic et~al\mbox{.}(2017)]%
        {rote}
\bibfield{author}{\bibinfo{person}{Sinisa Matetic}, \bibinfo{person}{Mansoor
  Ahmed}, \bibinfo{person}{Kari Kostiainen}, \bibinfo{person}{Aritra Dhar},
  \bibinfo{person}{David~M. Sommer}, \bibinfo{person}{Arthur Gervais},
  \bibinfo{person}{Ari Juels}, {and} \bibinfo{person}{Srdjan Capkun}.}
  \bibinfo{year}{2017}\natexlab{}.
\newblock \showarticletitle{{ROTE:} Rollback Protection for Trusted Execution}.
  In \bibinfo{booktitle}{\emph{{USENIX} Security Symposium}}.
  \bibinfo{publisher}{{USENIX} Association}, \bibinfo{pages}{1289--1306}.
\newblock


\bibitem[McClain et~al\mbox{.}(2023)]%
        {pewprivacy2023}
\bibfield{author}{\bibinfo{person}{Colleen McClain}, \bibinfo{person}{Michelle
  Faverio}, \bibinfo{person}{Monica Anderson}, {and} \bibinfo{person}{Eugenie
  Park}.} \bibinfo{year}{2023}\natexlab{}.
\newblock \bibinfo{title}{Key Findings About {Americans} and Data Privacy}.
\newblock \bibinfo{howpublished}{Pew Research Center}.
\newblock
\urldef\tempurl%
\url{https://www.pewresearch.org/internet/2023/10/18/how-americans-view-data-privacy/}
\showURL{%
\tempurl}


\bibitem[Mei et~al\mbox{.}(2026)]%
        {atmbench}
\bibfield{author}{\bibinfo{person}{Jingbiao Mei}, \bibinfo{person}{Jinghong
  Chen}, \bibinfo{person}{Guangyu Yang}, \bibinfo{person}{Xinyu Hou},
  \bibinfo{person}{Margaret Li}, {and} \bibinfo{person}{Bill Byrne}.}
  \bibinfo{year}{2026}\natexlab{}.
\newblock \showarticletitle{According to Me: Long-Term Personalized Referential
  Memory QA}.
\newblock \bibinfo{journal}{\emph{arXiv preprint arXiv:2603.01990}}
  (\bibinfo{year}{2026}).
\newblock


\bibitem[Meijer and van Gastel(2019)]%
        {selfencryptingdeception}
\bibfield{author}{\bibinfo{person}{Carlo Meijer} {and} \bibinfo{person}{Bernard
  van Gastel}.} \bibinfo{year}{2019}\natexlab{}.
\newblock \showarticletitle{Self-Encrypting Deception: Weaknesses in the
  Encryption of Solid State Drives}. In \bibinfo{booktitle}{\emph{{IEEE}
  Symposium on Security and Privacy}}. \bibinfo{publisher}{{IEEE}},
  \bibinfo{pages}{72--87}.
\newblock


\bibitem[Meulemeester et~al\mbox{.}(2025)]%
        {badram}
\bibfield{author}{\bibinfo{person}{Jesse~De Meulemeester},
  \bibinfo{person}{Luca Wilke}, \bibinfo{person}{David~F. Oswald},
  \bibinfo{person}{Thomas Eisenbarth}, \bibinfo{person}{Ingrid Verbauwhede},
  {and} \bibinfo{person}{Jo~Van Bulck}.} \bibinfo{year}{2025}\natexlab{}.
\newblock \showarticletitle{BadRAM: Practical Memory Aliasing Attacks on
  Trusted Execution Environments}. In \bibinfo{booktitle}{\emph{{SP}}}.
  \bibinfo{publisher}{{IEEE}}, \bibinfo{pages}{4117--4135}.
\newblock


\bibitem[{Microsoft}(2025)]%
        {microsoft2025infiniteworkday}
\bibfield{author}{\bibinfo{person}{{Microsoft}}.}
  \bibinfo{year}{2025}\natexlab{}.
\newblock \bibinfo{title}{Breaking Down the Infinite Workday}.
\newblock
  \bibinfo{howpublished}{\url{https://www.microsoft.com/en-us/worklab/work-trend-index/breaking-down-infinite-workday}}.
\newblock
\newblock
\shownote{2025 Work Trend Index Annual Report.}.


\bibitem[Mishra et~al\mbox{.}(2025)]%
        {oops}
\bibfield{author}{\bibinfo{person}{Nimish Mishra}, \bibinfo{person}{Kislay
  Arya}, \bibinfo{person}{Sarani Bhattacharya}, \bibinfo{person}{Paritosh
  Saxena}, {and} \bibinfo{person}{Debdeep Mukhopadhyay}.}
  \bibinfo{year}{2025}\natexlab{}.
\newblock \showarticletitle{"OOPS!": Out-Of-Band Remote Power Side-Channel
  Attacks on Intel {SGX} and {TDX}}. In \bibinfo{booktitle}{\emph{{DAC}}}.
  \bibinfo{publisher}{{IEEE}}, \bibinfo{pages}{1--7}.
\newblock


\bibitem[Mishra et~al\mbox{.}(2018)]%
        {oblix}
\bibfield{author}{\bibinfo{person}{Pratyush Mishra}, \bibinfo{person}{Rishabh
  Poddar}, \bibinfo{person}{Jerry Chen}, \bibinfo{person}{Alessandro Chiesa},
  {and} \bibinfo{person}{Raluca~Ada Popa}.} \bibinfo{year}{2018}\natexlab{}.
\newblock \showarticletitle{Oblix: An Efficient Oblivious Search Index}. In
  \bibinfo{booktitle}{\emph{{IEEE} Symposium on Security and Privacy}}.
  \bibinfo{publisher}{{IEEE} Computer Society}, \bibinfo{pages}{279--296}.
\newblock


\bibitem[Misono et~al\mbox{.}(2025)]%
        {cvmexplained}
\bibfield{author}{\bibinfo{person}{Masanori Misono}, \bibinfo{person}{Dimitrios
  Stavrakakis}, \bibinfo{person}{Nuno Santos}, {and} \bibinfo{person}{Pramod
  Bhatotia}.} \bibinfo{year}{2025}\natexlab{}.
\newblock \showarticletitle{Confidential VMs Explained: An Empirical Analysis
  of {AMD} {SEV-SNP} and Intel {TDX}}. In
  \bibinfo{booktitle}{\emph{{SIGMETRICS} (Abstracts)}}.
  \bibinfo{publisher}{{ACM}}, \bibinfo{pages}{34--36}.
\newblock


\bibitem[Mitra and Gilbert(2013)]%
        {gossipworkplace}
\bibfield{author}{\bibinfo{person}{Tanushree Mitra} {and} \bibinfo{person}{Eric
  Gilbert}.} \bibinfo{year}{2013}\natexlab{}.
\newblock \showarticletitle{"Analyzing gossip in workplace email" by Tanushree
  Mitra and Eric Gilbert, with Ching-man Au Yeung as coordinator}.
\newblock \bibinfo{journal}{\emph{{SIGWEB} Newsl.}} \bibinfo{volume}{2013},
  \bibinfo{number}{Winter} (\bibinfo{year}{2013}), \bibinfo{pages}{5:1--5:7}.
\newblock


\bibitem[Moghimi et~al\mbox{.}(2017)]%
        {cachezoom}
\bibfield{author}{\bibinfo{person}{Ahmad Moghimi}, \bibinfo{person}{Gorka
  Irazoqui}, {and} \bibinfo{person}{Thomas Eisenbarth}.}
  \bibinfo{year}{2017}\natexlab{}.
\newblock \showarticletitle{CacheZoom: How {SGX} Amplifies the Power of Cache
  Attacks}. In \bibinfo{booktitle}{\emph{{CHES}}}
  \emph{(\bibinfo{series}{Lecture Notes in Computer Science})}.
  \bibinfo{publisher}{Springer}, \bibinfo{pages}{69--90}.
\newblock


\bibitem[Mohoney et~al\mbox{.}(2023)]%
        {hqi}
\bibfield{author}{\bibinfo{person}{Jason Mohoney}, \bibinfo{person}{Anil
  Pacaci}, \bibinfo{person}{Shihabur~Rahman Chowdhury}, \bibinfo{person}{Ali
  Mousavi}, \bibinfo{person}{Ihab~F. Ilyas}, \bibinfo{person}{Umar~Farooq
  Minhas}, \bibinfo{person}{Jeffrey Pound}, {and} \bibinfo{person}{Theodoros
  Rekatsinas}.} \bibinfo{year}{2023}\natexlab{}.
\newblock \showarticletitle{High-Throughput Vector Similarity Search in
  Knowledge Graphs}.
\newblock \bibinfo{journal}{\emph{Proc. {ACM} Manag. Data}}
  \bibinfo{volume}{1}, \bibinfo{number}{2} (\bibinfo{year}{2023}),
  \bibinfo{pages}{197:1--197:25}.
\newblock


\bibitem[Mohoney et~al\mbox{.}(2025)]%
        {quake}
\bibfield{author}{\bibinfo{person}{Jason Mohoney}, \bibinfo{person}{Devesh
  Sarda}, \bibinfo{person}{Mengze Tang}, \bibinfo{person}{Shihabur~Rahman
  Chowdhury}, \bibinfo{person}{Anil Pacaci}, \bibinfo{person}{Ihab~F. Ilyas},
  \bibinfo{person}{Theodoros Rekatsinas}, {and} \bibinfo{person}{Shivaram
  Venkataraman}.} \bibinfo{year}{2025}\natexlab{}.
\newblock \showarticletitle{Quake: Adaptive Indexing for Vector Search}. In
  \bibinfo{booktitle}{\emph{{OSDI}}}. \bibinfo{publisher}{{USENIX}
  Association}, \bibinfo{pages}{153--169}.
\newblock


\bibitem[Morris et~al\mbox{.}(2023)]%
        {textembeddingsrevealalmostastext}
\bibfield{author}{\bibinfo{person}{John~X. Morris}, \bibinfo{person}{Volodymyr
  Kuleshov}, \bibinfo{person}{Vitaly Shmatikov}, {and}
  \bibinfo{person}{Alexander~M. Rush}.} \bibinfo{year}{2023}\natexlab{}.
\newblock \showarticletitle{Text Embeddings Reveal (Almost) As Much As Text}.
  In \bibinfo{booktitle}{\emph{{EMNLP}}}. \bibinfo{publisher}{Association for
  Computational Linguistics}, \bibinfo{pages}{12448--12460}.
\newblock


\bibitem[Murdock et~al\mbox{.}(2020)]%
        {plundervolt}
\bibfield{author}{\bibinfo{person}{Kit Murdock}, \bibinfo{person}{David~F.
  Oswald}, \bibinfo{person}{Flavio~D. Garcia}, \bibinfo{person}{Jo~Van Bulck},
  \bibinfo{person}{Daniel Gruss}, {and} \bibinfo{person}{Frank Piessens}.}
  \bibinfo{year}{2020}\natexlab{}.
\newblock \showarticletitle{Plundervolt: Software-based Fault Injection Attacks
  against Intel {SGX}}. In \bibinfo{booktitle}{\emph{{SP}}}.
  \bibinfo{publisher}{{IEEE}}, \bibinfo{pages}{1466--1482}.
\newblock


\bibitem[Naghibijouybari et~al\mbox{.}(2018)]%
        {renderedinsecure}
\bibfield{author}{\bibinfo{person}{Hoda Naghibijouybari},
  \bibinfo{person}{Ajaya Neupane}, \bibinfo{person}{Zhiyun Qian}, {and}
  \bibinfo{person}{Nael~B. Abu{-}Ghazaleh}.} \bibinfo{year}{2018}\natexlab{}.
\newblock \showarticletitle{Rendered Insecure: {GPU} Side Channel Attacks are
  Practical}. In \bibinfo{booktitle}{\emph{{CCS}}}. \bibinfo{publisher}{{ACM}},
  \bibinfo{pages}{2139--2153}.
\newblock


\bibitem[Nan et~al\mbox{.}(2025)]%
        {nemori}
\bibfield{author}{\bibinfo{person}{Jiayan Nan}, \bibinfo{person}{Wenquan Ma},
  \bibinfo{person}{Wenlong Wu}, {and} \bibinfo{person}{Yize Chen}.}
  \bibinfo{year}{2025}\natexlab{}.
\newblock \showarticletitle{Nemori: Self-Organizing Agent Memory Inspired by
  Cognitive Science}.
\newblock \bibinfo{journal}{\emph{arXiv preprint arXiv:2508.03341}}
  (\bibinfo{year}{2025}).
\newblock


\bibitem[Niu et~al\mbox{.}(2022)]%
        {narrator}
\bibfield{author}{\bibinfo{person}{Jianyu Niu}, \bibinfo{person}{Wei Peng},
  \bibinfo{person}{Xiaokuan Zhang}, {and} \bibinfo{person}{Yinqian Zhang}.}
  \bibinfo{year}{2022}\natexlab{}.
\newblock \showarticletitle{{NARRATOR:} Secure and Practical State Continuity
  for Trusted Execution in the Cloud}. In \bibinfo{booktitle}{\emph{{CCS}}}.
  \bibinfo{publisher}{{ACM}}, \bibinfo{pages}{2385--2399}.
\newblock


\bibitem[{NVIDIA}(2025)]%
        {nvidiasecureai}
\bibfield{author}{\bibinfo{person}{{NVIDIA}}.} \bibinfo{year}{2025}\natexlab{}.
\newblock \bibinfo{title}{{NVIDIA} Secure {AI} with Blackwell and Hopper
  {GPUs}}.
\newblock
  \bibinfo{howpublished}{\url{https://docs.nvidia.com/nvidia-secure-ai-with-blackwell-and-hopper-gpus-whitepaper.pdf}}.
\newblock


\bibitem[OpenAI(2024)]%
        {openaimemory}
\bibfield{author}{\bibinfo{person}{OpenAI}.} \bibinfo{year}{2024}\natexlab{}.
\newblock \bibinfo{title}{Memory and New Controls for {C}hat{GPT}}.
\newblock
  \bibinfo{howpublished}{\url{https://openai.com/index/memory-and-new-controls-for-chatgpt/}}.
\newblock


\bibitem[{OpenAI}(2025)]%
        {openaigov2025}
\bibfield{author}{\bibinfo{person}{{OpenAI}}.} \bibinfo{year}{2025}\natexlab{}.
\newblock \bibinfo{booktitle}{\emph{Report on Government Requests for User Data
  ({H1} 2025)}}.
\newblock \bibinfo{type}{{T}echnical {R}eport}.
\newblock
\urldef\tempurl%
\url{https://cdn.openai.com/trust-and-transparency/report-2025h1-government-requests-for-user-data.pdf}
\showURL{%
\tempurl}


\bibitem[Ostrovsky(1990)]%
        {oram90}
\bibfield{author}{\bibinfo{person}{Rafail Ostrovsky}.}
  \bibinfo{year}{1990}\natexlab{}.
\newblock \showarticletitle{Efficient Computation on Oblivious RAMs}. In
  \bibinfo{booktitle}{\emph{{STOC}}}. \bibinfo{publisher}{{ACM}},
  \bibinfo{pages}{514--523}.
\newblock


\bibitem[Ostrovsky(1992)]%
        {oram96}
\bibfield{author}{\bibinfo{person}{Rafail Ostrovsky}.}
  \bibinfo{year}{1992}\natexlab{}.
\newblock \emph{\bibinfo{title}{Software protection and simulation on oblivious
  RAMs}}.
\newblock \bibinfo{thesistype}{Ph.\,D. Dissertation}.
  \bibinfo{school}{Massachusetts Institute of Technology, Cambridge, MA,
  {USA}}.
\newblock
\urldef\tempurl%
\url{https://hdl.handle.net/1721.1/103684}
\showURL{%
\tempurl}


\bibitem[Packer et~al\mbox{.}(2024)]%
        {memgpt}
\bibfield{author}{\bibinfo{person}{Charles Packer}, \bibinfo{person}{Sarah
  Wooders}, \bibinfo{person}{Kevin Lin}, \bibinfo{person}{Vivian Fang},
  \bibinfo{person}{Shishir~G. Patil}, \bibinfo{person}{Ion Stoica}, {and}
  \bibinfo{person}{Joseph~E. Gonzalez}.} \bibinfo{year}{2024}\natexlab{}.
\newblock \showarticletitle{MemGPT: Towards LLMs as Operating Systems}.
\newblock \bibinfo{journal}{\emph{arXiv preprint arXiv:2310.08560}}
  (\bibinfo{year}{2024}).
\newblock


\bibitem[Pan et~al\mbox{.}(2025)]%
        {secom}
\bibfield{author}{\bibinfo{person}{Zhuoshi Pan}, \bibinfo{person}{Qianhui Wu},
  \bibinfo{person}{Huiqiang Jiang}, \bibinfo{person}{Xufang Luo},
  \bibinfo{person}{Hao Cheng}, \bibinfo{person}{Dongsheng Li},
  \bibinfo{person}{Yuqing Yang}, \bibinfo{person}{Chin{-}Yew Lin},
  \bibinfo{person}{H.~Vicky Zhao}, \bibinfo{person}{Lili Qiu}, {and}
  \bibinfo{person}{Jianfeng Gao}.} \bibinfo{year}{2025}\natexlab{}.
\newblock \showarticletitle{SeCom: On Memory Construction and Retrieval for
  Personalized Conversational Agents}. In \bibinfo{booktitle}{\emph{{ICLR}}}.
  \bibinfo{publisher}{OpenReview.net}.
\newblock


\bibitem[Park et~al\mbox{.}(2023)]%
        {generativeagents}
\bibfield{author}{\bibinfo{person}{Joon~Sung Park}, \bibinfo{person}{Joseph~C.
  O'Brien}, \bibinfo{person}{Carrie~Jun Cai}, \bibinfo{person}{Meredith~Ringel
  Morris}, \bibinfo{person}{Percy Liang}, {and} \bibinfo{person}{Michael~S.
  Bernstein}.} \bibinfo{year}{2023}\natexlab{}.
\newblock \showarticletitle{Generative Agents: Interactive Simulacra of Human
  Behavior}. In \bibinfo{booktitle}{\emph{{UIST}}}. \bibinfo{publisher}{{ACM}},
  \bibinfo{pages}{2:1--2:22}.
\newblock


\bibitem[Parno et~al\mbox{.}(2011)]%
        {memoir}
\bibfield{author}{\bibinfo{person}{Bryan Parno}, \bibinfo{person}{Jacob~R.
  Lorch}, \bibinfo{person}{John~R. Douceur}, \bibinfo{person}{James~W.
  Mickens}, {and} \bibinfo{person}{Jonathan~M. McCune}.}
  \bibinfo{year}{2011}\natexlab{}.
\newblock \showarticletitle{Memoir: Practical State Continuity for Protected
  Modules}. In \bibinfo{booktitle}{\emph{{IEEE} Symposium on Security and
  Privacy}}. \bibinfo{publisher}{{IEEE} Computer Society},
  \bibinfo{pages}{379--394}.
\newblock


\bibitem[Pass et~al\mbox{.}(2017)]%
        {formalabstractionsforenclaveattestation}
\bibfield{author}{\bibinfo{person}{Rafael Pass}, \bibinfo{person}{Elaine Shi},
  {and} \bibinfo{person}{Florian Tram{\`{e}}r}.}
  \bibinfo{year}{2017}\natexlab{}.
\newblock \showarticletitle{Formal Abstractions for Attested Execution Secure
  Processors}. In \bibinfo{booktitle}{\emph{{EUROCRYPT} {(1)}}}
  \emph{(\bibinfo{series}{Lecture Notes in Computer Science})}.
  \bibinfo{pages}{260--289}.
\newblock


\bibitem[Patel et~al\mbox{.}(2024)]%
        {acorn}
\bibfield{author}{\bibinfo{person}{Liana Patel}, \bibinfo{person}{Peter Kraft},
  \bibinfo{person}{Carlos Guestrin}, {and} \bibinfo{person}{Matei Zaharia}.}
  \bibinfo{year}{2024}\natexlab{}.
\newblock \showarticletitle{{ACORN:} Performant and Predicate-Agnostic Search
  Over Vector Embeddings and Structured Data}.
\newblock \bibinfo{journal}{\emph{Proc. {ACM} Manag. Data}}
  \bibinfo{volume}{2}, \bibinfo{number}{3} (\bibinfo{year}{2024}),
  \bibinfo{pages}{120}.
\newblock


\bibitem[{PCI-SIG}(2020)]%
        {pcieide}
\bibfield{author}{\bibinfo{person}{{PCI-SIG}}.}
  \bibinfo{year}{2020}\natexlab{}.
\newblock \bibinfo{title}{{PCIe} {Base} {Specification} {Revision} 5.0:
  {Integrity} and {Data} {Encryption} ({IDE}) {ECN}}.
\newblock
  \bibinfo{howpublished}{\url{https://pcisig.com/PCI\%20Express/ECN/Base/IntegrityandDataEncryption}}.
\newblock


\bibitem[{PCI-SIG}(2022)]%
        {pcietdisp}
\bibfield{author}{\bibinfo{person}{{PCI-SIG}}.}
  \bibinfo{year}{2022}\natexlab{}.
\newblock \bibinfo{title}{{TEE} {Device} {Interface} {Security} {Protocol}
  ({TDISP}), {Revision} 1.0}.
\newblock
  \bibinfo{howpublished}{\url{https://pcisig.com/PCI\%20Express/ECN/Base/TEEDeviceInterfaceSecurityProtocol}}.
\newblock


\bibitem[Poliakov and Shvai(2024)]%
        {multimetarag}
\bibfield{author}{\bibinfo{person}{Mykhailo Poliakov} {and}
  \bibinfo{person}{Nadiya Shvai}.} \bibinfo{year}{2024}\natexlab{}.
\newblock \showarticletitle{Multi-Meta-RAG: Improving RAG for Multi-Hop Queries
  using Database Filtering with LLM-Extracted Metadata}.
\newblock \bibinfo{journal}{\emph{arXiv preprint arXiv:2406.13213}}
  (\bibinfo{year}{2024}).
\newblock


\bibitem[Priebe et~al\mbox{.}(2018)]%
        {enclavedb}
\bibfield{author}{\bibinfo{person}{Christian Priebe}, \bibinfo{person}{Kapil
  Vaswani}, {and} \bibinfo{person}{Manuel Costa}.}
  \bibinfo{year}{2018}\natexlab{}.
\newblock \showarticletitle{EnclaveDB: {A} Secure Database Using {SGX}}. In
  \bibinfo{booktitle}{\emph{{IEEE} Symposium on Security and Privacy}}.
  \bibinfo{publisher}{{IEEE} Computer Society}, \bibinfo{pages}{264--278}.
\newblock


\bibitem[Qiang et~al\mbox{.}(2024)]%
        {genai2024worldbank}
\bibfield{author}{\bibinfo{person}{Christine~Zhenwei Qiang},
  \bibinfo{person}{Yan Liu}, {and} \bibinfo{person}{He Wang}.}
  \bibinfo{year}{2024}\natexlab{}.
\newblock \bibinfo{title}{Who on earth is using generative {AI}?}
\newblock
  \bibinfo{howpublished}{\url{https://blogs.worldbank.org/en/digital-development/who-on-earth-is-using-generative-ai-}}.
\newblock


\bibitem[Rasmussen et~al\mbox{.}(2025)]%
        {zep}
\bibfield{author}{\bibinfo{person}{Preston Rasmussen}, \bibinfo{person}{Pavlo
  Paliychuk}, \bibinfo{person}{Travis Beauvais}, \bibinfo{person}{Jack Ryan},
  {and} \bibinfo{person}{Daniel Chalef}.} \bibinfo{year}{2025}\natexlab{}.
\newblock \showarticletitle{Zep: A Temporal Knowledge Graph Architecture for
  Agent Memory}.
\newblock \bibinfo{journal}{\emph{arXiv preprint arXiv:2501.13956}}
  (\bibinfo{year}{2025}).
\newblock


\bibitem[Rauscher et~al\mbox{.}(2025)]%
        {tdxploit}
\bibfield{author}{\bibinfo{person}{Fabian Rauscher}, \bibinfo{person}{Luca
  Wilke}, \bibinfo{person}{Hannes Weissteiner}, \bibinfo{person}{Thomas
  Eisenbarth}, {and} \bibinfo{person}{Daniel Gruss}.}
  \bibinfo{year}{2025}\natexlab{}.
\newblock \showarticletitle{TDXploit: Novel Techniques for Single-Stepping and
  Cache Attacks on Intel {TDX}}. In \bibinfo{booktitle}{\emph{{USENIX} Security
  Symposium}}. \bibinfo{publisher}{{USENIX} Association},
  \bibinfo{pages}{1207--1222}.
\newblock


\bibitem[Rayner and Clifton(2009)]%
        {raynerwpm}
\bibfield{author}{\bibinfo{person}{Keith Rayner} {and} \bibinfo{person}{Charles
  Clifton}.} \bibinfo{year}{2009}\natexlab{}.
\newblock \showarticletitle{Language processing in reading and speech
  perception is fast and incremental: Implications for event-related potential
  research}.
\newblock \bibinfo{journal}{\emph{Biological Psychology}} \bibinfo{volume}{80},
  \bibinfo{number}{1} (\bibinfo{year}{2009}), \bibinfo{pages}{4--9}.
\newblock
\href{https://doi.org/10.1016/j.biopsycho.2008.05.002}{doi:\nolinkurl{10.1016/j.biopsycho.2008.05.002}}


\bibitem[Rebello et~al\mbox{.}(2021)]%
        {fsyncfailures}
\bibfield{author}{\bibinfo{person}{Anthony Rebello}, \bibinfo{person}{Yuvraj
  Patel}, \bibinfo{person}{Ramnatthan Alagappan}, \bibinfo{person}{Andrea~C.
  Arpaci{-}Dusseau}, {and} \bibinfo{person}{Remzi~H. Arpaci{-}Dusseau}.}
  \bibinfo{year}{2021}\natexlab{}.
\newblock \showarticletitle{Can Applications Recover from fsync Failures?}
\newblock \bibinfo{journal}{\emph{{ACM} Trans. Storage}} \bibinfo{volume}{17},
  \bibinfo{number}{2} (\bibinfo{year}{2021}), \bibinfo{pages}{12:1--12:30}.
\newblock
\href{https://doi.org/10.1145/3450338}{doi:\nolinkurl{10.1145/3450338}}


\bibitem[{Reclaim.ai}(2024)]%
        {reclaim2024}
\bibfield{author}{\bibinfo{person}{{Reclaim.ai}}.}
  \bibinfo{year}{2024}\natexlab{}.
\newblock \bibinfo{title}{Smart Meetings Trends Report (145+ Stats)}.
\newblock
  \bibinfo{howpublished}{\url{https://reclaim.ai/blog/smart-meetings-report}}.
\newblock


\bibitem[Ren et~al\mbox{.}(2015)]%
        {ringoram}
\bibfield{author}{\bibinfo{person}{Ling Ren}, \bibinfo{person}{Christopher~W.
  Fletcher}, \bibinfo{person}{Albert Kwon}, \bibinfo{person}{Emil Stefanov},
  \bibinfo{person}{Elaine Shi}, \bibinfo{person}{Marten van Dijk}, {and}
  \bibinfo{person}{Srinivas Devadas}.} \bibinfo{year}{2015}\natexlab{}.
\newblock \showarticletitle{Constants Count: Practical Improvements to
  Oblivious {RAM}}. In \bibinfo{booktitle}{\emph{{USENIX} Security Symposium}}.
  \bibinfo{publisher}{{USENIX} Association}, \bibinfo{pages}{415--430}.
\newblock


\bibitem[{Rust Random Contributors}(2025)]%
        {randchacha}
\bibfield{author}{\bibinfo{person}{{Rust Random Contributors}}.}
  \bibinfo{year}{2025}\natexlab{}.
\newblock \bibinfo{title}{{rand\_chacha}: {ChaCha} random number generator}.
\newblock \bibinfo{howpublished}{\url{https://github.com/rust-random/rand}}.
\newblock


\bibitem[{RustCrypto Contributors}(2025)]%
        {aesgcm}
\bibfield{author}{\bibinfo{person}{{RustCrypto Contributors}}.}
  \bibinfo{year}{2025}\natexlab{}.
\newblock \bibinfo{title}{{aes-gcm}: {AES-GCM} authenticated encryption}.
\newblock
  \bibinfo{howpublished}{\url{https://github.com/RustCrypto/AEADs/tree/master/aes-gcm}}.
\newblock


\bibitem[Sappelli et~al\mbox{.}(2016)]%
        {emailintent}
\bibfield{author}{\bibinfo{person}{Maya Sappelli}, \bibinfo{person}{Gabriella
  Pasi}, \bibinfo{person}{Suzan Verberne}, \bibinfo{person}{Maaike de Boer},
  {and} \bibinfo{person}{Wessel Kraaij}.} \bibinfo{year}{2016}\natexlab{}.
\newblock \showarticletitle{Assessing e-mail intent and tasks in e-mail
  messages}.
\newblock \bibinfo{journal}{\emph{Inf. Sci.}}  \bibinfo{volume}{358-359}
  (\bibinfo{year}{2016}), \bibinfo{pages}{1--17}.
\newblock


\bibitem[Sarkar et~al\mbox{.}(2021)]%
        {parallelchat}
\bibfield{author}{\bibinfo{person}{Advait Sarkar}, \bibinfo{person}{Sean
  Rintel}, \bibinfo{person}{Damian Borowiec}, \bibinfo{person}{Rachel
  Bergmann}, \bibinfo{person}{Sharon Gillett}, \bibinfo{person}{Danielle
  Bragg}, \bibinfo{person}{Nancy Baym}, {and} \bibinfo{person}{Abigail
  Sellen}.} \bibinfo{year}{2021}\natexlab{}.
\newblock \showarticletitle{The promise and peril of parallel chat in video
  meetings for work}. In \bibinfo{booktitle}{\emph{{CHI} Extended Abstracts}}.
  \bibinfo{publisher}{{ACM}}, \bibinfo{pages}{260:1--260:8}.
\newblock


\bibitem[Sarthi et~al\mbox{.}(2024)]%
        {raptor}
\bibfield{author}{\bibinfo{person}{Parth Sarthi}, \bibinfo{person}{Salman
  Abdullah}, \bibinfo{person}{Aditi Tuli}, \bibinfo{person}{Shubh Khanna},
  \bibinfo{person}{Anna Goldie}, {and} \bibinfo{person}{Christopher~D.
  Manning}.} \bibinfo{year}{2024}\natexlab{}.
\newblock \showarticletitle{{RAPTOR:} Recursive Abstractive Processing for
  Tree-Organized Retrieval}. In \bibinfo{booktitle}{\emph{{ICLR}}}.
  \bibinfo{publisher}{OpenReview.net}.
\newblock


\bibitem[Schl{\"{u}}ter et~al\mbox{.}(2024a)]%
        {wesee}
\bibfield{author}{\bibinfo{person}{Benedict Schl{\"{u}}ter},
  \bibinfo{person}{Supraja Sridhara}, \bibinfo{person}{Andrin Bertschi}, {and}
  \bibinfo{person}{Shweta Shinde}.} \bibinfo{year}{2024}\natexlab{a}.
\newblock \showarticletitle{WeSee: Using Malicious {\#}VC Interrupts to Break
  {AMD} {SEV-SNP}}. In \bibinfo{booktitle}{\emph{{SP}}}.
  \bibinfo{publisher}{{IEEE}}, \bibinfo{pages}{4220--4238}.
\newblock


\bibitem[Schl{\"{u}}ter et~al\mbox{.}(2024b)]%
        {heckler}
\bibfield{author}{\bibinfo{person}{Benedict Schl{\"{u}}ter},
  \bibinfo{person}{Supraja Sridhara}, \bibinfo{person}{Mark Kuhne},
  \bibinfo{person}{Andrin Bertschi}, {and} \bibinfo{person}{Shweta Shinde}.}
  \bibinfo{year}{2024}\natexlab{b}.
\newblock \showarticletitle{{HECKLER:} Breaking Confidential VMs with Malicious
  Interrupts}. In \bibinfo{booktitle}{\emph{{USENIX} Security Symposium}}.
  \bibinfo{publisher}{{USENIX} Association}.
\newblock


\bibitem[Schroeder and Gibson(2007)]%
        {diskfailuresrealworld}
\bibfield{author}{\bibinfo{person}{Bianca Schroeder} {and}
  \bibinfo{person}{Garth~A. Gibson}.} \bibinfo{year}{2007}\natexlab{}.
\newblock \showarticletitle{Disk Failures in the Real World: What Does an
  {MTTF} of 1, 000, 000 Hours Mean to You?}. In
  \bibinfo{booktitle}{\emph{{FAST}}}. \bibinfo{publisher}{{USENIX}},
  \bibinfo{pages}{1--16}.
\newblock


\bibitem[{Seagate Technology}(2020)]%
        {seagate2020}
\bibfield{author}{\bibinfo{person}{{Seagate Technology}}.}
  \bibinfo{year}{2020}\natexlab{}.
\newblock \bibinfo{title}{Rethink Data: Put More of Your Business Data to
  Work--From Edge to Cloud}.
\newblock
  \bibinfo{howpublished}{\url{https://www.seagate.com/files/www-content/our-story/rethink-data/files/Rethink_Data_Report_2020.pdf}}.
\newblock


\bibitem[Sehwag et~al\mbox{.}(2025)]%
        {awsagentcorememory}
\bibfield{author}{\bibinfo{person}{Akarsha Sehwag}, \bibinfo{person}{Dani
  Mitchell}, \bibinfo{person}{Gopikrishnan Anilkumar}, \bibinfo{person}{Mani
  Khanuja}, {and} \bibinfo{person}{Noor Randhawa}.}
  \bibinfo{year}{2025}\natexlab{}.
\newblock \bibinfo{title}{Amazon Bedrock AgentCore Memory: Building
  context-aware agents}.
\newblock
  \bibinfo{howpublished}{\url{https://aws.amazon.com/blogs/machine-learning/amazon-bedrock-agentcore-memory-building-context-aware-agents/}}.
\newblock


\bibitem[Shen et~al\mbox{.}(2020)]%
        {occlum}
\bibfield{author}{\bibinfo{person}{Youren Shen}, \bibinfo{person}{Hongliang
  Tian}, \bibinfo{person}{Yu Chen}, \bibinfo{person}{Kang Chen},
  \bibinfo{person}{Runji Wang}, \bibinfo{person}{Yi Xu}, \bibinfo{person}{Yubin
  Xia}, {and} \bibinfo{person}{Shoumeng Yan}.} \bibinfo{year}{2020}\natexlab{}.
\newblock \showarticletitle{Occlum: Secure and Efficient Multitasking Inside a
  Single Enclave of Intel {SGX}}. In \bibinfo{booktitle}{\emph{{ASPLOS}}}.
  \bibinfo{publisher}{{ACM}}, \bibinfo{pages}{955--970}.
\newblock


\bibitem[Shezan et~al\mbox{.}(2020)]%
        {voicepersonalassistant}
\bibfield{author}{\bibinfo{person}{Faysal~Hossain Shezan},
  \bibinfo{person}{Hang Hu}, \bibinfo{person}{Jiamin Wang},
  \bibinfo{person}{Gang Wang}, {and} \bibinfo{person}{Yuan Tian}.}
  \bibinfo{year}{2020}\natexlab{}.
\newblock \showarticletitle{Read Between the Lines: An Empirical Measurement of
  Sensitive Applications of Voice Personal Assistant Systems}. In
  \bibinfo{booktitle}{\emph{{WWW}}}. \bibinfo{publisher}{{ACM} / {IW3C2}},
  \bibinfo{pages}{1006--1017}.
\newblock


\bibitem[Shi et~al\mbox{.}(2011)]%
        {binarytreeoram}
\bibfield{author}{\bibinfo{person}{Elaine Shi}, \bibinfo{person}{T.{-}H.~Hubert
  Chan}, \bibinfo{person}{Emil Stefanov}, {and} \bibinfo{person}{Mingfei Li}.}
  \bibinfo{year}{2011}\natexlab{}.
\newblock \showarticletitle{Oblivious {RAM} with O((logN)3) Worst-Case Cost}.
  In \bibinfo{booktitle}{\emph{{ASIACRYPT}}} \emph{(\bibinfo{series}{Lecture
  Notes in Computer Science})}. \bibinfo{publisher}{Springer},
  \bibinfo{pages}{197--214}.
\newblock


\bibitem[Shinn et~al\mbox{.}(2023)]%
        {reflexion}
\bibfield{author}{\bibinfo{person}{Noah Shinn}, \bibinfo{person}{Federico
  Cassano}, \bibinfo{person}{Ashwin Gopinath}, \bibinfo{person}{Karthik
  Narasimhan}, {and} \bibinfo{person}{Shunyu Yao}.}
  \bibinfo{year}{2023}\natexlab{}.
\newblock \showarticletitle{Reflexion: language agents with verbal
  reinforcement learning}. In \bibinfo{booktitle}{\emph{NeurIPS}}.
\newblock


\bibitem[Singh et~al\mbox{.}(2021)]%
        {freshdiskann}
\bibfield{author}{\bibinfo{person}{Aditi Singh}, \bibinfo{person}{Suhas~Jayaram
  Subramanya}, \bibinfo{person}{Ravishankar Krishnaswamy}, {and}
  \bibinfo{person}{Harsha~Vardhan Simhadri}.} \bibinfo{year}{2021}\natexlab{}.
\newblock \showarticletitle{FreshDiskANN: A Fast and Accurate Graph-Based ANN
  Index for Streaming Similarity Search}.
\newblock \bibinfo{journal}{\emph{arXiv preprint arXiv:2105.09613}}
  (\bibinfo{year}{2021}).
\newblock


\bibitem[Skjuve et~al\mbox{.}(2023)]%
        {longitudinalstudyofselfdisclosure}
\bibfield{author}{\bibinfo{person}{Marita Skjuve}, \bibinfo{person}{Asbj{\o}rn
  F{\o}lstad}, {and} \bibinfo{person}{Petter~Bae Brandtz{\ae}g}.}
  \bibinfo{year}{2023}\natexlab{}.
\newblock \showarticletitle{A Longitudinal Study of Self-Disclosure in
  Human-Chatbot Relationships}.
\newblock \bibinfo{journal}{\emph{Interact. Comput.}} \bibinfo{volume}{35},
  \bibinfo{number}{1} (\bibinfo{year}{2023}), \bibinfo{pages}{24--39}.
\newblock


\bibitem[Smith(2011)]%
        {pewmessage}
\bibfield{author}{\bibinfo{person}{Aaron Smith}.}
  \bibinfo{year}{2011}\natexlab{}.
\newblock \bibinfo{title}{How Americans Use Text Messaging}.
\newblock
  \bibinfo{howpublished}{\url{https://www.pewresearch.org/internet/2011/09/19/how-americans-use-text-messaging/}}.
\newblock
\newblock
\shownote{Pew Research Center.}.


\bibitem[Smith(2025)]%
        {ringcrate}
\bibfield{author}{\bibinfo{person}{Brian Smith}.}
  \bibinfo{year}{2025}\natexlab{}.
\newblock \bibinfo{title}{{ring}: Safe, fast, small crypto using {Rust}}.
\newblock \bibinfo{howpublished}{\url{https://github.com/briansmith/ring}}.
\newblock


\bibitem[Soleimani et~al\mbox{.}(2025)]%
        {weave}
\bibfield{author}{\bibinfo{person}{Mahdi Soleimani}, \bibinfo{person}{Grace
  Jia}, {and} \bibinfo{person}{Anurag Khandelwal}.}
  \bibinfo{year}{2025}\natexlab{}.
\newblock \showarticletitle{Weave: Efficient and Expressive Oblivious Analytics
  at Scale}. In \bibinfo{booktitle}{\emph{{OSDI}}}.
  \bibinfo{publisher}{{USENIX} Association}, \bibinfo{pages}{939--955}.
\newblock


\bibitem[Soliman et~al\mbox{.}(2022)]%
        {mulch}
\bibfield{author}{\bibinfo{person}{Hadeel Soliman}, \bibinfo{person}{Lingfei
  Zhao}, \bibinfo{person}{Zhipeng Huang}, \bibinfo{person}{Subhadeep Paul},
  {and} \bibinfo{person}{Kevin~S. Xu}.} \bibinfo{year}{2022}\natexlab{}.
\newblock \showarticletitle{The Multivariate Community Hawkes Model for
  Dependent Relational Events in Continuous-time Networks}. In
  \bibinfo{booktitle}{\emph{{ICML}}} \emph{(\bibinfo{series}{Proceedings of
  Machine Learning Research})}. \bibinfo{publisher}{{PMLR}},
  \bibinfo{pages}{20329--20346}.
\newblock


\bibitem[Song and Raghunathan(2020)]%
        {informationleakageinembeddings}
\bibfield{author}{\bibinfo{person}{Congzheng Song} {and}
  \bibinfo{person}{Ananth Raghunathan}.} \bibinfo{year}{2020}\natexlab{}.
\newblock \showarticletitle{Information Leakage in Embedding Models}. In
  \bibinfo{booktitle}{\emph{{CCS}}}. \bibinfo{publisher}{{ACM}},
  \bibinfo{pages}{377--390}.
\newblock


\bibitem[{Stanford Institute for Human-Centered AI}(2025)]%
        {haiindex2025}
\bibfield{author}{\bibinfo{person}{{Stanford Institute for Human-Centered
  AI}}.} \bibinfo{year}{2025}\natexlab{}.
\newblock \bibinfo{title}{The 2025 {AI} Index Report}.
\newblock
  \bibinfo{howpublished}{\url{https://hai.stanford.edu/ai-index/2025-ai-index-report}}.
\newblock


\bibitem[Stefanov et~al\mbox{.}(2018)]%
        {pathoram}
\bibfield{author}{\bibinfo{person}{Emil Stefanov}, \bibinfo{person}{Marten van
  Dijk}, \bibinfo{person}{Elaine Shi}, \bibinfo{person}{T.{-}H.~Hubert Chan},
  \bibinfo{person}{Christopher~W. Fletcher}, \bibinfo{person}{Ling Ren},
  \bibinfo{person}{Xiangyao Yu}, {and} \bibinfo{person}{Srinivas Devadas}.}
  \bibinfo{year}{2018}\natexlab{}.
\newblock \showarticletitle{Path {ORAM:} An Extremely Simple Oblivious {RAM}
  Protocol}.
\newblock \bibinfo{journal}{\emph{J. {ACM}}} \bibinfo{volume}{65},
  \bibinfo{number}{4} (\bibinfo{year}{2018}), \bibinfo{pages}{18:1--18:26}.
\newblock
\href{https://doi.org/10.1145/3177872}{doi:\nolinkurl{10.1145/3177872}}


\bibitem[Su et~al\mbox{.}(2026)]%
        {beyonddialoguetime}
\bibfield{author}{\bibinfo{person}{Miao Su}, \bibinfo{person}{Yucan Guo},
  \bibinfo{person}{Zhongni Hou}, \bibinfo{person}{Long Bai},
  \bibinfo{person}{Zixuan Li}, \bibinfo{person}{Yufei Zhang},
  \bibinfo{person}{Guojun Yin}, \bibinfo{person}{Wei Lin},
  \bibinfo{person}{Xiaolong Jin}, \bibinfo{person}{Jiafeng Guo},
  {et~al\mbox{.}}} \bibinfo{year}{2026}\natexlab{}.
\newblock \showarticletitle{Beyond Dialogue Time: Temporal Semantic Memory for
  Personalized LLM Agents}.
\newblock \bibinfo{journal}{\emph{arXiv preprint arXiv:2601.07468}}
  (\bibinfo{year}{2026}).
\newblock


\bibitem[Subramanya et~al\mbox{.}(2019)]%
        {diskann}
\bibfield{author}{\bibinfo{person}{Suhas~Jayaram Subramanya},
  \bibinfo{person}{Devvrit}, \bibinfo{person}{Harsha~Vardhan Simhadri},
  \bibinfo{person}{Ravishankar Krishnaswamy}, {and} \bibinfo{person}{Rohan
  Kadekodi}.} \bibinfo{year}{2019}\natexlab{}.
\newblock \showarticletitle{{DiskANN}: Fast Accurate Billion-point Nearest
  Neighbor Search on a Single Node}. In \bibinfo{booktitle}{\emph{{NeurIPS}}}.
  \bibinfo{pages}{13748--13758}.
\newblock


\bibitem[Tan et~al\mbox{.}(2025b)]%
        {personabench}
\bibfield{author}{\bibinfo{person}{Juntao Tan}, \bibinfo{person}{Liangwei
  Yang}, \bibinfo{person}{Zuxin Liu}, \bibinfo{person}{Zhiwei Liu},
  \bibinfo{person}{Rithesh~R. N.}, \bibinfo{person}{Tulika~Manoj Awalgaonkar},
  \bibinfo{person}{Jianguo Zhang}, \bibinfo{person}{Weiran Yao},
  \bibinfo{person}{Ming Zhu}, \bibinfo{person}{Shirley Kokane},
  \bibinfo{person}{Silvio Savarese}, \bibinfo{person}{Huan Wang},
  \bibinfo{person}{Caiming Xiong}, {and} \bibinfo{person}{Shelby Heinecke}.}
  \bibinfo{year}{2025}\natexlab{b}.
\newblock \showarticletitle{PersonaBench: Evaluating {AI} Models on
  Understanding Personal Information through Accessing (Synthetic) Private User
  Data}. In \bibinfo{booktitle}{\emph{{ACL} (Findings)}}
  \emph{(\bibinfo{series}{Findings of {ACL}})}. \bibinfo{publisher}{Association
  for Computational Linguistics}, \bibinfo{pages}{878--893}.
\newblock


\bibitem[Tan et~al\mbox{.}(2023)]%
        {timelineqa}
\bibfield{author}{\bibinfo{person}{Wang{-}Chiew Tan}, \bibinfo{person}{Jane
  Dwivedi{-}Yu}, \bibinfo{person}{Yuliang Li}, \bibinfo{person}{Lambert
  Mathias}, \bibinfo{person}{Marzieh Saeidi}, \bibinfo{person}{Jing~Nathan
  Yan}, {and} \bibinfo{person}{Alon~Y. Halevy}.}
  \bibinfo{year}{2023}\natexlab{}.
\newblock \showarticletitle{TimelineQA: {A} Benchmark for Question Answering
  over Timelines}. In \bibinfo{booktitle}{\emph{{ACL} (Findings)}}
  \emph{(\bibinfo{series}{Findings of {ACL}})}. \bibinfo{publisher}{Association
  for Computational Linguistics}, \bibinfo{pages}{77--91}.
\newblock


\bibitem[Tan et~al\mbox{.}(2026)]%
        {privgemo}
\bibfield{author}{\bibinfo{person}{Xingyu Tan}, \bibinfo{person}{Xiaoyang
  Wang}, \bibinfo{person}{Qing Liu}, \bibinfo{person}{Xiwei Xu},
  \bibinfo{person}{Xin Yuan}, \bibinfo{person}{Liming Zhu}, {and}
  \bibinfo{person}{Wenjie Zhang}.} \bibinfo{year}{2026}\natexlab{}.
\newblock \showarticletitle{PrivGemo: Privacy-Preserving Dual-Tower Graph
  Retrieval for Empowering LLM Reasoning with Memory Augmentation}.
\newblock \bibinfo{journal}{\emph{arXiv preprint arXiv:2601.08739}}
  (\bibinfo{year}{2026}).
\newblock


\bibitem[Tan et~al\mbox{.}(2025a)]%
        {rmm}
\bibfield{author}{\bibinfo{person}{Zhen Tan}, \bibinfo{person}{Jun Yan},
  \bibinfo{person}{I{-}Hung Hsu}, \bibinfo{person}{Rujun Han},
  \bibinfo{person}{Zifeng Wang}, \bibinfo{person}{Long~T. Le},
  \bibinfo{person}{Yiwen Song}, \bibinfo{person}{Yanfei Chen},
  \bibinfo{person}{Hamid Palangi}, \bibinfo{person}{George Lee},
  \bibinfo{person}{Anand~Rajan Iyer}, \bibinfo{person}{Tianlong Chen},
  \bibinfo{person}{Huan Liu}, \bibinfo{person}{Chen{-}Yu Lee}, {and}
  \bibinfo{person}{Tomas Pfister}.} \bibinfo{year}{2025}\natexlab{a}.
\newblock \showarticletitle{In Prospect and Retrospect: Reflective Memory
  Management for Long-term Personalized Dialogue Agents}. In
  \bibinfo{booktitle}{\emph{{ACL} {(1)}}}. \bibinfo{publisher}{Association for
  Computational Linguistics}, \bibinfo{pages}{8416--8439}.
\newblock


\bibitem[{TechCrunch}(2026)]%
        {metaraybannametag}
\bibfield{author}{\bibinfo{person}{{TechCrunch}}.}
  \bibinfo{year}{2026}\natexlab{}.
\newblock \bibinfo{title}{Meta plans to add facial recognition to its smart
  glasses, report claims}.
\newblock
  \bibinfo{howpublished}{\url{https://techcrunch.com/2026/02/13/meta-plans-to-add-facial-recognition-to-its-smart-glasses-report-claims/}}.
\newblock


\bibitem[{The Linux Kernel Documentation}(2026)]%
        {linuxkerneltdx}
\bibfield{author}{\bibinfo{person}{{The Linux Kernel Documentation}}.}
  \bibinfo{year}{2026}\natexlab{}.
\newblock \bibinfo{title}{Intel Trust Domain Extensions ({TDX})}.
\newblock
  \bibinfo{howpublished}{\url{https://www.kernel.org/doc/html/next/x86/tdx.html}}.
\newblock


\bibitem[{Tinfoil Inc.}(2026a)]%
        {tinfoilmodelrouter}
\bibfield{author}{\bibinfo{person}{{Tinfoil Inc.}}}
  \bibinfo{year}{2026}\natexlab{a}.
\newblock \bibinfo{title}{Confidential Model Router}.
\newblock
  \bibinfo{howpublished}{\url{https://github.com/tinfoilsh/confidential-model-router}}.
\newblock


\bibitem[{Tinfoil Inc.}(2026b)]%
        {tinfoilcvm}
\bibfield{author}{\bibinfo{person}{{Tinfoil Inc.}}}
  \bibinfo{year}{2026}\natexlab{b}.
\newblock \bibinfo{title}{Confidential {VM} Image}.
\newblock \bibinfo{howpublished}{\url{https://github.com/tinfoilsh/cvmimage}}.
\newblock


\bibitem[{Tinfoil Inc.}(2026c)]%
        {tinfoilconfidentiality}
\bibfield{author}{\bibinfo{person}{{Tinfoil Inc.}}}
  \bibinfo{year}{2026}\natexlab{c}.
\newblock \bibinfo{title}{A Primer on Secure Enclaves}.
\newblock
  \bibinfo{howpublished}{\url{https://docs.tinfoil.sh/verification/secure-enclave-primer}}.
\newblock


\bibitem[{Tinfoil Inc.}(2026d)]%
        {tinfoilmodelidentity}
\bibfield{author}{\bibinfo{person}{{Tinfoil Inc.}}}
  \bibinfo{year}{2026}\natexlab{d}.
\newblock \bibinfo{title}{Proving Model Identity}.
\newblock
  \bibinfo{howpublished}{\url{https://tinfoil.sh/blog/2026-02-03-proving-model-identity}}.
\newblock


\bibitem[{Tinfoil Inc.}(2026e)]%
        {tinfoilshim}
\bibfield{author}{\bibinfo{person}{{Tinfoil Inc.}}}
  \bibinfo{year}{2026}\natexlab{e}.
\newblock \bibinfo{title}{\texttt{tfshim}: Attestation Proxy}.
\newblock \bibinfo{howpublished}{\url{https://github.com/tinfoilsh/tfshim}}.
\newblock


\bibitem[{Tinfoil Inc.}(2026f)]%
        {tinfoilverification}
\bibfield{author}{\bibinfo{person}{{Tinfoil Inc.}}}
  \bibinfo{year}{2026}\natexlab{f}.
\newblock \bibinfo{title}{Tinfoil Attestation Verification}.
\newblock
  \bibinfo{howpublished}{\url{https://docs.tinfoil.sh/verification/how-to-verify}}.
\newblock


\bibitem[{Tinfoil Inc.}(2026g)]%
        {tinfoilcontainers}
\bibfield{author}{\bibinfo{person}{{Tinfoil Inc.}}}
  \bibinfo{year}{2026}\natexlab{g}.
\newblock \bibinfo{title}{Tinfoil Containers}.
\newblock
  \bibinfo{howpublished}{\url{https://docs.tinfoil.sh/containers/overview}}.
\newblock


\bibitem[{Tinfoil Inc.}(2026h)]%
        {tinfoilcontainerspricing}
\bibfield{author}{\bibinfo{person}{{Tinfoil Inc.}}}
  \bibinfo{year}{2026}\natexlab{h}.
\newblock \bibinfo{title}{Tinfoil Containers}.
\newblock \bibinfo{howpublished}{\url{https://tinfoil.sh/containers}}.
\newblock


\bibitem[{Tinfoil Inc.}(2026i)]%
        {tinfoilgptosspricing}
\bibfield{author}{\bibinfo{person}{{Tinfoil Inc.}}}
  \bibinfo{year}{2026}\natexlab{i}.
\newblock \bibinfo{title}{Tinfoil GPT-OSS 120B Model Page}.
\newblock \bibinfo{howpublished}{\url{https://tinfoil.sh/models/gpt-oss-120b}}.
\newblock


\bibitem[{Tinfoil Inc.}(2026j)]%
        {tinfoilnomicpricing}
\bibfield{author}{\bibinfo{person}{{Tinfoil Inc.}}}
  \bibinfo{year}{2026}\natexlab{j}.
\newblock \bibinfo{title}{Tinfoil Nomic Embed Text Model Page}.
\newblock
  \bibinfo{howpublished}{\url{https://tinfoil.sh/models/nomic-embed-text}}.
\newblock


\bibitem[{Tinfoil Inc.}(2026k)]%
        {tinfoilsdk}
\bibfield{author}{\bibinfo{person}{{Tinfoil Inc.}}}
  \bibinfo{year}{2026}\natexlab{k}.
\newblock \bibinfo{title}{Tinfoil {SDK}}.
\newblock
  \bibinfo{howpublished}{\url{https://github.com/tinfoilsh/tinfoil-python}}.
\newblock


\bibitem[{Tinfoil Inc.}(2026l)]%
        {tinfoil}
\bibfield{author}{\bibinfo{person}{{Tinfoil Inc.}}}
  \bibinfo{year}{2026}\natexlab{l}.
\newblock \bibinfo{title}{Tinfoil: Verifiably Private {AI} Powered by Secure
  Enclaves}.
\newblock \bibinfo{howpublished}{\url{https://tinfoil.sh/}}.
\newblock


\bibitem[{Tinfoil Inc.}(2026m)]%
        {tinfoilsidechannels}
\bibfield{author}{\bibinfo{person}{{Tinfoil Inc.}}}
  \bibinfo{year}{2026}\natexlab{m}.
\newblock \bibinfo{title}{What About Side-Channels?}
\newblock
  \bibinfo{howpublished}{\url{https://tinfoil.sh/blog/2025-05-15-side-channels}}.
\newblock


\bibitem[Tinoco et~al\mbox{.}(2023)]%
        {enigmap}
\bibfield{author}{\bibinfo{person}{Afonso Tinoco}, \bibinfo{person}{Sixiang
  Gao}, {and} \bibinfo{person}{Elaine Shi}.} \bibinfo{year}{2023}\natexlab{}.
\newblock \showarticletitle{EnigMap: External-Memory Oblivious Map for Secure
  Enclaves}. In \bibinfo{booktitle}{\emph{{USENIX} Security Symposium}}.
  \bibinfo{publisher}{{USENIX} Association}, \bibinfo{pages}{4033--4050}.
\newblock


\bibitem[Tran et~al\mbox{.}(2025)]%
        {openlifelogqa}
\bibfield{author}{\bibinfo{person}{Quang-Linh Tran}, \bibinfo{person}{Binh
  Nguyen}, \bibinfo{person}{Gareth J.~F. Jones}, {and} \bibinfo{person}{Cathal
  Gurrin}.} \bibinfo{year}{2025}\natexlab{}.
\newblock \showarticletitle{OpenLifelogQA: An Open-Ended Multi-Modal Lifelog
  Question-Answering Dataset}.
\newblock \bibinfo{journal}{\emph{arXiv preprint arXiv:2508.03583}}
  (\bibinfo{year}{2025}).
\newblock


\bibitem[Trivedi et~al\mbox{.}(2024)]%
        {appworld}
\bibfield{author}{\bibinfo{person}{Harsh Trivedi}, \bibinfo{person}{Tushar
  Khot}, \bibinfo{person}{Mareike Hartmann}, \bibinfo{person}{Ruskin Manku},
  \bibinfo{person}{Vinty Dong}, \bibinfo{person}{Edward Li},
  \bibinfo{person}{Shashank Gupta}, \bibinfo{person}{Ashish Sabharwal}, {and}
  \bibinfo{person}{Niranjan Balasubramanian}.} \bibinfo{year}{2024}\natexlab{}.
\newblock \showarticletitle{AppWorld: {A} Controllable World of Apps and People
  for Benchmarking Interactive Coding Agents}. In
  \bibinfo{booktitle}{\emph{{ACL} {(1)}}}. \bibinfo{publisher}{Association for
  Computational Linguistics}, \bibinfo{pages}{16022--16076}.
\newblock


\bibitem[van Schaik et~al\mbox{.}(2021)]%
        {cacheout}
\bibfield{author}{\bibinfo{person}{Stephan van Schaik}, \bibinfo{person}{Marina
  Minkin}, \bibinfo{person}{Andrew Kwong}, \bibinfo{person}{Daniel Genkin},
  {and} \bibinfo{person}{Yuval Yarom}.} \bibinfo{year}{2021}\natexlab{}.
\newblock \showarticletitle{CacheOut: Leaking Data on Intel CPUs via Cache
  Evictions}. In \bibinfo{booktitle}{\emph{{SP}}}. \bibinfo{publisher}{{IEEE}},
  \bibinfo{pages}{339--354}.
\newblock


\bibitem[van Schaik et~al\mbox{.}(2024)]%
        {sgxfail}
\bibfield{author}{\bibinfo{person}{Stephan van Schaik},
  \bibinfo{person}{Alexander Seto}, \bibinfo{person}{Thomas Yurek},
  \bibinfo{person}{Adam Batori}, \bibinfo{person}{Bader AlBassam},
  \bibinfo{person}{Daniel Genkin}, \bibinfo{person}{Andrew Miller},
  \bibinfo{person}{Eyal Ronen}, \bibinfo{person}{Yuval Yarom}, {and}
  \bibinfo{person}{Christina Garman}.} \bibinfo{year}{2024}\natexlab{}.
\newblock \showarticletitle{SoK: SGX.Fail: How Stuff Gets eXposed}. In
  \bibinfo{booktitle}{\emph{{SP}}}. \bibinfo{publisher}{{IEEE}},
  \bibinfo{pages}{4143--4162}.
\newblock


\bibitem[Vanoverloop et~al\mbox{.}(2025)]%
        {tlblur}
\bibfield{author}{\bibinfo{person}{Daan Vanoverloop},
  \bibinfo{person}{Andr{\'{e}}s S{\'{a}}nchez}, \bibinfo{person}{Flavio
  Toffalini}, \bibinfo{person}{Frank Piessens}, \bibinfo{person}{Mathias
  Payer}, {and} \bibinfo{person}{Jo~Van Bulck}.}
  \bibinfo{year}{2025}\natexlab{}.
\newblock \showarticletitle{TLBlur: Compiler-Assisted Automated Hardening
  against Controlled Channels on Off-the-Shelf Intel {SGX} Platforms}. In
  \bibinfo{booktitle}{\emph{{USENIX} Security Symposium}}.
  \bibinfo{publisher}{{USENIX} Association}, \bibinfo{pages}{1167--1186}.
\newblock


\bibitem[Wang et~al\mbox{.}(2022)]%
        {groupchat}
\bibfield{author}{\bibinfo{person}{Dakuo Wang}, \bibinfo{person}{Haoyu Wang},
  \bibinfo{person}{Mo Yu}, \bibinfo{person}{Zahra Ashktorab}, {and}
  \bibinfo{person}{Ming Tan}.} \bibinfo{year}{2022}\natexlab{}.
\newblock \showarticletitle{Group Chat Ecology in Enterprise Instant Messaging:
  How Employees Collaborate Through Multi-User Chat Channels on Slack}.
\newblock \bibinfo{journal}{\emph{Proc. {ACM} Hum. Comput. Interact.}}
  \bibinfo{volume}{6}, \bibinfo{number}{{CSCW1}} (\bibinfo{year}{2022}),
  \bibinfo{pages}{94:1--94:14}.
\newblock


\bibitem[Wang et~al\mbox{.}(2023b)]%
        {voyager}
\bibfield{author}{\bibinfo{person}{Guanzhi Wang}, \bibinfo{person}{Yuqi Xie},
  \bibinfo{person}{Yunfan Jiang}, \bibinfo{person}{Ajay Mandlekar},
  \bibinfo{person}{Chaowei Xiao}, \bibinfo{person}{Yuke Zhu},
  \bibinfo{person}{Linxi Fan}, {and} \bibinfo{person}{Anima Anandkumar}.}
  \bibinfo{year}{2023}\natexlab{b}.
\newblock \showarticletitle{Voyager: An Open-Ended Embodied Agent with Large
  Language Models}.
\newblock \bibinfo{journal}{\emph{arXiv preprint arXiv:2305.16291}}
  (\bibinfo{year}{2023}).
\newblock


\bibitem[Wang et~al\mbox{.}(2021)]%
        {milvus}
\bibfield{author}{\bibinfo{person}{Jianguo Wang}, \bibinfo{person}{Xiaomeng
  Yi}, \bibinfo{person}{Rentong Guo}, \bibinfo{person}{Hai Jin},
  \bibinfo{person}{Peng Xu}, \bibinfo{person}{Shengjun Li},
  \bibinfo{person}{Xiangyu Wang}, \bibinfo{person}{Xiangzhou Guo},
  \bibinfo{person}{Chengming Li}, \bibinfo{person}{Xiaohai Xu},
  \bibinfo{person}{Kun Yu}, \bibinfo{person}{Yuxing Yuan},
  \bibinfo{person}{Yinghao Zou}, \bibinfo{person}{Jiquan Long},
  \bibinfo{person}{Yudong Cai}, \bibinfo{person}{Zhenxiang Li},
  \bibinfo{person}{Zhifeng Zhang}, \bibinfo{person}{Yihua Mo},
  \bibinfo{person}{Jun Gu}, \bibinfo{person}{Ruiyi Jiang}, \bibinfo{person}{Yi
  Wei}, {and} \bibinfo{person}{Charles Xie}.} \bibinfo{year}{2021}\natexlab{}.
\newblock \showarticletitle{Milvus: {A} Purpose-Built Vector Data Management
  System}. In \bibinfo{booktitle}{\emph{{SIGMOD} Conference}}.
  \bibinfo{publisher}{{ACM}}, \bibinfo{pages}{2614--2627}.
\newblock


\bibitem[Wang et~al\mbox{.}(2023a)]%
        {nhq}
\bibfield{author}{\bibinfo{person}{Mengzhao Wang}, \bibinfo{person}{Lingwei
  Lv}, \bibinfo{person}{Xiaoliang Xu}, \bibinfo{person}{Yuxiang Wang},
  \bibinfo{person}{Qiang Yue}, {and} \bibinfo{person}{Jiongkang Ni}.}
  \bibinfo{year}{2023}\natexlab{a}.
\newblock \showarticletitle{An Efficient and Robust Framework for Approximate
  Nearest Neighbor Search with Attribute Constraint}. In
  \bibinfo{booktitle}{\emph{NeurIPS}}.
\newblock


\bibitem[Wang et~al\mbox{.}(2026)]%
        {personatrace}
\bibfield{author}{\bibinfo{person}{Minjia Wang}, \bibinfo{person}{Yunfeng
  Wang}, \bibinfo{person}{Xiao Ma}, \bibinfo{person}{Dexin Lv},
  \bibinfo{person}{Qifan Guo}, \bibinfo{person}{Lynn Zheng},
  \bibinfo{person}{Benliang Wang}, \bibinfo{person}{Lei Wang},
  \bibinfo{person}{Jiannan Li}, \bibinfo{person}{Yongwei Xing},
  \bibinfo{person}{David Xu}, {and} \bibinfo{person}{Zheng Sun}.}
  \bibinfo{year}{2026}\natexlab{}.
\newblock \showarticletitle{PersonaTrace: Synthesizing Realistic Digital
  Footprints with LLM Agents}.
\newblock \bibinfo{journal}{\emph{arXiv preprint arXiv:2603.11955}}
  (\bibinfo{year}{2026}).
\newblock


\bibitem[Wang et~al\mbox{.}(2025a)]%
        {odysseybench}
\bibfield{author}{\bibinfo{person}{Weixuan Wang}, \bibinfo{person}{Dongge Han},
  \bibinfo{person}{Daniel~Madrigal Diaz}, \bibinfo{person}{Jin Xu},
  \bibinfo{person}{Victor Rühle}, {and} \bibinfo{person}{Saravan Rajmohan}.}
  \bibinfo{year}{2025}\natexlab{a}.
\newblock \showarticletitle{OdysseyBench: Evaluating LLM Agents on Long-Horizon
  Complex Office Application Workflows}.
\newblock \bibinfo{journal}{\emph{arXiv preprint arXiv:2508.09124}}
  (\bibinfo{year}{2025}).
\newblock


\bibitem[Wang et~al\mbox{.}(2024b)]%
        {ragbestpractices}
\bibfield{author}{\bibinfo{person}{Xiaohua Wang}, \bibinfo{person}{Zhenghua
  Wang}, \bibinfo{person}{Xuan Gao}, \bibinfo{person}{Feiran Zhang},
  \bibinfo{person}{Yixin Wu}, \bibinfo{person}{Zhibo Xu},
  \bibinfo{person}{Tianyuan Shi}, \bibinfo{person}{Zhengyuan Wang},
  \bibinfo{person}{Shizheng Li}, \bibinfo{person}{Qi Qian},
  \bibinfo{person}{Ruicheng Yin}, \bibinfo{person}{Changze Lv},
  \bibinfo{person}{Xiaoqing Zheng}, {and} \bibinfo{person}{Xuanjing Huang}.}
  \bibinfo{year}{2024}\natexlab{b}.
\newblock \showarticletitle{Searching for Best Practices in Retrieval-Augmented
  Generation}. In \bibinfo{booktitle}{\emph{{EMNLP}}}.
  \bibinfo{publisher}{Association for Computational Linguistics},
  \bibinfo{pages}{17716--17736}.
\newblock


\bibitem[Wang et~al\mbox{.}(2024a)]%
        {editablememorygraph}
\bibfield{author}{\bibinfo{person}{Zheng Wang}, \bibinfo{person}{Zhongyang Li},
  \bibinfo{person}{Zeren Jiang}, \bibinfo{person}{Dandan Tu}, {and}
  \bibinfo{person}{Wei Shi}.} \bibinfo{year}{2024}\natexlab{a}.
\newblock \showarticletitle{Crafting Personalized Agents through
  Retrieval-Augmented Generation on Editable Memory Graphs}. In
  \bibinfo{booktitle}{\emph{{EMNLP}}}. \bibinfo{publisher}{Association for
  Computational Linguistics}, \bibinfo{pages}{4891--4906}.
\newblock


\bibitem[Wang et~al\mbox{.}(2025c)]%
        {tetd}
\bibfield{author}{\bibinfo{person}{Zhanbo Wang}, \bibinfo{person}{Jiaxin Zhan},
  \bibinfo{person}{Xuhua Ding}, \bibinfo{person}{Fengwei Zhang}, {and}
  \bibinfo{person}{Ning Hu}.} \bibinfo{year}{2025}\natexlab{c}.
\newblock \showarticletitle{{TETD:} Trusted Execution in Trust Domains}. In
  \bibinfo{booktitle}{\emph{{USENIX} Security Symposium}}.
  \bibinfo{publisher}{{USENIX} Association}, \bibinfo{pages}{1187--1206}.
\newblock


\bibitem[Wang et~al\mbox{.}(2025b)]%
        {agentworkflowmemory}
\bibfield{author}{\bibinfo{person}{Zora~Zhiruo Wang}, \bibinfo{person}{Jiayuan
  Mao}, \bibinfo{person}{Daniel Fried}, {and} \bibinfo{person}{Graham Neubig}.}
  \bibinfo{year}{2025}\natexlab{b}.
\newblock \showarticletitle{Agent Workflow Memory}. In
  \bibinfo{booktitle}{\emph{{ICML}}} \emph{(\bibinfo{series}{Proceedings of
  Machine Learning Research})}. \bibinfo{publisher}{{PMLR} / OpenReview.net}.
\newblock


\bibitem[Wei et~al\mbox{.}(2020)]%
        {analyticdbv}
\bibfield{author}{\bibinfo{person}{Chuangxian Wei}, \bibinfo{person}{Bin Wu},
  \bibinfo{person}{Sheng Wang}, \bibinfo{person}{Renjie Lou},
  \bibinfo{person}{Chaoqun Zhan}, \bibinfo{person}{Feifei Li}, {and}
  \bibinfo{person}{Yuanzhe Cai}.} \bibinfo{year}{2020}\natexlab{}.
\newblock \showarticletitle{AnalyticDB-V: {A} Hybrid Analytical Engine Towards
  Query Fusion for Structured and Unstructured Data}.
\newblock \bibinfo{journal}{\emph{Proc. {VLDB} Endow.}} \bibinfo{volume}{13},
  \bibinfo{number}{12} (\bibinfo{year}{2020}), \bibinfo{pages}{3152--3165}.
\newblock


\bibitem[Weissteiner et~al\mbox{.}(2025)]%
        {teecorrelate}
\bibfield{author}{\bibinfo{person}{Hannes Weissteiner}, \bibinfo{person}{Fabian
  Rauscher}, \bibinfo{person}{Robin~Leander Schr{\"{o}}der},
  \bibinfo{person}{Jonas Juffinger}, \bibinfo{person}{Stefan Gast},
  \bibinfo{person}{Jan Wichelmann}, \bibinfo{person}{Thomas Eisenbarth}, {and}
  \bibinfo{person}{Daniel Gruss}.} \bibinfo{year}{2025}\natexlab{}.
\newblock \showarticletitle{TEEcorrelate: An Information-Preserving Defense
  against Performance-Counter Attacks on TEEs}. In
  \bibinfo{booktitle}{\emph{{USENIX} Security Symposium}}.
  \bibinfo{publisher}{{USENIX} Association}, \bibinfo{pages}{2481--2498}.
\newblock


\bibitem[Wheeler and Nezlek(1977)]%
        {Wheeler1977SexDI}
\bibfield{author}{\bibinfo{person}{Ladd Wheeler} {and} \bibinfo{person}{John
  Nezlek}.} \bibinfo{year}{1977}\natexlab{}.
\newblock \showarticletitle{Sex differences in social participation.}
\newblock \bibinfo{journal}{\emph{Journal of Personality and Social
  Psychology}} \bibinfo{volume}{35}, \bibinfo{number}{10}
  (\bibinfo{year}{1977}), \bibinfo{pages}{742--754}.
\newblock
\href{https://doi.org/10.1037/0022-3514.35.10.742}{doi:\nolinkurl{10.1037/0022-3514.35.10.742}}


\bibitem[Whittaker et~al\mbox{.}(2011)]%
        {emailrefinding}
\bibfield{author}{\bibinfo{person}{Steve Whittaker}, \bibinfo{person}{Tara
  Matthews}, \bibinfo{person}{Julian~A. Cerruti}, \bibinfo{person}{Hernan
  Badenes}, {and} \bibinfo{person}{John~C. Tang}.}
  \bibinfo{year}{2011}\natexlab{}.
\newblock \showarticletitle{Am {I} wasting my time organizing email?: a study
  of email refinding}. In \bibinfo{booktitle}{\emph{{CHI}}}.
  \bibinfo{publisher}{{ACM}}, \bibinfo{pages}{3449--3458}.
\newblock


\bibitem[Wilke et~al\mbox{.}(2024a)]%
        {tdxdown}
\bibfield{author}{\bibinfo{person}{Luca Wilke}, \bibinfo{person}{Florian
  Sieck}, {and} \bibinfo{person}{Thomas Eisenbarth}.}
  \bibinfo{year}{2024}\natexlab{a}.
\newblock \showarticletitle{TDXdown: Single-Stepping and Instruction Counting
  Attacks against Intel {TDX}}. In \bibinfo{booktitle}{\emph{{CCS}}}.
  \bibinfo{publisher}{{ACM}}, \bibinfo{pages}{79--93}.
\newblock


\bibitem[Wilke et~al\mbox{.}(2024b)]%
        {sevstep}
\bibfield{author}{\bibinfo{person}{Luca Wilke}, \bibinfo{person}{Jan
  Wichelmann}, \bibinfo{person}{Anja Rabich}, {and} \bibinfo{person}{Thomas
  Eisenbarth}.} \bibinfo{year}{2024}\natexlab{b}.
\newblock \showarticletitle{{SEV-Step A} Single-Stepping Framework for
  {AMD-SEV}}.
\newblock \bibinfo{journal}{\emph{{IACR} Trans. Cryptogr. Hardw. Embed. Syst.}}
  \bibinfo{volume}{2024}, \bibinfo{number}{1} (\bibinfo{year}{2024}),
  \bibinfo{pages}{180--206}.
\newblock


\bibitem[Wolinsky et~al\mbox{.}(2013)]%
        {hangwithyourbuddies}
\bibfield{author}{\bibinfo{person}{David~Isaac Wolinsky}, \bibinfo{person}{Ewa
  Syta}, {and} \bibinfo{person}{Bryan Ford}.} \bibinfo{year}{2013}\natexlab{}.
\newblock \showarticletitle{Hang with your buddies to resist intersection
  attacks}. In \bibinfo{booktitle}{\emph{{CCS}}}. \bibinfo{publisher}{{ACM}},
  \bibinfo{pages}{1153--1166}.
\newblock


\bibitem[Woodcock(2026)]%
        {celiveo2026}
\bibfield{author}{\bibinfo{person}{Mary Woodcock}.}
  \bibinfo{year}{2026}\natexlab{}.
\newblock \bibinfo{title}{Your Documents Unlock AI Corporate Memory Power}.
\newblock
  \bibinfo{howpublished}{\url{https://www.celiveo.com/blog/ai-document-management-the-important-employee-documents-to-capture-for-competitive-advantage-in-2026/}}.
\newblock


\bibitem[Woodward(2026)]%
        {geminipersonalintelligence}
\bibfield{author}{\bibinfo{person}{Josh Woodward}.}
  \bibinfo{year}{2026}\natexlab{}.
\newblock \bibinfo{title}{Gemini introduces Personal Intelligence}.
\newblock
  \bibinfo{howpublished}{\url{https://blog.google/innovation-and-ai/products/gemini-app/personal-intelligence/}}.
\newblock


\bibitem[Wu et~al\mbox{.}(2025a)]%
        {longmemeval}
\bibfield{author}{\bibinfo{person}{Di Wu}, \bibinfo{person}{Hongwei Wang},
  \bibinfo{person}{Wenhao Yu}, \bibinfo{person}{Yuwei Zhang},
  \bibinfo{person}{Kai{-}Wei Chang}, {and} \bibinfo{person}{Dong Yu}.}
  \bibinfo{year}{2025}\natexlab{a}.
\newblock \showarticletitle{LongMemEval: Benchmarking Chat Assistants on
  Long-Term Interactive Memory}. In \bibinfo{booktitle}{\emph{{ICLR}}}.
  \bibinfo{publisher}{OpenReview.net}.
\newblock


\bibitem[Wu et~al\mbox{.}(2022)]%
        {hqann}
\bibfield{author}{\bibinfo{person}{Wei Wu}, \bibinfo{person}{Junlin He},
  \bibinfo{person}{Yu Qiao}, \bibinfo{person}{Guoheng Fu}, \bibinfo{person}{Li
  Liu}, {and} \bibinfo{person}{Jin Yu}.} \bibinfo{year}{2022}\natexlab{}.
\newblock \showarticletitle{{HQANN:} Efficient and Robust Similarity Search for
  Hybrid Queries with Structured and Unstructured Constraints}. In
  \bibinfo{booktitle}{\emph{{CIKM}}}. \bibinfo{publisher}{{ACM}},
  \bibinfo{pages}{4580--4584}.
\newblock


\bibitem[Wu et~al\mbox{.}(2025b)]%
        {sgmem}
\bibfield{author}{\bibinfo{person}{Yaxiong Wu}, \bibinfo{person}{Yongyue
  Zhang}, \bibinfo{person}{Sheng Liang}, {and} \bibinfo{person}{Yong Liu}.}
  \bibinfo{year}{2025}\natexlab{b}.
\newblock \showarticletitle{SGMem: Sentence Graph Memory for Long-Term
  Conversational Agents}.
\newblock \bibinfo{journal}{\emph{arXiv preprint arXiv:2509.21212}}
  (\bibinfo{year}{2025}).
\newblock


\bibitem[Wunder et~al\mbox{.}(2025)]%
        {datalossrecovery}
\bibfield{author}{\bibinfo{person}{Julia Wunder}, \bibinfo{person}{Rick Wash},
  \bibinfo{person}{Karen Renaud}, \bibinfo{person}{Daniela~A Oliveira}, {and}
  \bibinfo{person}{Zinaida Benenson}.} \bibinfo{year}{2025}\natexlab{}.
\newblock \showarticletitle{Achieving Resilience: Data Loss and Recovery on
  Devices for Personal Use in Three Countries}. In
  \bibinfo{booktitle}{\emph{{CHI}}}. \bibinfo{publisher}{{ACM}},
  \bibinfo{pages}{830:1--830:26}.
\newblock


\bibitem[Xiu et~al\mbox{.}(2026)]%
        {astrabench}
\bibfield{author}{\bibinfo{person}{Zidi Xiu}, \bibinfo{person}{David~Q. Sun},
  \bibinfo{person}{Kevin Cheng}, \bibinfo{person}{Maitrik Patel},
  \bibinfo{person}{Josh Date}, \bibinfo{person}{Yizhe Zhang},
  \bibinfo{person}{Jiarui Lu}, \bibinfo{person}{Omar Attia},
  \bibinfo{person}{Raviteja Vemulapalli}, \bibinfo{person}{Oncel Tuzel},
  \bibinfo{person}{Meng Cao}, {and} \bibinfo{person}{Samy Bengio}.}
  \bibinfo{year}{2026}\natexlab{}.
\newblock \showarticletitle{ASTRA-bench: Evaluating Tool-Use Agent Reasoning
  and Action Planning with Personal User Context}.
\newblock \bibinfo{journal}{\emph{arXiv preprint arXiv:2603.01357}}
  (\bibinfo{year}{2026}).
\newblock


\bibitem[Xu et~al\mbox{.}(2022)]%
        {beyondgoldfish}
\bibfield{author}{\bibinfo{person}{Jing Xu}, \bibinfo{person}{Arthur Szlam},
  {and} \bibinfo{person}{Jason Weston}.} \bibinfo{year}{2022}\natexlab{}.
\newblock \showarticletitle{Beyond Goldfish Memory: Long-Term Open-Domain
  Conversation}. In \bibinfo{booktitle}{\emph{{ACL} {(1)}}}.
  \bibinfo{publisher}{Association for Computational Linguistics},
  \bibinfo{pages}{5180--5197}.
\newblock


\bibitem[Xu et~al\mbox{.}(2025)]%
        {amem}
\bibfield{author}{\bibinfo{person}{Wujiang Xu}, \bibinfo{person}{Zujie Liang},
  \bibinfo{person}{Kai Mei}, \bibinfo{person}{Hang Gao},
  \bibinfo{person}{Juntao Tan}, {and} \bibinfo{person}{Yongfeng Zhang}.}
  \bibinfo{year}{2025}\natexlab{}.
\newblock \showarticletitle{A-MEM: Agentic Memory for LLM Agents}.
\newblock \bibinfo{journal}{\emph{arXiv preprint arXiv:2502.12110}}
  (\bibinfo{year}{2025}).
\newblock


\bibitem[Xu et~al\mbox{.}(2020a)]%
        {xu2020}
\bibfield{author}{\bibinfo{person}{Xuhai Xu}, \bibinfo{person}{Ahmed~Hassan
  Awadallah}, \bibinfo{person}{Susan~T. Dumais}, \bibinfo{person}{Farheen
  Omar}, \bibinfo{person}{Bogdan Popp}, \bibinfo{person}{Robert Rounthwaite},
  {and} \bibinfo{person}{Farnaz Jahanbakhsh}.}
  \bibinfo{year}{2020}\natexlab{a}.
\newblock \showarticletitle{Understanding User Behavior For Document
  Recommendation}. In \bibinfo{booktitle}{\emph{{WWW}}}.
  \bibinfo{publisher}{{ACM} / {IW3C2}}, \bibinfo{pages}{3012--3018}.
\newblock


\bibitem[Xu et~al\mbox{.}(2020b)]%
        {mansw}
\bibfield{author}{\bibinfo{person}{Xiaoliang Xu}, \bibinfo{person}{Chang Li},
  \bibinfo{person}{Yuxiang Wang}, {and} \bibinfo{person}{Yixing Xia}.}
  \bibinfo{year}{2020}\natexlab{b}.
\newblock \showarticletitle{Multiattribute approximate nearest neighbor search
  based on navigable small world graph}.
\newblock \bibinfo{journal}{\emph{Concurr. Comput. Pract. Exp.}}
  \bibinfo{volume}{32}, \bibinfo{number}{24} (\bibinfo{year}{2020}).
\newblock


\bibitem[Xu et~al\mbox{.}(2015)]%
        {controlledchannelattacks}
\bibfield{author}{\bibinfo{person}{Yuanzhong Xu}, \bibinfo{person}{Weidong
  Cui}, {and} \bibinfo{person}{Marcus Peinado}.}
  \bibinfo{year}{2015}\natexlab{}.
\newblock \showarticletitle{Controlled-Channel Attacks: Deterministic Side
  Channels for Untrusted Operating Systems}. In
  \bibinfo{booktitle}{\emph{{IEEE} Symposium on Security and Privacy}}.
  \bibinfo{publisher}{{IEEE} Computer Society}, \bibinfo{pages}{640--656}.
\newblock


\bibitem[Xu et~al\mbox{.}(2024)]%
        {irangegraph}
\bibfield{author}{\bibinfo{person}{Yuexuan Xu}, \bibinfo{person}{Jianyang Gao},
  \bibinfo{person}{Yutong Gou}, \bibinfo{person}{Cheng Long}, {and}
  \bibinfo{person}{Christian~S. Jensen}.} \bibinfo{year}{2024}\natexlab{}.
\newblock \showarticletitle{iRangeGraph: Improvising Range-dedicated Graphs for
  Range-filtering Nearest Neighbor Search}.
\newblock \bibinfo{journal}{\emph{Proc. {ACM} Manag. Data}}
  \bibinfo{volume}{2}, \bibinfo{number}{6} (\bibinfo{year}{2024}),
  \bibinfo{pages}{239:1--239:26}.
\newblock


\bibitem[Xu et~al\mbox{.}(2023)]%
        {spfresh}
\bibfield{author}{\bibinfo{person}{Yuming Xu}, \bibinfo{person}{Hengyu Liang},
  \bibinfo{person}{Jin Li}, \bibinfo{person}{Shuotao Xu}, \bibinfo{person}{Qi
  Chen}, \bibinfo{person}{Qianxi Zhang}, \bibinfo{person}{Cheng Li},
  \bibinfo{person}{Ziyue Yang}, \bibinfo{person}{Fan Yang},
  \bibinfo{person}{Yuqing Yang}, \bibinfo{person}{Peng Cheng}, {and}
  \bibinfo{person}{Mao Yang}.} \bibinfo{year}{2023}\natexlab{}.
\newblock \showarticletitle{SPFresh: Incremental In-Place Update for
  Billion-Scale Vector Search}. In \bibinfo{booktitle}{\emph{{SOSP}}}.
  \bibinfo{publisher}{{ACM}}, \bibinfo{pages}{545--561}.
\newblock


\bibitem[Yagnik(2025)]%
        {googletrustedexecution}
\bibfield{author}{\bibinfo{person}{Jay Yagnik}.}
  \bibinfo{year}{2025}\natexlab{}.
\newblock \bibinfo{title}{Private {AI} {C}ompute: our next step in building
  private and helpful {AI}}.
\newblock
  \bibinfo{howpublished}{\url{https://blog.google/innovation-and-ai/products/google-private-ai-compute/}}.
\newblock


\bibitem[Yan et~al\mbox{.}(2025)]%
        {relocatevote}
\bibfield{author}{\bibinfo{person}{Yuqin Yan}, \bibinfo{person}{Wei Huang},
  \bibinfo{person}{Ilya Grishchenko}, \bibinfo{person}{Gururaj Saileshwar},
  \bibinfo{person}{Aastha Mehta}, {and} \bibinfo{person}{David Lie}.}
  \bibinfo{year}{2025}\natexlab{}.
\newblock \showarticletitle{Relocate-Vote: Using Sparsity Information to
  Exploit Ciphertext Side-Channels}. In \bibinfo{booktitle}{\emph{{USENIX}
  Security Symposium}}. \bibinfo{publisher}{{USENIX} Association},
  \bibinfo{pages}{5699--5717}.
\newblock


\bibitem[Yan et~al\mbox{.}(2026)]%
        {ShareChat}
\bibfield{author}{\bibinfo{person}{Yueru Yan}, \bibinfo{person}{Tuc Nguyen},
  \bibinfo{person}{Bo Su}, \bibinfo{person}{Melissa Lieffers}, {and}
  \bibinfo{person}{Thai Le}.} \bibinfo{year}{2026}\natexlab{}.
\newblock \showarticletitle{ShareChat: A Dataset of Chatbot Conversations in
  the Wild}.
\newblock \bibinfo{journal}{\emph{arXiv preprint arXiv:2512.17843}}
  (\bibinfo{year}{2026}).
\newblock


\bibitem[Yang et~al\mbox{.}(2020)]%
        {pase}
\bibfield{author}{\bibinfo{person}{Wen Yang}, \bibinfo{person}{Tao Li},
  \bibinfo{person}{Gai Fang}, {and} \bibinfo{person}{Hong Wei}.}
  \bibinfo{year}{2020}\natexlab{}.
\newblock \showarticletitle{{PASE:} PostgreSQL Ultra-High-Dimensional
  Approximate Nearest Neighbor Search Extension}. In
  \bibinfo{booktitle}{\emph{{SIGMOD} Conference}}. \bibinfo{publisher}{{ACM}},
  \bibinfo{pages}{2241--2253}.
\newblock


\bibitem[Yuan et~al\mbox{.}(2025)]%
        {yuanecm}
\bibfield{author}{\bibinfo{person}{Ruifeng Yuan}, \bibinfo{person}{Shichao
  Sun}, \bibinfo{person}{Yongqi Li}, \bibinfo{person}{Zili Wang},
  \bibinfo{person}{Ziqiang Cao}, {and} \bibinfo{person}{Wenjie Li}.}
  \bibinfo{year}{2025}\natexlab{}.
\newblock \showarticletitle{Personalized Large Language Model Assistant with
  Evolving Conditional Memory}. In \bibinfo{booktitle}{\emph{{COLING}}}.
  \bibinfo{publisher}{Association for Computational Linguistics},
  \bibinfo{pages}{3764--3777}.
\newblock


\bibitem[Yuan et~al\mbox{.}(2022)]%
        {multideviceusage}
\bibfield{author}{\bibinfo{person}{Ye Yuan}, \bibinfo{person}{Nathalie Riche},
  \bibinfo{person}{Nicolai Marquardt}, \bibinfo{person}{Molly~Jane Nicholas},
  \bibinfo{person}{Teddy Seyed}, \bibinfo{person}{Hugo Romat},
  \bibinfo{person}{Bongshin Lee}, \bibinfo{person}{Michel Pahud},
  \bibinfo{person}{Jonathan Goldstein}, \bibinfo{person}{Rojin Vishkaie},
  \bibinfo{person}{Christian Holz}, {and} \bibinfo{person}{Ken Hinckley}.}
  \bibinfo{year}{2022}\natexlab{}.
\newblock \showarticletitle{Understanding Multi-Device Usage Patterns: Physical
  Device Configurations and Fragmented Workflows}. In
  \bibinfo{booktitle}{\emph{{CHI}}}. \bibinfo{publisher}{{ACM}},
  \bibinfo{pages}{64:1--64:22}.
\newblock


\bibitem[Zhang et~al\mbox{.}(2023)]%
        {vbase}
\bibfield{author}{\bibinfo{person}{Qianxi Zhang}, \bibinfo{person}{Shuotao Xu},
  \bibinfo{person}{Qi Chen}, \bibinfo{person}{Guoxin Sui},
  \bibinfo{person}{Jiadong Xie}, \bibinfo{person}{Zhizhen Cai},
  \bibinfo{person}{Yaoqi Chen}, \bibinfo{person}{Yinxuan He},
  \bibinfo{person}{Yuqing Yang}, \bibinfo{person}{Fan Yang},
  \bibinfo{person}{Mao Yang}, {and} \bibinfo{person}{Lidong Zhou}.}
  \bibinfo{year}{2023}\natexlab{}.
\newblock \showarticletitle{{VBASE:} Unifying Online Vector Similarity Search
  and Relational Queries via Relaxed Monotonicity}. In
  \bibinfo{booktitle}{\emph{{OSDI}}}. \bibinfo{publisher}{{USENIX}
  Association}, \bibinfo{pages}{377--395}.
\newblock


\bibitem[Zhang et~al\mbox{.}(2019)]%
        {mobileflashwear}
\bibfield{author}{\bibinfo{person}{Tao Zhang}, \bibinfo{person}{Aviad Zuck},
  \bibinfo{person}{Donald~E. Porter}, {and} \bibinfo{person}{Dan Tsafrir}.}
  \bibinfo{year}{2019}\natexlab{}.
\newblock \showarticletitle{Apps Can Quickly Destroy Your Mobile's Flash: Why
  They Don't, and How to Keep It That Way}. In
  \bibinfo{booktitle}{\emph{MobiSys}}. \bibinfo{publisher}{{ACM}},
  \bibinfo{pages}{207--221}.
\newblock


\bibitem[Zhang et~al\mbox{.}(2024a)]%
        {invalidatecompare}
\bibfield{author}{\bibinfo{person}{Zhenkai Zhang}, \bibinfo{person}{Kunbei
  Cai}, \bibinfo{person}{Yanan Guo}, \bibinfo{person}{Fan Yao}, {and}
  \bibinfo{person}{Xing Gao}.} \bibinfo{year}{2024}\natexlab{a}.
\newblock \showarticletitle{Invalidate+Compare: {A} Timer-Free {GPU} Cache
  Attack Primitive}. In \bibinfo{booktitle}{\emph{{USENIX} Security
  Symposium}}. \bibinfo{publisher}{{USENIX} Association}.
\newblock


\bibitem[Zhang et~al\mbox{.}(2024b)]%
        {memsim}
\bibfield{author}{\bibinfo{person}{Zeyu Zhang}, \bibinfo{person}{Quanyu Dai},
  \bibinfo{person}{Luyu Chen}, \bibinfo{person}{Zeren Jiang},
  \bibinfo{person}{Rui Li}, \bibinfo{person}{Jieming Zhu}, \bibinfo{person}{Xu
  Chen}, \bibinfo{person}{Yi Xie}, \bibinfo{person}{Zhenhua Dong}, {and}
  \bibinfo{person}{Ji-Rong Wen}.} \bibinfo{year}{2024}\natexlab{b}.
\newblock \showarticletitle{MemSim: A Bayesian Simulator for Evaluating Memory
  of LLM-based Personal Assistants}.
\newblock \bibinfo{journal}{\emph{arXiv preprint arXiv:2409.20163}}
  (\bibinfo{year}{2024}).
\newblock


\bibitem[Zhao et~al\mbox{.}(2022)]%
        {airship}
\bibfield{author}{\bibinfo{person}{Weijie Zhao}, \bibinfo{person}{Shulong Tan},
  {and} \bibinfo{person}{Ping Li}.} \bibinfo{year}{2022}\natexlab{}.
\newblock \showarticletitle{Constrained approximate similarity search on
  proximity graph}.
\newblock \bibinfo{journal}{\emph{arXiv preprint arXiv:2210.14958}}
  (\bibinfo{year}{2022}).
\newblock


\bibitem[Zhao et~al\mbox{.}(2026)]%
        {amabench}
\bibfield{author}{\bibinfo{person}{Yujie Zhao}, \bibinfo{person}{Boqin Yuan},
  \bibinfo{person}{Junbo Huang}, \bibinfo{person}{Haocheng Yuan},
  \bibinfo{person}{Zhongming Yu}, \bibinfo{person}{Haozhou Xu},
  \bibinfo{person}{Lanxiang Hu}, \bibinfo{person}{Abhilash Shankarampeta},
  \bibinfo{person}{Zimeng Huang}, \bibinfo{person}{Wentao Ni},
  \bibinfo{person}{Yuandong Tian}, {and} \bibinfo{person}{Jishen Zhao}.}
  \bibinfo{year}{2026}\natexlab{}.
\newblock \showarticletitle{AMA-Bench: Evaluating Long-Horizon Memory for
  Agentic Applications}.
\newblock \bibinfo{journal}{\emph{arXiv preprint arXiv:2602.22769}}
  (\bibinfo{year}{2026}).
\newblock


\bibitem[Zhong et~al\mbox{.}(2025)]%
        {lsmvec}
\bibfield{author}{\bibinfo{person}{Shurui Zhong}, \bibinfo{person}{Dingheng
  Mo}, {and} \bibinfo{person}{Siqiang Luo}.} \bibinfo{year}{2025}\natexlab{}.
\newblock \showarticletitle{LSM-VEC: A Large-Scale Disk-Based System for
  Dynamic Vector Search}.
\newblock \bibinfo{journal}{\emph{arXiv preprint arXiv:2505.17152}}
  (\bibinfo{year}{2025}).
\newblock


\bibitem[Zhong et~al\mbox{.}(2024)]%
        {memorybank}
\bibfield{author}{\bibinfo{person}{Wanjun Zhong}, \bibinfo{person}{Lianghong
  Guo}, \bibinfo{person}{Qiqi Gao}, \bibinfo{person}{He Ye}, {and}
  \bibinfo{person}{Yanlin Wang}.} \bibinfo{year}{2024}\natexlab{}.
\newblock \showarticletitle{MemoryBank: Enhancing Large Language Models with
  Long-Term Memory}. In \bibinfo{booktitle}{\emph{{AAAI}}}.
  \bibinfo{publisher}{{AAAI} Press}, \bibinfo{pages}{19724--19731}.
\newblock


\bibitem[Zhou et~al\mbox{.}(2023)]%
        {enterprisesearch}
\bibfield{author}{\bibinfo{person}{Daniel~Xiaodan Zhou}, \bibinfo{person}{Lan
  Liu}, \bibinfo{person}{Anmol Anubhai}, \bibinfo{person}{Maansi Shandilya},
  \bibinfo{person}{Steph Sigalas}, \bibinfo{person}{William~Yang Wang}, {and}
  \bibinfo{person}{Zhiheng Huang}.} \bibinfo{year}{2023}\natexlab{}.
\newblock \showarticletitle{Beyond Accurate Answers: Evaluating Open-Domain
  Question Answering in Enterprise Search}. In
  \bibinfo{booktitle}{\emph{{CHIIR}}}. \bibinfo{publisher}{{ACM}},
  \bibinfo{pages}{308--312}.
\newblock


\bibitem[Zhou et~al\mbox{.}(2025b)]%
        {pacmann}
\bibfield{author}{\bibinfo{person}{Mingxun Zhou}, \bibinfo{person}{Elaine Shi},
  {and} \bibinfo{person}{Giulia Fanti}.} \bibinfo{year}{2025}\natexlab{b}.
\newblock \showarticletitle{Pacmann: Efficient Private Approximate Nearest
  Neighbor Search}. In \bibinfo{booktitle}{\emph{{ICLR}}}.
  \bibinfo{publisher}{OpenReview.net}.
\newblock


\bibitem[Zhou et~al\mbox{.}(2026)]%
        {memtrust}
\bibfield{author}{\bibinfo{person}{Xing Zhou}, \bibinfo{person}{Dmitrii
  Ustiugov}, \bibinfo{person}{Haoxin Shang}, {and} \bibinfo{person}{Kisson
  Lin}.} \bibinfo{year}{2026}\natexlab{}.
\newblock \showarticletitle{MemTrust: A Zero-Trust Architecture for Unified AI
  Memory System}.
\newblock \bibinfo{journal}{\emph{arXiv preprint arXiv:2601.07004}}
  (\bibinfo{year}{2026}).
\newblock


\bibitem[Zhou et~al\mbox{.}(2025a)]%
        {mem1}
\bibfield{author}{\bibinfo{person}{Zijian Zhou}, \bibinfo{person}{Ao Qu},
  \bibinfo{person}{Zhaoxuan Wu}, \bibinfo{person}{Sunghwan Kim},
  \bibinfo{person}{Alok Prakash}, \bibinfo{person}{Daniela Rus},
  \bibinfo{person}{Jinhua Zhao}, \bibinfo{person}{Bryan Kian~Hsiang Low}, {and}
  \bibinfo{person}{Paul~Pu Liang}.} \bibinfo{year}{2025}\natexlab{a}.
\newblock \showarticletitle{MEM1: Learning to Synergize Memory and Reasoning
  for Efficient Long-Horizon Agents}.
\newblock \bibinfo{journal}{\emph{arXiv preprint arXiv:2506.15841}}
  (\bibinfo{year}{2025}).
\newblock


\bibitem[Zhu et~al\mbox{.}(2025)]%
        {compass}
\bibfield{author}{\bibinfo{person}{Jinhao Zhu}, \bibinfo{person}{Liana Patel},
  \bibinfo{person}{Matei Zaharia}, {and} \bibinfo{person}{Raluca~Ada Popa}.}
  \bibinfo{year}{2025}\natexlab{}.
\newblock \showarticletitle{Compass: Encrypted Semantic Search with High
  Accuracy}. In \bibinfo{booktitle}{\emph{{OSDI}}}.
  \bibinfo{publisher}{{USENIX} Association}, \bibinfo{pages}{915--938}.
\newblock


\bibitem[Zuo et~al\mbox{.}(2024)]%
        {serf}
\bibfield{author}{\bibinfo{person}{Chaoji Zuo}, \bibinfo{person}{Miao Qiao},
  \bibinfo{person}{Wenchao Zhou}, \bibinfo{person}{Feifei Li}, {and}
  \bibinfo{person}{Dong Deng}.} \bibinfo{year}{2024}\natexlab{}.
\newblock \showarticletitle{SeRF: Segment Graph for Range-Filtering Approximate
  Nearest Neighbor Search}.
\newblock \bibinfo{journal}{\emph{Proc. {ACM} Manag. Data}}
  \bibinfo{volume}{2}, \bibinfo{number}{1} (\bibinfo{year}{2024}),
  \bibinfo{pages}{69:1--69:26}.
\newblock


\end{thebibliography}

\ifpearl
\else
\appendix
\section{Security Proof}
\label{sec:security-proof}

We prove the security theorem (Theorem~1 in the main paper) in the
$\mathcal{G}_{\mathsf{att}}$-hybrid model
(\S2.2 of the main paper), where each enclave is a
stateful black box whose internal state is hidden from the adversary, which sees
only I/O crossing enclave boundaries. It remains to characterize the extra
inter-enclave call trace, show that $\Dream()$ adds no extra I/O, and prove
checkpoint-relative freshness of the eviction-triggered persisted recovery
state. As in prior work~\cite[\S B and \S C]{compass}, we assume no ORAM stash
overflow during execution. We rely on Compass's batched-access ORAM proof for
two guarantees. First, ORAM executions with the same public trace structure are
computationally indistinguishable~\cite[Thm.~1; Supp.~Lemma~4]{compass}.
Second, tampering with Merkle-authenticated ORAM storage is detected except with
negligible probability~\cite[Thm.~1; Supp.~Lemma~5]{compass}.

For request $q_{i,b}$, let $\mathcal{T}_{i,b}$ denote the \emph{observable
trace}: the ordered sequence of events visible to the adversary during
execution. There are two event types:
\begin{itemize}[leftmargin=*, noitemsep, topsep=2pt]
    \item $\mathsf{t}_{\mathsf{call}}$: one fixed-size inter-enclave call,
    consisting of a padded encrypted request and padded encrypted response on an
    authenticated encrypted channel.
    \item $\mathsf{t}_{\mathsf{oram}}(s,\, \ORAM_X)$: one batched ORAM access of
    public batch size~$s$ to store~$\ORAM_X$.
\end{itemize}

\noindent
All enclave-internal computation, including KG traversal, ANN scoring,
reranking, stash manipulation, and $\Dream()$ processing, produces no externally
visible event unless it triggers one of the two event types above.

\begin{definition}[Structural equivalence]
\label{def:structural-equiv}
Two traces $\mathcal{T}$ and $\mathcal{T}'$ are \emph{structurally equivalent},
written $\mathcal{T} \equiv_{\mathsf{s}} \mathcal{T}'$, if:
\begin{enumerate}[leftmargin=*, noitemsep, topsep=2pt]
    \item they have the same number of events;
    \item each position has the same event type; and
    \item whenever the event at a position is $\mathsf{t}_{\mathsf{oram}}$, both
    traces access the same ORAM store with the same public batch size.
\end{enumerate}
\end{definition}

\subsection{Condition~1: Indistinguishability}

\begin{lemma}[Dreaming preserves the enclosing trace]
\label{lem:dreaming}
The $\Dream()$ callback preserves the observable trace of the enclosing ORAM
operation.
\end{lemma}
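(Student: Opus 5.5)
We will prove the lemma by showing that the trace $\mathcal{T}$ produced by $\BatchSearch(\ORAM_X, \cdot, s;\ \Dream())$ or $\BatchInsert(\ORAM_X, \cdot;\ \Dream())$ is identical to the trace of the same call with a no-op callback in place of $\Dream()$. Here identical means equal as a sequence of events, which is stronger than $\equiv_{\mathsf{s}}$. Ciphertext contents are handled separately by the Compass guarantee. The argument is a syntactic inspection of what $\Dream()$ may touch. We first fix its interface. $\Dream()$ is invoked inside the batched access, after the path or bucket reads have placed blocks in the stash and before eviction and write-back. Its input is the set of blocks currently resident in the stash. Its outputs are limited to two things:
\begin{enumerate*}[label=(\roman*)]
\item updates to enclave-resident metadata (TTL counters, the ANN cluster assignment and pending-correction flags, KG records), and
\item logical deletion or in-place modification of block payloads already in the stash, such as marking an expired block as dummy.
\end{enumerate*}
We check each dreaming domain against this interface. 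Adaptive retention reads and refreshes TTLs and marks expired blocks. Sleepy rebalancing reassigns a pending item to the nearest centroid, which changes only metadata. Memory compression is not part of $\Dream()$: it runs at the public cadence $T$ on the ordinary $\Ingest$ path, so its extra events are already counted in $\mathcal{L}$ through $\Pi$.

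The key step is to argue that neither kind of output changes any event of type $\mathsf{t}_{\mathsf{call}}$ or $\mathsf{t}_{\mathsf{oram}}$. For $\mathsf{t}_{\mathsf{call}}$ this is immediate: $\Dream()$ makes no inter-enclave call. For $\mathsf{t}_{\mathsf{oram}}$, there are three things to rule out. First, the batch size $s$ and the target store are fixed by the caller from $\Pi$ before $\Dream()$ runs, so they cannot depend on it. Second, the set of paths read is determined by the position-map entries sampled before $\Dream()$ executes. Third, the write-back is the standard Ring ORAM eviction. Ring ORAM reshuffles and evicts along a deterministic, public schedule of eviction paths, and every written bucket is padded with re-encrypted dummies to capacity $Z$. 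As a result, the number and locations of the written blocks are independent of which stash blocks are real and which are dummy. Converting a real block into a dummy, or editing a payload, therefore changes only plaintext inside fresh AEAD ciphertexts of fixed size. Remapping a pending item's cluster touches only enclave metadata, not its ORAM leaf. If an item's leaf is remapped, that happens only as part of the normal access already in the batch, not as a separate fetch.

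The main obstacle is a subtle case. Deleting blocks could shrink the stash and thereby change eviction behavior, and in some ORAM variants stash occupancy can trigger extra background evictions. We would handle this by invoking the paper's standing assumption of no stash overflow, together with Ring ORAM's fixed eviction rate: one eviction every $A$ accesses, counted by accesses and not by occupancy. With this, stash size never influences the external schedule. Deletions only reduce occupancy, so they cannot cause an overflow that the dreamless run would not also avoid. Finally, for the Merkle roots and the rollback record $R$: these are recomputed from whatever is written and are sent out only inside the fixed-size record. Their values are pseudorandom by AEAD freshness, so they add no structural event. This concludes that the trace with $\Dream()$ equals the trace without it, event for event.
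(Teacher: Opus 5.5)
Your proposal is correct and follows the paper's argument. The paper makes the same points: $\Dream()$ acts only on blocks already resident in the stash and on enclave-resident metadata, issues no extra ORAM access or inter-enclave call, and cannot alter the store, paths, or public batch size fixed before it runs, so the enclosing $\mathsf{t}_{\mathsf{oram}}(s,\ORAM_X)$ event is unchanged. Your additional Ring-ORAM-level analysis (eviction schedule, bucket padding, stash occupancy, Merkle roots) sits below the paper's trace abstraction, where the imported Compass guarantee and the standing no-stash-overflow assumption already cover those details, so it is harmless but not needed for the lemma.
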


\begin{proof}
$\Dream()$ runs entirely inside $\OpalEnc$ during an enclosing ORAM access. It
acts only on blocks already resident in enclave memory and on enclave-resident
metadata. Its updates include deleting expired items and performing maintenance
such as sleepy reassignment. The TTL threshold used for expiry is a
deterministic function of public quantities (\S3.3 of the main paper).
Expiry can cross an item off the enclave-resident position map and, if the item
is already materialized by the enclosing access, nullify it before write-back.
None of these steps triggers an extra ORAM access or inter-enclave call, or
changes the store, path, or public batch size of the enclosing ORAM operation.
Therefore the enclosing access contributes the same single event
$\mathsf{t}_{\mathsf{oram}}(s, \ORAM_X)$ as before, so the observable trace is
unchanged.
\end{proof}

\begin{lemma}[Request trace structure]
\label{lem:request-structure}
For any admissible request pair $q_{i,0}$ and $q_{i,1}$ with
$\mathcal{L}(q_{i,0}) = \mathcal{L}(q_{i,1})$, the traces satisfy
$\mathcal{T}_{i,0} \equiv_{\mathsf{s}} \mathcal{T}_{i,1}$.
\end{lemma}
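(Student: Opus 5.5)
The plan is to show that, once the leakage $\mathcal{L}(q_{i,b})$ is fixed, the trace $\mathcal{T}_{i,b}$ is a deterministic function of it. That is, the sequence of event types, the target ORAM stores, and the batch sizes can all be computed from the operation type and the public parameters~$\Pi$ alone. Two requests with equal leakage then yield structurally equivalent traces by Definition~\ref{def:structural-equiv}. Since $\mathcal{L}$ reveals the operation type, we have $q_{i,0}$ and $q_{i,1}$ both $\Query$ or both $\Ingest$. I will split into these two cases and walk through Algorithm~\ref{alg:opal} line by line, recording which lines emit $\mathsf{t}_{\mathsf{call}}$ or $\mathsf{t}_{\mathsf{oram}}$ events.

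\textbf{Query case.} $\LLMEnc.\Traverse$ emits one $\mathsf{t}_{\mathsf{call}}$. $\KG.\Traverse(F)$, $\ANN.\Score$, and $\ANN.\Rerank$ are enclave-internal and emit nothing. $\EmbEnc.\Embed$ emits one $\mathsf{t}_{\mathsf{call}}$. $\BatchSearch(\ORAM_{\ANN}, C, n)$ emits exactly $\mathsf{t}_{\mathsf{oram}}(n,\ORAM_{\ANN})$. The second $\BatchSearch$ emits $\mathsf{t}_{\mathsf{oram}}(K,\ORAM_{\mathsf{Data}})$. $\Synthesize$ emits one $\mathsf{t}_{\mathsf{call}}$. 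By Lemma~\ref{lem:dreaming}, the $\Dream()$ callbacks do not alter either ORAM event. The key point is that the data-dependent objects $F$, $A_q$, $C$, and $S_K$ affect only \emph{which} blocks are requested, never how many. Three things must hold for this. First, $\BatchSearch$ pads to exactly $n$ (resp.\ $K$) with dummy accesses when $|C|<n$ (resp.\ $|S_K|<K$), as in Compass. Second, the relaxation cascade of \S\ref{sec:query-time-traversal} runs inside the enclave without issuing further LLM calls. Third, all calls are padded to fixed size.

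\textbf{Ingest case.} The events are one $\mathsf{t}_{\mathsf{call}}$ for $\Embed$, then $\mathsf{t}_{\mathsf{oram}}(s_{\ANN},\ORAM_{\ANN})$, then $\mathsf{t}_{\mathsf{oram}}(s_{\mathsf{Data}},\ORAM_{\mathsf{Data}})$, with fixed public batch sizes. $\KG.\Update$ and $\ANN.\Insert$ are internal. The subtle part is the summarization branch: whether $t \bmod T = 0$ holds might appear data-dependent. I will argue that $t$ counts ingestions, so its value at step $i$ is a function of the public sequence of operation types. The branch is therefore taken identically in both worlds. When taken, it contributes one $\Summarize$ call followed by a recursive $\Ingest(d_s)$ with the same fixed structure, and $d_s$ is padded to the fixed chunk size.

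One point needs care. The summary ingest itself increments $t$, so I must check that the recursion terminates after one level, or more generally that the resulting pattern is a fixed function of the count of prior external ingests. Both worlds start at the same $t$ and see the same type sequence, so an induction on $i$ shows that $t$ agrees across worlds at every step. That gives the claim.

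The main obstacle I expect is justifying that no hidden variable-length behavior remains: relaxation, percolation, empty admissible sets, or a stash whose size might seem to affect batch sizes. For each of these I would invoke padding to the public budgets $n$ and $K$, together with the Compass assumption of no stash overflow, so that the batch size is exactly the public parameter. Finally, I will compare the two traces position by position and check the three conditions of Definition~\ref{def:structural-equiv}.
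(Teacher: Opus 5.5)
Your proposal is correct and follows the paper's proof: you split on the leaked operation type and enumerate the \Query{} trace (three fixed-size $\mathsf{t}_{\mathsf{call}}$ events plus $\mathsf{t}_{\mathsf{oram}}(n,\ORAM_{\ANN})$ and $\mathsf{t}_{\mathsf{oram}}(K,\ORAM_{\mathsf{Data}})$, with dreaming adding nothing by Lemma~\ref{lem:dreaming}). For \Ingest{} you likewise take the base trace and, at public summarization boundaries, append one \Summarize{} call and a recursive copy of that base trace, exactly as the paper does. Your extra care only tightens steps the paper asserts directly: padding the batches to $n$ and $K$, deriving the counter $t$ from the public type sequence by induction (more precise than the paper's ``observed request index,'' since summary ingests also advance $t$), and checking that the recursion stops after one level when $T \ge 2$.
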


\begin{proof}
Equal leakage fixes the request type, so it suffices to consider the $\Query$
and $\Ingest$ cases separately.

If the pair is $\Query$, only
$\LLMEnc.\allowbreak\Traverse$,
$\EmbEnc.\allowbreak\Embed$, and
$\LLMEnc.\allowbreak\Synthesize$ cross enclave boundaries, yielding three
$\mathsf{t}_{\mathsf{call}}$ events.
The retrieval phases contribute one
$\mathsf{t}_{\mathsf{oram}}(n,\allowbreak \ORAM_{\ANN})$
event and one
$\mathsf{t}_{\mathsf{oram}}(K,\allowbreak \ORAM_{\mathsf{Data}})$
event, where $n, K \in \Pi$. By Lemma~\ref{lem:dreaming}, $\Dream()$ adds no event. Hence every
$\Query$ trace has the same public structure, determined only by~$\Pi$.

If the pair is $\Ingest$, the base path yields one $\mathsf{t}_{\mathsf{call}}$,
one $\mathsf{t}_{\mathsf{oram}}(1, \ORAM_{\ANN})$, and one
$\mathsf{t}_{\mathsf{oram}}(1, \ORAM_{\mathsf{Data}})$. At public summarization
boundaries, determined by the public period~$T \in \Pi$ and the observed request
index, the trace appends one $\mathsf{t}_{\mathsf{call}}$ to
$\LLMEnc.\Summarize$ and one recursive copy of the same base trace. Hence every
$\Ingest$ trace again has public structure determined only by public
quantities.

Therefore $\mathcal{T}_{i,0} \equiv_{\mathsf{s}} \mathcal{T}_{i,1}$.
\end{proof}

\begin{proof}[Proof of Condition~1]
Fix a PPT adversary~$\mathcal{A}$ and let $m$ be the polynomially bounded number
of requests. By Lemma~\ref{lem:request-structure}, every admissible request pair
yields structurally equivalent traces.

Condition on any fixed admissible step.
For each $\mathsf{t}_{\mathsf{oram}}(s,\allowbreak \ORAM_X)$ event,
the imported Compass batched-access ORAM result
(\citealp[Thm.~1; Supp.~Lemma~4]{compass}) gives computational
indistinguishability because the store and public batch size match. For each
$\mathsf{t}_{\mathsf{call}}$ event, the authenticated encrypted channel
assumption hides the payload and reveals only that one fixed-size inter-enclave
exchange occurred. Thus the entire step is computationally indistinguishable
across the two worlds.

Use the standard adaptive sequential hybrid. Let $H_\ell$ ($0 \le \ell \le m$)
be the experiment in which the first $\ell$ admissible requests are answered in
world~1 and the remaining requests in world~0. Then $H_0$ is the game with $b=0$
and $H_m$ is the game with $b=1$.

For adjacent hybrids $H_\ell$ and $H_{\ell+1}$, the common prefix through
step~$\ell$ induces the same distribution over the next admissible request pair.
Applying the fixed-step argument above at step~$\ell{+}1$ gives $|H_\ell -
H_{\ell+1}| \le \mathsf{negl}(\lambda)$. Summing over the polynomially many
steps yields
\[
    \bigl|\Pr[b' = b] - \tfrac{1}{2}\bigr|
    \;\le\;
    m \cdot \mathsf{negl}(\lambda)
    \;=\;
    \mathsf{negl}(\lambda). \qedhere
\]
\end{proof}

\subsection{Condition~2: Integrity and Freshness}

\begin{lemma}[Integrity]
\label{lem:integrity}
Assuming the theorem hypotheses, the probability that the adversary causes the
challenger to accept tampered persisted state is $\mathsf{negl}(\lambda)$.
\end{lemma}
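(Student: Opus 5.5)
The plan is to classify every piece of persisted state the challenger may accept, then reduce acceptance of a tampered version of each piece to breaking one of the assumed primitives. There are three classes: the two Merkle-authenticated ORAM trees, the rollback record $R$, and the AEAD-sealed checkpoint $C$. \emph{Tampered} means any value differing from what the honest enclave last wrote for that logical client at the relevant counter, including a replay of an older honestly written value.

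First, the ORAM trees. During an ordinary request the enclave holds the current roots $\mathsf{root}_{\ANN}, \mathsf{root}_{\mathsf{Data}}$ in protected memory. It verifies every fetched bucket against these roots, so the imported Compass result \cite[Thm.~1; Supp.~Lemma~5]{compass} applies directly. That result reduces to collision resistance of the hash and bounds acceptance of a modified or stale bucket by $\mathsf{negl}(\lambda)$. I would note that $\Dream()$ does not weaken this. By Lemma~\ref{lem:dreaming}, it only edits blocks already verified and then written back through the normal authenticated write path.

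Second, recovery, which needs a freshness hybrid. The enclave trusts a root only if it is bound in a record $R$ whose MAC verifies under $k_{\mathsf{mac}}$, which is derived via the KDF from the client secret $k$. I would replace $k_{\mathsf{mac}}, k_{\mathsf{enc}}$ by independent uniform keys; the KDF security bounds the loss. Then I split on whether the adversary's $R^*$ is a tuple the enclave ever MACed. If it was never MACed, accepting it is a SUF-CMA forgery. If it was MACed but is stale, its $\mathsf{ctr}$ is smaller than the current one. The durable monotonic client counter and the enclave's rule of rejecting any request whose counter is not strictly larger than the last accepted value then make it mismatch, so the enclave aborts. The checkpoint $C$ is handled the same way using AEAD integrity (INT-CTXT): a forged $C$ decrypts with negligible probability. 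A replayed older $C$ carries an older $\mathsf{ctr}$, and since the same counter is bound into both $C$ and $R$, any mix of an old checkpoint with a new record, or the reverse, fails the consistency check. Checkpoint-relative freshness up to $\kappa_i$ needs no further argument, since $\kappa_i$ is a public schedule and the game defines correctness relative to it.

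Finally, a union bound over the polynomially many requests and the finite set of primitives gives total probability $\mathsf{negl}(\lambda)$. I expect the main obstacle to be the rollback case where the adversary replays a genuinely MACed, internally consistent pair $(R, C)$ from an earlier step. Ruling this out depends entirely on the counter bookkeeping: on recovery the enclave must learn the expected counter from the client's durable counter rather than from storage, and I would check that the reduction uses exactly this.
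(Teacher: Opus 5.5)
Stripped of the replay cases, your argument is the paper's proof. The paper names the same three authenticated objects: the two ORAM trees, $R$, and $C$. It invokes the Compass integrity result \cite[Thm.~1; Supp.~Lemma~5]{compass} for the trees, reduces any accepted modification of $R$ to a MAC forgery under $k_{\mathsf{mac}}$ and of $C$ to an AEAD forgery under $k_{\mathsf{enc}}$, and closes with a union bound over polynomially many operations. It stops there because it reads ``tampered'' as acceptance of a value the honest enclave never produced. Replays of honestly written $R$ or $C$, and stale ORAM-backed storage, are deferred to Lemma~\ref{lem:freshness}. Your explicit KDF hybrid, which replaces $k_{\mathsf{mac}}, k_{\mathsf{enc}}$ by independent uniform keys, adds rigor; the paper leaves that step implicit in ``the theorem hypotheses.''

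Widening ``tampered'' to include replays pulls the freshness argument into this lemma, and there your step does not go through as written. After eviction or a crash the enclave holds no resident reference counter, so the ``last accepted value'' in the strictly-larger rule can only be read from the recovered $R$ and $C$. A replayed, internally consistent, honestly MACed and sealed pair therefore lowers that reference rather than contradicting it, and the next fresh client request passes the check. Your inference ``its $\mathsf{ctr}$ is smaller than the current one, so the enclave aborts'' assumes a reference value the enclave does not have at recovery.

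You flag this yourself, but your suggested remedy, checking against the client's durable counter, would require an exact-match test. That test would also reject the honest crash recovery to the checkpoint at $\kappa_i$, which the model explicitly permits. The tension between catching stale pairs and allowing $\kappa_i$-relative rollback is what the paper sidesteps by splitting integrity and freshness into two lemmas. For this statement, restrict ``tampered'' to values never output by the honest enclave, and your forgery reductions already suffice.
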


\begin{proof}
As in \S3.4 of the main paper, let $R$ denote the MAC-protected
rollback record authenticating $\mathsf{ctr}$ and both ORAM roots, and let $C$
denote the AEAD-sealed client checkpoint.

Persisted state consists of three authenticated objects: the two ORAM trees, the
rollback record~$R$, and the sealed client checkpoint~$C$.

For $\ORAM_{\ANN}$ and $\ORAM_{\mathsf{Data}}$, the Compass
ORAM integrity result
(\citealp[Thm.~1; Supp.~Lemma~5]{compass})
detects tampering except with negligible probability.
For~$R$, any accepted modification requires forging the MAC on its
canonically encoded tuple under~$k_{\mathsf{mac}}$. For~$C$, any accepted
modification requires breaking the authenticity of the AEAD
under~$k_{\mathsf{enc}}$.

Therefore any accepted tampering of persisted state implies either a failure of
the imported Compass batched-access ORAM integrity result on one of the ORAM
trees, a MAC forgery on~$R$, or an AEAD forgery on~$C$. By the theorem
hypotheses and a union bound over polynomially many operations, this occurs
with probability at most $\mathsf{negl}(\lambda)$.
\end{proof}

\begin{lemma}[Checkpoint-relative freshness]
\label{lem:freshness}
Assuming the theorem hypotheses, the probability that the adversary causes the
challenger to accept a replayed client request, stale ORAM-backed storage for
the current request, or stale recovered state beyond the permitted checkpoint
boundary is $\mathsf{negl}(\lambda)$.
\end{lemma}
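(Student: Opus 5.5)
I will split the bad event into three disjoint sub-events and bound each by a reduction to one hypothesis of Theorem~\ref{thm:opal-security}:
\begin{enumerate*}[label=(\roman*)]
\item the challenger accepts a replayed client request;
\item it accepts ORAM-backed storage for request~$i$ that is stale, i.e.\ authentic but matching an earlier state;
\item after recovery it accepts client state older than the checkpoint at~$\kappa_i$.
\end{enumerate*}
Lemma~\ref{lem:integrity} already rules out accepting \emph{modified} objects except with negligible probability. So in each case I will condition on that event not happening. Then every accepted object ($R$, $C$, or an ORAM bucket or root) is one the enclave itself produced at some earlier step, and the only remaining attack is substitution of an older genuine object.

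For (i), the enclave stores the last accepted counter in its $\mathcal{G}_{\mathsf{att}}$-protected state and rejects any $\mathsf{ctr}$ that is not strictly larger. Across crashes this value is recovered from $C$, which binds $\mathsf{ctr}$ under AEAD. The client counter is durable and monotonic by hypothesis, so a replayed request carries a counter no larger than one already accepted and is rejected deterministically. The only way around this is to replace $C$ with an older checkpoint, which is case (iii).

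For (ii), the enclave accepts storage for request~$i$ only if the recomputed Merkle roots $(\mathsf{root}_{\ANN},\mathsf{root}_{\mathsf{Data}})$ match the record $R$ under the current $\mathsf{ctr}$. A stale tree has different roots from the current genuine ones unless there is a hash collision, which I reduce to collision resistance by constructing the collision-finder from the two tree versions. Alternatively, the adversary must present an old $R'$ for those roots under a counter it was never issued with. That is a MAC forgery under $k_{\mathsf{mac}}$, and I reduce it to SUF-CMA by having the reduction simulate the enclave with a MAC oracle. The KDF hypothesis lets me treat $k_{\mathsf{mac}}$ and $k_{\mathsf{enc}}$ as independent uniform keys via a preliminary hybrid.

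Case (iii) is the main obstacle, because the guarantee is only relative to the public boundary~$\kappa_i$. Genuine old checkpoints legitimately exist on disk, and the adversary can offer one. My plan is to use the binding of $\mathsf{ctr}$ inside both $C$ and $R$. On recovery the enclave checks that the counter in $C$ equals the counter in the latest $R$, and that this counter is consistent with the client's durable counter presented on the next request. An older checkpoint $C_j$ with $j<\kappa_i$ carries $\mathsf{ctr}_j$. Pairing it with a matching $R_j$ yields a state whose counter lags the client's durable counter by more than the gap permitted at $\kappa_i$, so it is rejected. Pairing it with a newer $R$ fails the counter-equality check unless AEAD or MAC is forged.

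The delicate step is making ``permitted gap'' precise. I would define $\kappa_i$ as public and fixed across worlds, as the game states. Then I would argue that the enclave accepts exactly the counter recorded at the most recent checkpoint the provider schedule dictates, so that anything earlier is a deviation detectable via the client counter. Finally, a union bound over the polynomially many requests and recoveries combines the negligible terms: the collision term, the SUF-CMA term, and the AEAD term.
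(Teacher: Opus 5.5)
Your cases (i) and (ii) follow the paper's proof. Replays are handled by the strictly-increasing counter check. Stale ORAM storage is handled by the MAC in $R$ over the current $\mathsf{ctr}$ and both roots, together with Merkle integrity; your collision-resistance reduction just unfolds the Compass integrity lemma the paper imports, and the KDF hybrid is a harmless extra.

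\textbf{The gap is in (iii), and the check your plan relies on cannot be evaluated by the enclave.} You correctly note that genuine older objects exist. The adversary can present $(C_j,R_j)$ with the ORAM trees as of step $j<\kappa_i$, and every authenticity check passes: matching counters, valid MAC and AEAD tag, consistent roots. You reject this with a lag test against ``the gap permitted at $\kappa_i$,'' but a recovering enclave has no way to run it:
\begin{itemize}
\item The only client-counter check in the protocol is that an incoming $\mathsf{ctr}$ exceed the last accepted value. A recovered $\mathsf{ctr}_j$ passes it at least as easily as $\mathsf{ctr}_{\kappa_i}$ does.
\item $\kappa_i$ is a provider-side residency schedule. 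It is public to the adversary and fixed across worlds in the game, but it is not an input to the enclave.
\item Neither $R$ nor $C$ records \emph{which} checkpoint is the latest, so nothing authenticated tells the enclave what $\kappa_i$ is.
\end{itemize}
Your equality test between the counter in $C$ and the counter in ``the latest $R$'' does not help either. If $R$ is refreshed after every request, the legitimate latest $R$ carries $\mathsf{ctr}_i\neq\mathsf{ctr}_{\kappa_i}$, so the test rejects honest recoveries. If $R$ is the one sealed at checkpoint time, then $(C_j,R_j)$ satisfies it just as well.

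The paper does not supply the missing mechanism either. Its argument for this case appeals to Lemma~\ref{lem:integrity} and the shared counter, then asserts that the only surviving rollback is to $\kappa_i$. To actually prove (iii) you need an authenticated chain from a fresh trusted anchor to the specific latest checkpoint, which requires a protocol change. One option:
\begin{itemize}
\item bind the current checkpoint's counter, or a hash of $C$, into every MAC'd $R$;
\item on recovery, require that $R$'s counter equal the client's last durable counter;
\item then require that $C$ match the checkpoint reference inside that $R$.
\end{itemize}
A trusted monotonic counter incremented at each checkpoint, in the style of ROTE or Nimble, would also work.

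Separately, your claim in (i) that replays are rejected deterministically overreaches. After a \emph{legitimate} recovery to $\kappa_i$, the last accepted value is $\mathsf{ctr}_{\kappa_i}$, so replays of requests from $(\kappa_i,i]$ pass the counter check. This is not an older-checkpoint attack, so deferring it to (iii) does not cover it. The counter alone yields only what the paper claims, namely rejection of non-increasing counters. Anything beyond that must come from the channel or be folded into the permitted rollback.
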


\begin{proof}
Let $\kappa_i \le i$ denote the latest request index whose client state has
been checkpointed for the logical client by step~$i$ under the provider's
residency and checkpoint policy; in the security game, $\kappa_i$ is an
exogenous public schedule, independent of user communication patterns and fixed
across the two worlds. Because client state is checkpointed only when
$\kappa_i$ advances, recovery after request~$i$ may revert to the latest
authenticated checkpoint at~$\kappa_i$ by design.

In normal execution, the enclave accepts a client request only if its
counter~$\mathsf{ctr}$ is strictly larger than the last accepted value.
Therefore any replayed client request with a non-increasing counter is rejected.

For the ORAM-backed storage path, the MAC in~$R$ authenticates the current
counter~$\mathsf{ctr}$ together with the expected ORAM roots. Therefore any
accepted stale storage for the current request implies either acceptance of a
stale counter, a failure of the imported Compass batched-access ORAM integrity
result on one of the ORAM trees, or a MAC forgery on~$R$. The first case is
impossible by the monotonic counter check, and the latter two occur with
probability at most
$\mathsf{negl}(\lambda)$.

For recovery after request~$i$, any accepted recovered bundle must verify $R$
and $C$ and use ORAM roots accepted by Lemma~\ref{lem:integrity}. Therefore the
only permitted rollback is to the latest authenticated checkpoint
at~$\kappa_i$. Outside that boundary, stale or unauthenticated recovery
succeeds only with $\mathsf{negl}(\lambda)$.
\end{proof}

\begin{proof}[Proof of Condition~2]
In the $\mathcal{G}_{\mathsf{att}}$-hybrid model, any accepted incorrect
execution must come from one of three sources: accepted tampering of persisted
state, acceptance of a replayed client request or stale ORAM-backed storage for
the current request, or recovery beyond the permitted checkpoint boundary
at~$\kappa_i$. Lemma~\ref{lem:integrity} rules out the first except with
$\mathsf{negl}(\lambda)$, and Lemma~\ref{lem:freshness} rules out the second and
third except with $\mathsf{negl}(\lambda)$, subject only to the intended
rollback boundary at~$\kappa_i$ for sealed client state. Therefore
Condition~2 holds except with $\mathsf{negl}(\lambda)$.
\end{proof}

\section{Synthetic Data Pipeline Details}
\label{sec:supp-synthetic}

\subsection{Hawkes Process Parameterization}
\label{subsec:supp-hawkes}

The conditional intensity of modality~$m$ at time~$t$ is:
\begin{equation}
\begin{split}
    \lambda_m(t \mid \mathcal{H}_t)
    &=
    \mathbf{1}\!\bigl[\gamma_m^{s(t)} > 0\bigr]
    \biggl(
        \gamma_m^{s(t)}\,\mu_m \\
    &\qquad+
        \sum_{m' \in \mathcal{M}}
        \sum_{\substack{t_i < t \\ m_i = m'}}
        \phi_{m' \to m}(t - t_i)
    \biggr),
\end{split}
    \label{eq:hawkes}
\end{equation}
where $\gamma_m^{s(t)}$ is the life-state multiplier for modality~$m$ in
state~$s(t)$ (calibrated so that the schedule-averaged modulation is approximately neutral), $\mu_m$
is the baseline rate, and
\[
\phi_{m' \to m}(u)
=
\alpha_{m' \to m}\,\beta_m\,e^{-\beta_m u}\,\mathbf{1}[u > 0]
\]
is the excitation kernel with branching ratio $\alpha_{m' \to m}$ and
destination-specific decay rate $\beta_m$. The decay rate is chosen to be
destination-specific rather than source-specific. Offspring
arrive at the natural pace of the destination modality 
regardless of what triggered them (e.g., a meeting-triggered 
email arrives at email pace), better reflecting how quickly 
each modality is acted upon in practice. The exponential kernel is chosen over other alternatives as it 
admits an $O(1)$ recursive update and produces interpretable 
branching ratios, though it captures only a single response 
timescale per modality.

The $6 \times 6$ branching matrix is sparse by design: document, query, and ambient
modalities act as pure sinks ($\alpha_{d \to m} = \alpha_{q \to m} = \alpha_{\mathit{amb} \to m} = 0$ for all $m$). Self-excitation parameters are empirically grounded: $\alpha_{e \to e} =
0.40$ is derived from the Enron corpus reply rate~\cite{Fox02042016}, adjusted
downward for the representativeness gap between intra-enterprise email and a
full mixed inbox, and further constrained by the endogenous budget reserved for meeting- 
and message-triggered email. $\alpha_{\mathit{msg} \to \mathit{msg}} = 0.60$ is set below
the value of $0.70$ derived from observed messaging thread
lengths~\cite{groupchat}, reserving endogenous budget for cross-modal paths.

\subsection{Recency-Weighted Question Sampling}
\label{subsec:recency-sup}
For question sampling, artifact selection is weighted by a 
two-phase recency boost $w(t) = C_1 e^{-\nu t} + 
C_2\,(t{+}1)^{-\delta}$, combining a fast exponential 
component for short-term recency and a slower power-law 
tail for long-term access. The functional form is 
motivated by empirical observations: personal 
information access distributions show a steep recency concentration (6.6\% same-day, 21.9\%
within a week, 45.9\% within a month)
that a single power law cannot reproduce~\cite{stuffiveseen}, and collective memory of significant events follows the same two-phase decay, transitioning from exponential to power-law decay around 10 days after an event \cite{igarashi2022}. Parameters are fitted to closely match the empirical bucket 
distribution~\cite{stuffiveseen}: $C_1 = 1.45$, $\nu = 0.20$, 
$C_2 = 1.0$, $\delta = 0.68$.

\clearpage
\begin{table*}[!ht]
\centering
\fontsize{7.4}{8.5}\selectfont
\setlength{\tabcolsep}{2pt}
\renewcommand{\arraystretch}{1.24}
\begin{tabular}{@{}p{0.50\textwidth} p{0.45\textwidth}@{}}
\toprule
\textbf{Research Finding} & \textbf{How Used} \\
\midrule

\multicolumn{2}{@{}l@{}}{\textit{Email}} \\[2pt]

\raggedright Mean reply rate of 45.6\% across Enron email
corpus~\cite{Fox02042016} & Email self-excitation set to 0.40, adjusted down
from 0.45 for inbox noise absent in Enron and cross-modal budget reserved for
meeting- and message-triggered emails \\

\raggedright Reply time distribution: median 47 min, mean 1157
min~\cite{ageofemail} & Email triggering influence decays with 47-minute
half-life \\

\raggedright $\sim$50\% of corporate emails contain no actionable
task~\cite{emailintent} & 60\% of emails treated as noise \\

\raggedright 8.4\% of institutional emails are organizational bulk
emails~\cite{bulkemails} & 15\% of noise emails are organizational bulk emails
such as newsletters \\

\raggedright $\sim$15\% of Enron emails classified as
gossip~\cite{gossipworkplace} & 25\% of noise emails are gossip \\

\raggedright Enron corpus includes greeting/social email
category~\cite{emailintent} & 15\% of noise emails are short messages and
greetings \\

\midrule
\multicolumn{2}{@{}l@{}}{\textit{Documents}} \\[2pt]

\raggedright 68\% of enterprise data is never actively used after
creation~\cite{seagate2020} & 70\% of documents treated as noise \\

\raggedright 12-27 important documents per day for operational
roles~\cite{celiveo2026} & 12 signal documents targeted per day \\

\midrule
\multicolumn{2}{@{}l@{}}{\textit{Queries}} \\[2pt]

\raggedright Enterprise intranet: 90K-120K queries from 18K-24K users on a
typical day~\cite{enterprisesearch} &  $\sim$4 queries per person
per workday used as the daily query target \\

\midrule
\multicolumn{2}{@{}l@{}}{\textit{Messages}} \\[2pt]

\raggedright Mean 41.5 personal messages per day (2011)~\cite{pewmessage} &
$\sim$85 personal messages per day after upward adjustment for smartphone era \\

\raggedright Average worker receives 153 Teams messages per
workday~\cite{microsoft2025infiniteworkday} & Used as the baseline for workplace
received messages per day \\

\raggedright Average worker sends 92 Slack messages per day~\cite{lee2025slack}
& Used as the baseline for workplace sent messages; combined with received
messages from Teams data brings total workplace volume to 245/day \\

\raggedright Project channels average 16.4 words per message; IT support 23.0
words~\cite{groupchat} & 20 words targeted per message in generation \\

\raggedright Project threads: $S{=}2.0$; IT support threads: $S{=}4.1$ messages
per thread~\cite{groupchat} & Message self-excitation branching ratio derived
from observed thread sizes, adjusted down to reserve budget for meeting- and
email-triggered messages \\

\midrule
\multicolumn{2}{@{}l@{}}{\textit{Meetings}} \\[2pt]

\raggedright 17.1 meetings/week; 27.3\% unplanned and
spontaneous~\cite{reclaim2024} & 3.4 meetings per day; $\sim$30\% are endogenously
triggered by prior events \\

\raggedright 31\% of meetings involve email multitasking; 24\% involve file
activity~\cite{multitasking} & Meetings trigger follow-up emails and documents
\\

\raggedright 69.7\% of remote meeting participants use parallel
chat~\cite{parallelchat} & Meetings trigger parallel messaging activity; used to
calibrate meeting-to-message cross-excitation \\

\raggedright Speaking rate 120-200 words per minute~\cite{raynerwpm} & Meeting
transcripts generated at 100 words per minute \\

\midrule
\multicolumn{2}{@{}l@{}}{\textit{Ambient conversation}} \\[2pt]

\raggedright 7.4 social interactions of 10+ minutes per
day~\cite{Wheeler1977SexDI} & 8 ambient conversations per day \\

\midrule
\multicolumn{2}{@{}l@{}}{\textit{Life-state multipliers}} \\[2pt]

\raggedright ChatGPT weekend usage $\sim$2/3 of weekday
levels~\cite{genai2024worldbank} & Search query rate drops to 2/3 on weekends \\

\raggedright 50+ Teams messages sent or received outside work
hours~\cite{microsoft2025infiniteworkday} & Message rate kept near baseline in
the \textsc{FREE\_EVENING} life-state to reflect continued work messaging
alongside personal messaging activity \\

\raggedright No sharp peak in per-hour query distribution~\cite{ShareChat} &
Query intensity is kept roughly uniform across active states \\

\raggedright PowerPoint edits spike 122\% before meetings; peak productivity
9-11 am and 1-3 pm~\cite{microsoft2025infiniteworkday} & Motivates elevated
document and meeting multipliers in the \textsc{FOCUS\_TIME} life-state cycle
block covering 9-11 am and 1-3 pm \\

\raggedright $\sim$60\% of commuters use cell phones for reading or
writing~\cite{drivetext} & Email, search, and message rates at 60\% of baseline
during commuting \\

\midrule
\multicolumn{2}{@{}l@{}}{\textit{Social graph}} \\[2pt]

\raggedright Humans form social groups at preferred sizes of $\sim$5, 15, 50,
150~\cite{dunbarnumber} & Social graph built with 5-person close circle and
15-person regular contact layer \\

\raggedright 58\% of social energy allocated to closest 5
relationships~\cite{plosone} & Contact frequency in sampling weighted toward
inner social circle \\

\raggedright Layered friendship structure confirmed in mobile phone data with
consistent scaling ratio between layers~\cite{dunbarnumber} & Validates the
geometric scaling between social graph layers used in contact sampling \\

\midrule
\multicolumn{2}{@{}l@{}}{\textit{Temporal boost for question sampling}} \\[2pt]

\raggedright Probability that a memory will be needed decays as a power function
of recency, consistent across environmental corpora~\cite{anderson1991} &
Establishes power-law decay as the theoretically grounded functional form for
recency weighting in query sampling \\

\raggedright Personal desktop information search: 6.6\% same-day, 21.9\% within
a week, 45.9\% within a month~\cite{stuffiveseen} & Bucket distribution and
power-law tail exponent fitted directly from empirical access frequency data \\

\raggedright Office document accesses: 67.4\% same-day, 20.1\% one day old,
8.6\% three days old~\cite{xu2020} & Confirms same-day concentration; validates
heavy weighting of recent data \\

\raggedright Collective memory decays in two phases: fast exponential drop in
the first $\sim$10 days, then a power-law tail~\cite{igarashi2022} & Motivates
two-regime recency curve: a fast exponential short-term component plus a slower
power-law tail \\

\bottomrule
\end{tabular}
\vspace{6pt}
\caption{Empirical sources used to calibrate the synthetic data generator and how each source informs the workload design.}
\label{tab:calibration}
\end{table*}

\fi

\end{document}